%% file: main.tex
\documentclass[citeautoscript,aps,prl,reprint,superscriptaddress,longbibliographyaps,nofootinbib]{revtex4-2}

\usepackage{
  amsmath,
  amssymb,
  amsthm,
  bbm,
  bm,
  booktabs,
  braket,
  calc,
  dsfont,
  enumitem,
  float,
  graphicx,
  lipsum,
  mathrsfs,
  mathtools,
  tikz,
  upgreek,
  url,
  xparse
}
\graphicspath{{figures/}}

\usepackage[bookmarks=true,colorlinks=true,linkcolor=orange,citecolor=orange,urlcolor=orange,bookmarks]{hyperref}
\usepackage{xcolor}

\usepackage{iftex}
\ifPDFTeX
  \usepackage[utf8]{inputenc}
\fi

\usepackage[capitalise]{cleveref}

\usepackage[most]{tcolorbox}

\newcommand{\id}{\mathbbm{1}}
\newcommand{\tr}{\operatorname{Tr}}
\newcommand{\rrangle}{\rangle \! \rangle}
\newcommand{\llangle}{\langle \! \langle} 
\newcommand{\kket}[1]{|#1 \rrangle}
\newcommand{\bbra}[1]{\llangle #1 |}

\newcommand{\kketbra}[2]{|#1 \rrangle \! \llangle #2|}
\newcommand{\Span}{\operatorname{span}}

\newcommand{\PT}{{\Upsilon_{k:0}}}

\newcommand{\cPTchoi}{\hat{{\Upsilon}}_{T}}

\newcommand{\T}{\operatorname{T}}
\newcommand{\e}{\operatorname{e}}
\newcommand{\dd}{\operatorname{d}\!}
\newcommand{\Tr}{\operatorname{Tr}}

\newcommand{\cmps}[3]{\Ket{\Phi[#1,#2,#3]}}

\newcommand{\cPT}{|\Upsilon_T \rangle}

\newcommand{\tint}{\int_0^T \mathrm{D}^{n}t}

\definecolor{jonas}{rgb}{0.9,0.17,0.31}

\newcommand{\je}[1]{{\color{cyan} #1}}

\newcount\editmode
\ifcase\editmode

    \newcommand{\add}[1]{{\color{blue}#1}}
    \newcommand{\rem}[1]{{\color{red}[#1]}}

    \newcommand{\grem}[1]{{\color{red}(Greg): [#1]}}

    \newcommand{\jrem}[1]{{\color{red}(Jon{\'a}{\v s}): [#1]}}

    \newcommand{\crem}[1]{{\color{red}(Clara): [#1]}}

\or

    \newcommand{\add}[1]{{\color{blue}#1}}
    \newcommand{\rem}[1]{}

    \newcommand{\grem}[1]{}

    \newcommand{\jrem}[1]{}

    \newcommand{\crem}[1]{}

\or

    \newcommand{\add}[1]{#1}
    \newcommand{\rem}[1]{}

    \newcommand{\grem}[1]{}

    \newcommand{\jrem}[1]{}

    \newcommand{\crem}[1]{}

\fi

\newtheorem{theorem}{Theorem}
\newtheorem{corollary}{Corollary}
\newtheorem{lemma}{Lemma}
\newtheorem{definition}{Definition}

\newtheorem{proposition}{Proposition}
\newcounter{savedcorno}

\makeatletter
\renewcommand\onecolumngrid{
\do@columngrid{one}{\@ne}
\def\set@footnotewidth{\onecolumngrid}
\def\footnoterule{\kern-6pt\hrule width 1.5in\kern6pt}
}
\renewcommand\twocolumngrid{
        \def\footnoterule{
        \dimen@\skip\footins\divide\dimen@\thr@@
        \kern-\dimen@\hrule width.5in\kern\dimen@}
        \do@columngrid{mlt}{\tw@}
}
\makeatother

\makeatletter
\newif\ifeqcontrib@this
\newif\ifeqcontrib@any
\eqcontrib@thisfalse
\eqcontrib@anyfalse

\newcommand{\eqcontrib}{\global\eqcontrib@thistrue\global\eqcontrib@anytrue}

\newcommand{\eqcontribmark}{\textsuperscript{\ensuremath{,\#}}}

\newcommand{\eqcontrib@maybe}{%
  \ifeqcontrib@this
    \global\eqcontrib@thisfalse
    \eqcontribmark
  \fi
}

\newcommand{\printEqContrib}{%
  \ifeqcontrib@any
    \begingroup
      \renewcommand{\thefootnote}{\ensuremath{\#}}%
      \footnotetext{These authors contributed equally to this work.}%
    \endgroup
  \fi
}

\def\doauthor#1#2#3{%
  \ignorespaces#1\unskip\@listcomma
  \begingroup
    #3%
  \@if@empty{#2}{\endgroup{}{}}{\endgroup{\comma@space}{}\frontmatter@footnote{#2}}%
  \eqcontrib@maybe
  \space \@listand
}%
\makeatother

\begin{document}
\title{Dilation theorem for continuum quantum stochastic processes}

\author{Jon{\'a}{\v s} Fuksa\eqcontrib}
\email{jonas.fuksa@fu-berlin.de}
\affiliation{Dahlem Center for Complex Quantum Systems, Freie Universit\"at Berlin, 14195 Berlin, Germany}

\author{Clara Wassner\eqcontrib}
\email{c.wassner@fu-berlin.de}
\affiliation{Dahlem Center for Complex Quantum Systems, Freie Universit\"at Berlin, 14195 Berlin, Germany}

\author{Jens Eisert}
\email{jense@zedat.fu-berlin.de}
\affiliation{Dahlem Center for Complex Quantum Systems, Freie Universit\"at Berlin, 14195 Berlin, Germany}
\affiliation{Helmholtz-Zentrum Berlin f{\"u}r Materialien und Energie, Berlin, Germany}

\author{Gregory A. L. White}
\email{gregory.white@fu-berlin.de}
\affiliation{Dahlem Center for Complex Quantum Systems, Freie Universit\"at Berlin, 14195 Berlin, Germany}

\date{\today}

\begin{abstract}
    Connecting mathematical formalism in open quantum systems to its underlying physics necessitates the notion of a dilation, a way to bridge stochastic dynamics with deterministic Schrödinger evolution on a larger space. Although dilations are well known in the literature for states, channels, and quantum combs, there is a substantial gap when it comes to the fully general setting of non-Markovian dynamics on a continuous interval. In an accompanying work, we introduce a continuous process tensor (cPT) framework wherein these processes are represented by vectors in multi-species bosonic Fock spaces of $L^2$ functions. This accommodates all multi-time statistics of a quantum system on an interval. Here, we present a corresponding dilation theorem and show that all cPTs can be approximated arbitrarily well by an explicit closed description with system and environment coupled by a bounded Hamiltonian. As well as its physical appeal and demonstrating completeness of the framework, we show that such regular representations actually underpin the formalism, defining a dense subspace of all cPTs that is easy to handle. This leads to a continuum-suitable version of Choi duality, and permits the rigorous treatment of continuum controls.
\end{abstract}

\maketitle
\printEqContrib

Open quantum systems represent one of the biggest frontiers within modern quantum mechanics. Noise induced by interactions between a \emph{system} (S) and its \emph{environment} (E) can be complex~\cite{aloisio-complexity, dowlingCapturingLongRangeMemory2024}, and warrants comprehensive understanding before it can be suppressed or corrected~\cite{addisDynamicalDecouplingEfficiency2015,berkExtractingQuantumDynamical2023,pollock2018NonMarkovianQuantumProcesses,gribbenUsingEnvironmentUnderstand2022,rivasQuantumNonMarkovianityCharacterization2014,bylickaNonMarkovianityReservoirMemory2014}. Recent years have seen many advances in understanding temporally correlated, or \emph{non-Markovian}, dynamics, particularly under the operational umbrella of process tensors~\cite{pollock2018NonMarkovianQuantumProcesses,whiteDemonstrationNonMarkovianProcess2020,keelingProcessTensorApproaches2026,whiteNonMarkovianQuantumProcess2022,milzQuantumStochasticProcesses2021}.
Nevertheless, this framework is valid only for discretised dynamics, a setting which neither reflects the underlying laws of physics, nor can be fully consistent with experiment.
In an accompanying paper~\cite{wassnerOperationalContinuumLimit2026}, we remedy this situation by introducing a comprehensive framework of \emph{continuous process tensors} (cPTs). 
Correspondingly, they enable an information-theoretic treatment of non-Markovian dynamics in the continuum, useful for understanding, simulating, and characterising such dynamics.
Operationally, we define the set of physical cPTs purely in reference to all in-principle experimentally measurable quantities. But the broader structure of the theory and connection to underlying physics require elucidation, and are the subject of the present work.

Discrete process tensors admit a dilation in the form of Stinespring isometries~\cite{pollock2018NonMarkovianQuantumProcesses, whiteUnifyingNonMarkovianCharacterization2025}. Every finite process tensor on the cPT interval can be hence individually understood as a collection of unitary operations on SE followed by discarding E.
A \emph{stronger} dilation, is a cPT form we term the \emph{process-canonical} representation (PCR), and encodes an explicit time-dependent SE Hamiltonian within the continuous matrix product state (cMPS) ansatz as a simultaneous dilation.
A natural question is then, to what extent can cPTs be understood within the guise of PCRs? In this work, we answer this question by proving a PCR density theorem (\cref{thrm:pcf existence}). Specifically, we prove the statement: given any cPT $\ket{\Phi}$ and approximation parameter $\varepsilon$, one can always find a PCR representation $\ket{\Upsilon}$ such that $\ket{\Phi}$ and $\ket{\Upsilon}$ are within distance $\varepsilon$ of each other in the Fock space norm (see \cref{fig:main}).
\par 

Apart from simplifying the representation and connecting to underlying physics, this theorem is a key element to further results placing the framework on more mathematically rigorous grounds.
Firstly, it enables the rigorous study of continuous instruments, which represent experiments performed on the system.
For example, some experiments access the process at a particular instant, which is only valid under regularity assumptions on the cPT.
Hence, these instruments are strictly speaking not part of the theory.
The PCR density theorem enables a Gelfand triple construction that allows us to treat these instruments in a rigorous way, delineating the conditions under which they are valid.
Secondly, it is the key ingredient allowing the generalisation of the Choi-Jamio{\l}kowski isomorphism~\cite{choi1975completely, jamiolkowski1972linear,frembsVariationsChoiJamiolkowski2024} to the continuum.

\begin{figure*}[t!]
    \centering\includegraphics[width=0.95\linewidth]{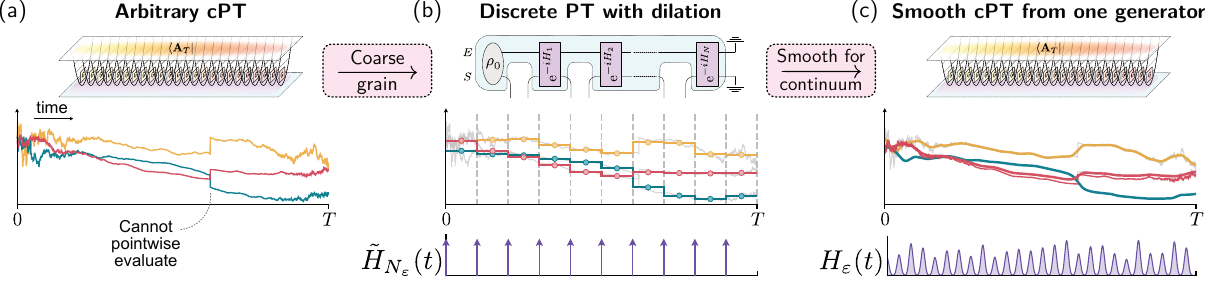}
    \caption{
    Schematic of our main result~(\cref{thrm:pcf existence}). (a) The class $\mathsf{cPT}$ contains processes with discontinuous wavefunctions; these cannot be exactly represented by any single dilation. (b) Nevertheless, we show that these can be arbitrarily approximated by first going to a discrete picture with a single dilation, and then (c) defining from that an appropriately smoothed cPT. 
    }
    \label{fig:main}
\end{figure*}

Dilation theorems of this kind have a lineage in the literature: Stinespring for channels~\cite{kretschmannContinuityTheoremStinesprings2008,nielsenQuantumComputationQuantum2010}, its multi-time analogue for process tensors~\cite{pollock2018NonMarkovianQuantumProcesses,milzQuantumStochasticProcesses2021}, and more recently in lifting one-parameter dynamical curves to analytic Stinespring curves~\cite{burgarthControlQuantumNoise2023,endeFiniteDimensionalStinespringCurves2024}. Our work can be seen as a conceptual continuation of these results, wherein a single time-dependent Hamiltonian must (approximately) reproduce an entire family of process tensors at once.

\emph{Continuous process tensors.---} A controlled experiment on a system $\rm S$ with Hilbert space $\mathcal{H}_{\rm S} = \mathbb C^{d_{\rm S}}$ over a time interval $[0,T]$ consists of a sequence of interventions. In the typical idealised setting, the experiment performs a finite number of instantaneous operations $\{\mathcal{A}_j\}_{j=1}^k \equiv \mathbf{A}$ (described by CP maps) at times $\{t_j\}_{j=1}^k$, and the data returned are a snapshot of a discrete process tensor $\PT$, the specifics of which will be made clear shortly.
We work throughout in the Liouville space of $\rm S$ $\cong \mathcal{H}_{\rm S}\otimes \mathcal{H}_{\rm S}^\ast$ and denote its elements by double-bracket notation $\kket{\cdot}.$ We fix an orthonormal basis of superoperators $\{\mathbb{P}_\nu\}_{\nu = 0}^{d_{\rm S}^4-1}$, with $\mathbb{P}_0$ the identity.
Writing $\nu \equiv (\mu, \mu')$ with $\mathbb{P}_{(\mu,\mu')} = P_\mu \otimes P_{\mu'}^*$ for $\{P_\mu\}_{\mu=0}^{d_{\rm S}^2-1}$ the normalised Paulis on $\rm S$, splits each label into a forward index $\mu$ and a backward index $\mu'$; this fully characterises a discrete process tensor since any control can then be expanded in this basis.
\begin{definition}[Discrete process tensor]\label{def:dpt}
    Let $\bm{t} = (t_1 < \dots < t_k) \in [0,T]^k$.
    A \emph{$k$-step process tensor} is the multilinear functional $\Upsilon_{k:0}$ mapping a sequence of CP maps applied to $\rm S$ at the times $\bm{t}$ to the probability of the associated outcome.
    It is fixed by the entries of its \emph{Choi matrix}, $\Upsilon_{\bm{\nu}}(\bm{t}) \coloneqq \Upsilon_{k:0}[\mathbb{P}_{\nu_k}, \dots, \mathbb{P}_{\nu_1}] = \Upsilon_{(\bm{\mu}, \bm{\mu'})}$ with $\bm{\nu} \in \{0, \dots, d_{\rm S}^4-1\}^k$, and is physical if and only if it is
    \begin{enumerate}[label=(\roman*),leftmargin=*,topsep=2pt,itemsep=0pt]
        \item \underline{Positive:} $\{\Upsilon_{(\bm{\mu},\bm{\mu'})}(\bm{t})\}_{\bm{\mu},\bm{\mu'}} \succcurlyeq 0$ as an operator from $\bm{\mu'}$ to $\bm{\mu}$; and
        \item \underline{Causal:} it returns $1$ on the trivial sequence and, for all $\ell \in [k]$, vanishes whenever a trace-annihilating linear map is applied to the $\ell$-th input and the identity is applied to all the inputs after that.
    \end{enumerate}
\end{definition}
\noindent
Causality is more conveniently phrased in the alternative superoperator basis $\kket{P_{\alpha}}\!\bbra{P_{\alpha'}} = \sum_\nu c^\nu_{\alpha, \alpha'} \mathbb{P}_\nu$: the trace-annihilating maps are exactly the linear combinations of these basis elements with $\alpha \neq 0$.
Although this object describes a discrete quantum stochastic process, it fails to adequately describe the physics of the situation, since the strength of the interventions has no relation to the mathematical structure of the theory.

We have extended the framework of discrete process tensors to the continuum in an accompanying paper~\cite{wassnerOperationalContinuumLimit2026} for finite-dimensional systems and briefly describe its elements here; a continuous-variable version has also recently been introduced~\cite{costaContinuousOperationsNonMarkovian2025}. A continuum framework must supply such an object for every $k$ and every choice of times at once. Rather than positing a single dynamics that underlies different $\{\PT(\bm{t})\}_{k,\bm{t}}$ and deriving the resulting family, we ask for a single object that \emph{contains} all of these finite process tensors.
With a fixed basis, this family can be thought of as a collection of coefficient functions $\Upsilon^{(k)}_{\bm{\nu}} : \Delta_k \to \mathbb{C}$, one for each $k$ and each string $\bm{\nu} \in [q]^k$ of non-identity labels, $[q] = \{1, \dots, q\}$ and $q \coloneqq d_{\rm S}^4 - 1$, where $\Delta_k \subset [0,T]^k$ is the time-ordered simplex.
The natural ansatz we arrive at is that the family as a whole be square-summable. This choice can be made lucid by expanding controlled dynamics as a Dyson series (see SM~\cite{supp} and the companion work~\cite{wassnerOperationalContinuumLimit2026}) and places the cPT in the Fock space
\begin{equation}\label{eq:fock}
    \Gamma_T^{(q)} \coloneqq\bigoplus_{k\geq 0}\operatorname{Sym}^k\!\left(L^2([0,T]) \otimes \mathbb{C}^{q}\right).
\end{equation}
We shall denote this as just `$\Gamma$' when the context is clear.
This square-integrability condition is precisely the mathematical structure that relates to the physicality of interventions, a point we return to in~\cref{cor:gelfand-informal}.
The generating functional we are after is therefore a vector $\ket{\Upsilon}$  in a multi-species bosonic Fock space, with one species per non-identity Pauli $\mathbb{P}_\nu$, vacuum $\ket{\Omega}$ representing idle dynamics, and field operators obeying $[\psi_\nu(t), \psi^\dag_{\nu'}(t')] = \delta_{\nu \nu'} \delta(t - t')$.
The relevant topology is the Fock norm
$\|\Upsilon\|_\Gamma^2 = \sum_{k \ge 0} \sum_{\bm{\nu}} \int_0^T {\rm D}^k \bm{t} | \Upsilon^{(k)}_{\bm{\nu}}(\bm{t})|^2,$
with $\int_0^T {\rm D}^k\bm{t}$ the integral over $\Delta_k$; we see later that it is an operationally meaningful metric on processes.

Coefficients are read off $\ket{\Upsilon}$ by the \emph{intervention operators} $\xi_\nu(t) \coloneqq \psi_\nu(t)$ for $\nu \neq 0$, and $\xi_0(t) = \id$.
These implement an instantaneous $\mathbb{P}_\nu$ on $\rm S$ at time $t$, the case $\nu = 0$ being idling.
Contracting $k$ of them returns exactly the data of~\cref{def:dpt},
\begin{equation}\label{eq:marginal}
    \Upsilon_{\bm{\nu}}(\bm{t}) = \bra{\Omega}\, \xi_{\nu_k}(t_k) \cdots \xi_{\nu_1}(t_1) \ket{\Upsilon}.
\end{equation}
But $\bra{\Omega} \prod_i \xi_{\nu_i}(t_i)$ has distribution-valued amplitudes and is not itself a Fock vector, reflecting the fact that instantaneous control requires unbounded energy.
\Cref{eq:marginal} is therefore meaningful only once smeared against a test function. 
Physicality of $\ket{\Upsilon}$ is then physicality of every finite process tensor contained in it. This leads us to the definition of a \emph{continuous} process tensor. 

\begin{definition}[Continuous process tensor]\label{def:cpt}
    A vector $\ket{\Upsilon_T} \in \Gamma$ with $\braket{\Omega | \Upsilon_T} = 1$ is a \emph{continuous process tensor} (cPT) if for every $k$ and every $f \in L^2(\Delta_k),\ f\ge0$ with $\int_0^T {\rm D}^k\bm{t}|f(\bm{t})| = 1$, the smoothed marginals 
    \begin{equation}
        \Upsilon_{\bm{\nu}}^f \coloneqq \int_0^T {\rm D}^k\bm{t} f(\bm{t}) \Upsilon_{\bm{\nu}}(\bm{t}),
    \end{equation}
    indexed by $\bm{\nu}\in(\{0\}\cup [d_{\rm S}^4 -1])^k$ are the components of a discrete process tensor. We denote the set of all cPTs by $\mathsf{cPT}.$
\end{definition}
The \emph{raison d\textquotesingle{}être} of process tensors is to be an operational theory, mapping events to outcomes. In particular, nothing in~\cref{def:cpt} refers to an underlying dynamics; the conditions are imposed on the generated family of system-level quantities. 
What they do supply, on applying the dilation theorem for discrete process tensors~\cite{pollock2018NonMarkovianQuantumProcesses,milzQuantumStochasticProcesses2021} to each marginal \emph{separately}, is that for every $f$ and $k$ there exists a finite-dimensional environment, a pure state $\kket{\psi_0}$ and unitary channels $\{\mathcal{U}_j\}_{j=0}^{k}$ such that
\begin{equation}\label{eq:physicality}
    \Upsilon^f_{\bm{\nu}} = \bbra{\id} \mathcal{U}_k \mathbb{P}_{\nu_k} \mathcal{U}_{k-1} \cdots \mathbb{P}_{\nu_1} \mathcal{U}_0 \kket{\psi_0}.
\end{equation}
Whether a \emph{single} dilation can generate the whole family at once, a property which we term a \emph{strong} dilation, is the main topic of this work. 

Vectors in $\Gamma$ that \emph{are} generated by a single explicit dilation with a bounded generator take a distinguished form, which we term the process-canonical representation (PCR), and which are represented by a cMPS.
Multi-species cMPS~\cite{verstraeteContinuousMatrixProduct2010,
cMPS2,haegemanCalculusContinuousMatrix2013,tjoaContinuousMatrixproductOperators2026} are parametrised by a drift matrix $Q$, jump matrices $\{R_\nu\}$ and a boundary matrix $B$ each acting on an auxiliary bond space of dimension $\chi$.
Writing $K(t) \coloneqq Q(t) \otimes \id + \sum_\nu R_\nu(t) \otimes \psi_\nu^\dag(t)$, this object is defined as
\begin{equation}\label{eq:cmps}
    \ket{\Phi_T[Q, \{R_\nu\}, B]} \coloneqq \tr\left[B\, \mathcal{T} \e^{\int_0^T \dd t\, K(t)}\right] \ket{\Omega}.
\end{equation}
In the case when $\chi = d_{\rm S}^2 d_{\rm E}^2$, the drift matrix is a valid SE Liouvillian generator of dynamics, the jump matrices are $\mathbb{P}_\nu \otimes \id_{\rm E}$ and $B = \kket{\rho_{\rm SE}}\!\bbra{\id_{\rm SE}}$ with $\rho_{\rm SE}$ a quantum state, the cMPSs provide a natural dilation of cPTs: computing the coefficients of~\cref{eq:cmps} and comparing with~\cref{eq:physicality} identifies the bond space with some SE dynamics. Specifically, this leads to the following. 
\begin{definition}[Process-canonical representation]\label{def:pcr}
    A continuous process tensor on a $d_{\rm S}$-dimensional system for the interval $[0,T]$ is in the \emph{process-canonical representation} (PCR) if it can be written as the cMPS
    \begin{equation}\label{eq:pcr}
        \cPT = \ket{\Phi_T[\mathbb{H}_{\rm SE}, \{\mathbb{P}_\nu\}, \kket{\rho_{\rm SE}}\!\bbra{\id_{\rm SE}}]},
    \end{equation}
    where $\mathbb{H}_{\rm SE}(t) = - \mathrm{i} H(t) \otimes \id + \mathrm{i} \id \otimes H(t)^*$ for $H$ an $\rm SE$ operator-valued function on $[0,T]$ that is (i) self-adjoint and (ii) bounded measurable, with the operator norm $\|H\|_\infty\in L^\infty([0,T])$; and $\rho_{\rm SE}$ is pure.
    We denote the set of cPTs admitting such a representation by $\mathsf{PCR}$.
\end{definition}
We identify this representation as a noteworthy subclass of the cPTs, owing to the facts that they are manifestly physical, have finite-dimensional bond spaces, and have coefficients that may be pointwise evaluated.
It is natural to first ask whether all cPTs admit this explicit form. We show this in the negative, i.e., that it is not possible to find an exact dilation for every cPT, as formalised by the following result.

\begin{proposition}[$\mathsf{PCR} \subsetneq \mathsf{cPT}$]\label{prop:pcr strict}
    Every PCR is a continuous process tensor, but there exist continuous process tensors with no exact process-canonical representative.
\end{proposition}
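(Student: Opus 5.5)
Two things must be shown: first that every PCR lies in $\mathsf{cPT}$, and second that some cPT lies outside $\mathsf{PCR}$.

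\emph{PCR $\subseteq$ cPT.} I would expand the time-ordered exponential in \cref{eq:pcr} as a Dyson series in the jump terms $\mathbb{P}_\nu\otimes\psi^\dagger_\nu(t)$. This gives, for $\bm{t}\in\Delta_k$, the coefficients
\begin{equation*}
\Upsilon^{(k)}_{\bm\nu}(\bm t)=\bbra{\id_{\rm SE}}\, \mathcal{U}(T,t_k)(\mathbb{P}_{\nu_k}\otimes\id)\,\mathcal{U}(t_k,t_{k-1})\cdots(\mathbb{P}_{\nu_1}\otimes\id)\,\mathcal{U}(t_1,0)\kket{\rho_{\rm SE}}.
\end{equation*}
Here $\mathcal{U}(t,s)$ is the unitary channel propagator generated by $\mathbb{H}_{\rm SE}$. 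It exists as a Carathéodory solution because $H$ is bounded and measurable, with $\|H\|_\infty\in L^\infty$, and it is a genuine unitary channel.

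\emph{Normalisation and membership in $\Gamma$.} $\braket{\Omega|\Upsilon}=\bbra{\id}\mathcal{U}(T,0)\kket{\rho}=1$. Each coefficient is bounded by a constant $C^k$, where $C$ bounds the norms $\|\mathbb{P}_\nu\|$ in the relevant superoperator norm. Since $\int_0^T{\rm D}^k\bm t=T^k/k!$, the Fock norm is at most $\sum_k (qC^2T)^k/k!<\infty$, so $\ket{\Upsilon}\in\Gamma$.

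\emph{Physicality.} For a smearing $f$, the marginal $\Upsilon^f_{\bm\nu}$ is a convex mixture over $\bm t$ of discrete process tensors. Each of these is manifestly physical: it is positive and causal because it comes from the unitary dilation of \cref{eq:physicality}. The set of discrete process tensors is convex, being cut out by a positivity cone and linear causality constraints. Hence the mixture is physical and the first claim follows.

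\emph{Strictness.} I would exhibit a discontinuous cPT. The key observation from the formula above is that every PCR has coefficients that are \emph{continuous} on $\Delta_k$, and even Lipschitz, because the propagators are Lipschitz in their time arguments with constant $2\|H\|_\infty$. So it suffices to find a cPT whose $k=1$ coefficient is not a.e.\ equal to any continuous function, for instance one with a jump.

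\emph{Candidate.} The natural candidate is a mixture of two PCRs that agree up to time $T/2$ in law but not pointwise. It is cleaner to use a process that "switches on" a unitary kick at an unknown time drawn from a distribution. Still, convex mixtures of PCRs with Lipschitz constants stay continuous, so a jump must come from a limit. I would therefore take a sequence of PCRs with Hamiltonians $H_n(t)=n\,\theta(t-T/2)\,\chi_{[T/2,T/2+\pi/(2n)]}(t)\,G$, which implement a finite rotation $e^{-\mathrm{i}\pi G/2}$ in a shrinking window. The limit cPT corresponds to an instantaneous unitary at $T/2$. Its coefficients jump across $t_j=T/2$, since inserting $\mathbb{P}_\nu$ before or after the kick gives different values. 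This can be chosen as a system-only unitary, so no environment is even needed.

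\emph{Membership of the limit in $\mathsf{cPT}$.} I would verify this directly. The limit object is the cMPS-like expression with a channel $\mathcal{V}$ inserted at $T/2$. Its coefficients are bounded by $C^k$, so it lies in $\Gamma$. Every smeared marginal is again a mixture of unitarily dilated discrete process tensors, so it is physical by the same convexity argument. Its $k=1$ coefficient for a suitably chosen $\nu$ with $\mathbb{P}_\nu$ not commuting with $\mathcal{V}$ has left and right limits at $T/2$ that differ. Thus it is not a.e.\ equal to any continuous function, and so no PCR represents it.

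\emph{Main obstacle.} The hard part will be making the continuity claim airtight. Elements of $\Gamma$ are $L^2$ classes, so I must show that equality in $\Gamma$ with a PCR would force the coefficient to agree a.e.\ with a Lipschitz function. This is immediate on the $k=1$ sector, but it requires carefully bounding $\|\mathcal{U}(t,s)-\id\|\le 2\|H\|_\infty|t-s|$ uniformly. For that I would use the integral form of the propagator equation.
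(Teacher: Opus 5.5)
Your argument is correct. The inclusion half matches the paper, which simply asserts that every smoothed marginal of a PCR is produced by unitary dynamics; your convexity step over $\bm t$ and the explicit $\Gamma$-norm bound spell out what the paper leaves implicit.

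For strictness you use a genuinely different counterexample. The paper flips a qubit by $X$ each time a Brownian path crosses its value at $\tau$, then averages over paths. With $\rho_0=\id/2$ the one-particle sector is constant. The obstruction sits in $\Upsilon^{(2)}_{(3,3)}(t_1,t_2)$, which is continuous except at the single point $(\tau,\tau)$. There the arcsine law and Brownian scale invariance give direction-dependent limits ($\tfrac18$ versus $0$). These are then spread over narrow cones of positive measure to rule out a.e.\ equality with a continuous function.

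Your deterministic kick at $T/2$ is a formal cMPS with one $\delta$-drift, the same kind of object the paper uses as $\ket{\Phi^{\tilde{\mathbb H}}}$ in the proof of Theorem~\ref{thrm:pcf existence}. It already produces a jump across $t=T/2$ in the one-particle sector. Non-representability then reduces to a one-variable fact: a continuous function equal a.e.\ to $a$ on $(0,T/2)$ and to $b$ on $(T/2,T)$ forces $a=b$.

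Your route is more elementary: it needs no environment and no stochastic input, and the cPT property follows directly from convexity. The paper's example gives a stronger illustration. It shows that an irregularity supported at a single point, invisible in the lower sectors and not tied to any fixed-time kick, can emerge from averaging and still rule out a PCR.

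One detail you should pin down: non-commutation of $\mathbb{P}_\nu$ with $\mathcal{V}$ is not sufficient. The jump equals $\bbra{\id}\mathbb{P}_\nu(\mathcal{V}-\id)\kket{\rho}$, and since $\mathcal{V}$ is unital this vanishes for every $\nu$ when $\rho=\id/d_{\rm S}$. Take, for example, $\rho=\ket{0}\!\bra{0}$, $\mathcal{V}$ the $X$-conjugation and $\mathbb{P}_\nu=\tfrac12 Z\otimes\id$. This gives $\tfrac12$ for $t<T/2$ and $-\tfrac12$ for $t>T/2$.
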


\begin{proof}[Proof sketch]
    Inclusion is immediate: \cref{eq:pcr} generates every smoothed marginal by unitary dynamics, so~\cref{eq:physicality} holds.
    For strictness, consider Brownian motion $B(t)$ as a stochastic process.
    Fix $\tau \in (0,T)$ and drive $\rm S$ for a given path $\omega$ with the path-dependent qubit unitary $V_t(\omega) = X^{\sigma_t(\omega) \oplus \sigma_0(\omega)}$, where $\sigma_t(\omega)$ is 1 if $B(t,\omega) > B(\tau,\omega)$ and 0 otherwise. Defining process tensors $\Upsilon^{(\omega)}(t,s)$ with respect to the propagator from $s$ to $t$, and then averaging over paths yields a positive, causal vector.
    Self-similarity of $B$ at $\tau$ makes the limit of $\Upsilon^{(k)}_{\bm{\nu}}(\bm{t})$ as $t_i, t_j \to \tau$ depend on the approach: the cones $\{t_i < t_j < \tau\}$ and $\{t_i < \tau < t_j\}$ give different values, and both have positive measure.
    Hence no continuous function agrees with $\Upsilon^{(k)}_{\bm{\nu}}$ almost everywhere, and $\ket{\Upsilon_T}$ is $L^2$-inequivalent to every PCR. We discuss this more explicitly in the SM~\cite{supp}.
\end{proof}
\noindent
Apart from showing a separation between $\mathsf{PCR}$ and $\mathsf{cPT}$, this example shows that probing the process at the discontinuity -- e.g. performing a positive operator-valued measurement (POVM) at $\tau$ -- is ill defined in general.
We will return to this point at a later stage.

\emph{Continuum controls.---} Process tensors, by themselves, are not meaningful objects without a well-defined notion of a dual, provided by the experimental control operations~\cite{milzQuantumStochasticProcesses2021}. 
Here, we introduce the notion of a process statistic: a positive $L^1$ functional on cPTs.

\begin{definition}[Process statistics and their effects]\label{def:instrument}
    A \emph{process statistic} is a pair $(\mathcal{X},\mathcal I_{\mathcal X})$ of a measure space $\mathcal X = (X,\mathcal F, \mu)$ with outcome set $X$, $\sigma$-algebra $\mathcal F$, $\sigma$-finite reference measure $\mu$ together with a linear map $\mathcal{I}_{\mathcal{X}} :  \Gamma \to L^1(\mathcal{X})$ assigning to each process a probability density $p_{\ket{\Upsilon}}(x) = \braket{\mathcal{I}(x) | \Upsilon}$ such that for all $\ket{\Upsilon} \in \mathsf{cPT}$ it is non-negative $\mu$-a.e. and  normalised  $\int_X \mathrm{d}\mu(x) \braket{\mathcal{I}(x) | \Upsilon} = 1$. 
    For any $E \in \mathcal F$ the outcome functionals $\bra{\mathcal{I}(E)}:\mathsf{cPT} \to [0,1]$, defined by $\bra{\mathcal{I}(E)}\Upsilon\rangle:=\int_{E} \mathrm{d}\mu(x) \bra{\mathcal{I}(x)}\Upsilon\rangle$, are called effects of the continuous statistic.
\end{definition}
\noindent 
The function $x \mapsto \braket{\mathcal{I}(x)|\Upsilon}$ is the Radon--Nikodym derivative (see, e.g., the Appendix of Ref.~\cite{barchielliQuantumTrajectoriesMeasurements2009}) of the probability measure $E \mapsto \braket{\mathcal{I}(E)|\Upsilon}$ governing the outcomes of the statistic with respect to $\mu$.
Note that the normalisation condition implies that $\bra{\mathcal{I}(X)}$ is a \emph{deterministic} continuous statistic, having unit overlap with all cPTs.
An example of such a statistic is coherent control, discussed in more detail in the SM~\cite{supp}. 
We introduce these statistics in order to have a means with which process observables can be discussed, but note that to be physically implementable one also requires a \emph{complete} positivity condition. Physical continuous statistics are called continuous instruments; we introduce and study them in our companion paper~\cite{wassnerOperationalContinuumLimit2026}.
Continuous statistics (and thus also instruments) have a natural induced norm, which we now define. 
\begin{definition}[Instrument norm]\label{def:instrument norm}
    The instrument norm is the operator norm of the map $\mathcal{I}_\mathcal{X}$ induced by $\|\cdot\|_\Gamma$ and $\|\cdot\|_{L^1}$,
    \begin{equation}
        \|\mathcal{I}_\mathcal{X}\|_{\Gamma \to L^1(\mathcal{X})} \coloneqq \sup_{\ket{\Phi} \in \Gamma~:~\|\ket{\Phi} \|_\Gamma \leq 1} \|\mathcal{I}_{\mathcal{X}}(\ket{\Phi})\|_{L^1}.
    \end{equation}
\end{definition}
\noindent 
The connection to the operational resources needed to implement the instrument is particularly interesting in the context of \textit{distinguishability} of processes. 
Indeed, for $\ket{\Upsilon_1}, \ket{\Upsilon_2} \in \mathsf{cPT}$ with $\Delta \coloneqq \ket{\Upsilon_1} - \ket{\Upsilon_2}$, twice the total variation distance of their outcome distributions with respect to a fixed statistic $(\mathcal X,\mathcal{I}_{\mathcal X})$ obeys
\begin{equation}
\begin{split}\label{eq:distinguishability}
   \|\mathcal I_{\mathcal X}(\Delta)\|_1\!=\!\int_{\mathcal X}\dd\mu( x)\,|\langle \mathcal I( x)|\Delta\rangle|\leq\|\mathcal I_{\mathcal X}\|_{\Gamma\!\to\!L^1}\|\Delta\|_\Gamma.
\end{split}
\end{equation}
Thus, the $\Gamma$-norm distance governs how distinguishable two processes are by bounded statistics, and provides operational context to the approximation error ultimately required in our main theorem. In the SM~\cite{supp} we bound the norms of various example controls from our companion paper~\cite{wassnerOperationalContinuumLimit2026} in terms of the Frobenius norm of the relevant generator or jump matrix characterising the statistic. We thus make the observation -- to be a subject of future work -- that there is a direct connection between instrument norms and the energetic cost of their implementation.
Notably, we find that the diffusive measurement, which requires an infinite-energy local oscillator, has a diverging instrument norm.

\emph{Dilations for cPTs.---} With our framework in place, we may now assemble the pieces to connect these continuum non-Markovian processes back to the underlying physics.
The following theorem is the main result of the paper. It certifies that the conditions of~\cref{def:cpt} are \emph{complete}: any vector satisfying them is generated, to arbitrary accuracy, by a single SE Hamiltonian.
\noindent
\begin{theorem}[PCR density]
    Suppose that $\ket{\Psi} \in \Gamma$ satisfies the physicality conditions for a cPT.
    Then, for any $\varepsilon>0$, there exists a process-canonical representation $\ket{\Phi[\mathbb{H}_\varepsilon, \{\mathbb{P}_\nu\}, \kket{\psi_\varepsilon}\!\bbra{\id}]}$, such that
    \begin{equation}
        \|\ket{\Phi[\mathbb{H}_\varepsilon, \{\mathbb{P}_\nu\}, \kket{\psi_\varepsilon}\!\bbra{\id}]} - \ket{\Psi}\|_\Gamma < \varepsilon,
    \end{equation}
    where $\mathbb{H}_\varepsilon$ is the Liouvillian generated by $H_\varepsilon$ satisfying $\sup_{t\in[0,T]}\|H_\varepsilon(t)\|_\infty < \infty$, $\psi_\varepsilon$ is a pure state and the PCR has a finite-dimensional environment.
    \label{thrm:pcf existence}
\end{theorem}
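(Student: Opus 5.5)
The plan follows the route in \cref{fig:main}: truncate and discretise the cPT, dilate the resulting discrete object with a single Stinespring dilation, and then re-smooth it into a PCR.

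\emph{Truncation and discretisation.} First, since $\ket{\Psi}\in\Gamma$, the tail $\sum_{k>K}\|\Psi^{(k)}\|^2$ is small for some $K$, so only finitely many particle sectors need to be controlled. Partition $[0,T]$ into $N$ intervals $I_j$ of width $\delta=T/N$. Let $E_N$ be the projection of each $\Psi^{(k)}_{\bm\nu}$ onto functions that are constant on the product cells of the simplex, i.e.\ conditional expectation on the grid. Because step functions are dense in $L^2(\Delta_k)$, we have $\|E_N\Psi-\Psi\|_\Gamma\to 0$ for $k\le K$. For $k>K$ the projection is a contraction in each sector, so the tail stays small. The cell averages are exactly smoothed marginals $\Upsilon^f_{\bm\nu}$ with $f$ the normalised indicator of a cell. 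There are two subtleties here. Diagonal cells, where two times share one interval, carry two interventions within one coarse step. Cells near the simplex boundary have to be handled with $f$ supported on $\Delta_k\cap$cell. So I will refine the grid further: the fraction of mass on diagonal cells vanishes as $N\to\infty$, and I will simply drop those contributions.

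\emph{Single discrete dilation.} The key observation is that the averages over all off-diagonal cells should be the components of \emph{one} $N$-step process tensor $\Upsilon_{N:0}$ on the grid, with the $j$-th slot being ``intervention averaged over $I_j$''. Positivity and causality of this joint object follow from \cref{def:cpt}: one chooses $f$ to be a product of the normalised indicators $\delta^{-1}\mathbbm 1_{I_j}$ for the active slots. Idle slots correspond to $\nu=0$, and by the definition of $\xi_0$ these give consistent marginals, which yields the causality reductions. Applying the discrete dilation theorem, \cref{eq:physicality}, to this single $N$-step tensor produces a finite environment, a pure $\psi$ and unitaries $U_0,\dots,U_N$. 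These reproduce all its components, with $\bbra{\id}\mathcal U_N\mathbb P_{\nu_N}\cdots\kket{\psi}$ giving $\delta^{k}$-rescaled amplitudes. I expect this consistency step, turning the family of separately physical marginals into a single physical comb, to be the main obstacle. The delicate point is checking that the normalisation of $f$ matches the Fock normalisation, so that the intervention ``strength'' per slot is $\delta\,\bar\Psi$ rather than $\bar\Psi$. Taking a product $f$ handles this, and the rescaling by $\delta^{k}$ is absorbed into the cell volume.

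\emph{Smoothing into a PCR.} Finally, I realise each $U_j$ as $e^{-\mathrm i \delta H_j}$ with a bounded $H_j$. I then build a piecewise-constant Hamiltonian that applies $U_j$ within a short window at the end of $I_j$, compressed to length $\eta\delta$, which gives $\|H\|_\infty\lesssim\|H_j\|/(\eta\delta)<\infty$. The PCR coefficient $\Phi^{(k)}_{\bm\nu}(\bm t)$ with one time in each distinct $I_j$ lying outside the windows then equals the discrete comb component divided by the cell volume, by the Dyson expansion of \cref{eq:cmps}. In other words it reproduces $E_N\Psi$ on the off-diagonal cells up to an $O(\eta)$-measure set. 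The PCR coefficients are bounded uniformly in $\bm t$, and each is at most a propagator norm times $\|\mathbb P_\nu\|^k$. So the contributions from the windows, the diagonal cells and sectors with $k>K$ are controlled by the choice of $\eta$, $N$ and $K$. Here $K$ is handled using the uniform cMPS bound $\|\Phi\|_\Gamma^2\le \sum_k (cT)^k/k!$ on high-$k$ sectors. The remaining terms fall under the triangle inequality, which gives $\|\Phi-\Psi\|_\Gamma<\varepsilon$.
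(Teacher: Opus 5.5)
Your proposal is correct and is essentially the paper's proof: you project onto grid step functions and discard multiply-occupied cells (the paper's $P_N$ and $P_N^{(0,1)}$), read off a single $N$-step process tensor from \cref{def:cpt} with the one test function $f_N=(N/T)^N\prod_{j=1}^N\mathbf{1}_{I_j}(t_j)$, dilate it once, realise the unitaries in short windows at the grid points, and collect the errors by the triangle inequality; idle slots are absorbed by $\xi_0=\id$, so the consistency step you flag as the main obstacle is immediate. Two small corrections: with this normalised $f_N$ the comb components are already the cell averages of $\Psi$, and the Dyson expansion makes the PCR amplitude on an off-diagonal cell (outside the windows) equal to the comb component itself, so neither your ``$\delta^k$-rescaling'' nor your ``division by the cell volume'' occurs (the two misstatements cancel); and because $d_{\rm E}$ grows with $N$, the diagonal-cell and high-sector estimates for the PCR need an $N$-independent amplitude bound, which the trace norm supplies, $|\Phi^{(k)}_{\bm\nu}(\bm t)|\le\prod_{i=1}^k\|P_{\mu_i}\|_\infty\|P_{\mu'_i}\|_\infty$ with $\nu_i=(\mu_i,\mu'_i)$ since unitary channels are trace-norm isometries, whereas Liouville-space operator norms would carry a factor $\|\kket{\id_{\rm SE}}\|=\sqrt{d_{\rm S}d_{\rm E}}$ --- a uniformity the paper also relies on implicitly when it applies \cref{lem:diff_proj_cMPS} to the $N$-dependent vector $\ket{\Phi^{\tilde{\mathbb H}}}$.
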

\begin{proof}[Proof sketch (see also \cref{fig:main})]
    The core idea is to project $\ket{\Psi}$ onto vectors with wavefunctions that are piecewise constant on the grid $\Lambda_{N_\varepsilon} \coloneqq \{0, \frac{T}{N_\varepsilon}, \dots, T\}$, where $N_\varepsilon$ is sufficiently large.
    This projection can be understood as computing an $N_{\varepsilon}$-step discrete marginal $\Psi^{f_{N_\varepsilon}}_{\bm{\nu}}$ of $\ket{\Psi}$ with respect to a test function $f_{N_\varepsilon}(\bm{t}) = \left(\tfrac{N_\varepsilon}{T}\right)^{N_\varepsilon} \bm{1}_{I_1}(t_1) \dots \mathbf{1}_{I_{N_\varepsilon}}(t_{N_\varepsilon})$, where $\bm{1}_I$ is the indicator function on $I\subset [0, T]$ and $I_k = ((k-1)\frac{T}{N_\varepsilon}, k\frac{T}{N_\varepsilon})$.
    By physicality of $\ket{\Psi}$ we can find a dilation 
    \begin{equation}
        \Psi^{f_{N_\varepsilon}}_{\bm{\nu}} = \bbra{\id}  \mathbb{P}_{\nu_{N_\varepsilon}} \mathcal{U}_{N_\varepsilon-1} \mathbb{P}_{\nu_{N_\varepsilon-1}} \dots \mathbb{P}_{\nu_1} \mathcal{U}_0 \kket{\psi_{N_\varepsilon}}
    \end{equation}
    with a finite environment.
    Defining for each $k$ the Liouvillian $\mathbb{H}_k$ such that $\mathcal{U}_k = \e^{\mathbb{H}_k}$. We let
    \begin{equation}
        \tilde{\mathbb{H}}_{N_\varepsilon}(t) \coloneqq \sum_{k=0}^{N_\varepsilon-1} \mathbb{H}_k  \delta\left(t - \frac{Tk}{N_\varepsilon}\right)
    \end{equation}
    be the Liouvillian generated by the distribution-valued Hamiltonian $\tilde{H}_{N_\varepsilon}$.
    The formal cMPS $\ket{\Phi[\tilde{\mathbb{H}}_{N_\varepsilon}, \{\mathbb{P}_\nu\}, \kket{\psi_{N_\varepsilon}}\!\bbra{\id}]}$ matches exactly the piecewise constant projection of $\ket{\Psi}$.
    The construction allows us to control the $L^2$ error between the piecewise constant cMPS and $\ket{\Psi}$.
    Lastly, we can smooth $\tilde{H}_{N_\varepsilon}$ to $H_\varepsilon$ at controlled $L^2$ cost, to satisfy $\|H_\varepsilon(t)\|_\infty<\infty$.
\end{proof}
\noindent
A few comments about this result are in order.
Firstly, the PCR has a clear connection to the Schrödinger equation and can be implemented by coupling the system to a finite-dimensional environment via $H_\varepsilon$, and evolving the pure initial state $\psi_\varepsilon$.
\Cref{thrm:pcf existence} is a strong dilation theorem in the approximate sense and shows that the claimed physicality conditions really do ensure that the process is physical.
Due to~\cref{eq:distinguishability} and the observed connection between the instrument norm and the corresponding energy requirement, the result states that we can always find a PCR such that distinguishing it from the given $\ket{\Psi}$ would require arbitrarily large energy.

We saw in~\cref{prop:pcr strict} that sharp measurements are undefined for generic cPTs. But frequently in quantum information, it is useful to think of instantaneous and idealised measurements -- similar to position eigenstates in standard quantum mechanics~\cite{madridRoleRiggedHilbert2005}.
As it turns out, \cref{thrm:pcf existence} allows us to make such idealised instruments rigorously meaningful in a rigged Hilbert space construction~\cite{gelfandGeneralizedFunctionsApplications2014,wlokaPartialDifferentialEquations1987}, exploiting the regularity properties of PCRs.
\begin{corollary}[Process Gelfand triple, informal]\label{cor:gelfand-informal}
    Regular processes generated by a bounded-energy environment sit inside the space of all cPTs, which in turn sits inside the space of idealised instruments:
    \begin{equation}\label{eq:process gelfand triple2}
        \underbrace{\Lambda_{\rm reg}}_{\text{PCRs}}
        \;\hookrightarrow\;
        \underbrace{\Lambda}_{\text{all cPTs}}
        \;\hookrightarrow\;
        \underbrace{\Lambda_{\rm reg}^{\ast}}_{\substack{\text{idealised}\\\text{instruments}}} .
    \end{equation}
    An instantaneous measurement at time $t$ gives well-defined statistics on every regular process, but for generic cPTs instantaneous-instrument statistics are undefined. Idealised instruments are therefore made meaningful on the dense set of bounded-energy processes, which we formalise rigorously by a Gelfand triple in the SM~\cite{supp}.
\end{corollary}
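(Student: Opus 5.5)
The plan is to make the informal statement precise by choosing concrete spaces and norms. I would then get both embeddings from \cref{thrm:pcf existence} together with Riesz duality.

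First I would fix the middle space. Let $\Lambda \subseteq \Gamma$ be the closed linear span of $\mathsf{cPT}$ in the Fock norm. It is a Hilbert space, so $\Lambda \cong \Lambda^\ast$ via $\ket{\Upsilon} \mapsto \braket{\Upsilon|\cdot}$. Since cPTs and PCRs form affine sets (normalised by $\braket{\Omega|\Upsilon}=1$), I would work throughout with their linear spans.

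Next I would define $\Lambda_{\rm reg}$ as the completion of $\Span\,\mathsf{PCR}$ in a stronger, pointwise norm. For a PCR with $\sup_t\|H(t)\|_\infty<\infty$, I would check from \cref{eq:cmps} that each coefficient is
\begin{equation}
\Upsilon^{(k)}_{\bm\nu}(\bm t)=\bbra{\id}\,\mathcal U(T,t_k)\,\mathbb P_{\nu_k}\,\mathcal U(t_k,t_{k-1})\cdots\mathbb P_{\nu_1}\,\mathcal U(t_1,0)\kket{\rho_{\rm SE}} ,
\end{equation}
with unitary propagators $\mathcal U$ that are Lipschitz in their time arguments. Hence $\Upsilon^{(k)}_{\bm\nu}$ extends continuously to the closed simplex $\Delta_k$ and obeys $|\Upsilon^{(k)}_{\bm\nu}|\le C^k$, with $C=\max_\nu\|\mathbb P_\nu\|$. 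The natural norm is then
\begin{equation}
\|\Phi\|_{\rm reg}\coloneqq\sup_{k\ge0}\,\max_{\bm\nu}\,\sup_{\bm t\in\Delta_k} C^{-k}\,|\Phi^{(k)}_{\bm\nu}(\bm t)|,
\end{equation}
taken on a weighted space of sequences of continuous functions, which is a Banach space.

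Two checks follow. The embedding $\Lambda_{\rm reg}\hookrightarrow\Lambda$ is continuous because
\begin{equation}
\|\Phi\|_\Gamma^2\le\sum_k q^k C^{2k}\tfrac{T^k}{k!}\,\|\Phi\|_{\rm reg}^2=e^{qC^2T}\|\Phi\|_{\rm reg}^2 .
\end{equation}
The embedding is injective because continuous functions that agree almost everywhere agree everywhere. Density of $\Lambda_{\rm reg}$ in $\Lambda$ is exactly \cref{thrm:pcf existence}: every cPT is a $\Gamma$-limit of PCRs, so $\Span\,\mathsf{PCR}$ is dense in $\Span\,\mathsf{cPT}$ and hence in $\Lambda$.

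The second embedding $\Lambda\cong\Lambda^\ast\hookrightarrow\Lambda_{\rm reg}^\ast$ is the transpose of the first, namely restriction of functionals. It is continuous by the bound above and injective by the density just established. This yields the triple \cref{eq:process gelfand triple2}.

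For the instrument claim, I would take the instantaneous effect at time $t$. This is a finite combination of functionals $\Phi\mapsto\Phi^{(k)}_{\bm\nu}(\bm t)$ with some entries of $\bm t$ equal to $t$. Each such functional is bounded by $C^k\|\Phi\|_{\rm reg}$, so it lies in $\Lambda_{\rm reg}^\ast$ and gives well-defined statistics on every regular process. To see that it generally does not extend to $\Lambda$, I would use the Brownian-motion cPT of \cref{prop:pcr strict}. Its coefficients have approach-dependent limits at $\tau$, so evaluation at $\tau$ has no continuous extension in $\|\cdot\|_\Gamma$. Equivalently, I would exhibit PCRs converging in $\Gamma$ whose values at $\tau$ differ according to the cone from which they approximate.

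I expect the main obstacle to be the choice of $\|\cdot\|_{\rm reg}$. It must simultaneously make $\Lambda_{\rm reg}$ complete, dominate the $\Gamma$-norm (which needs the $k!$ room from the simplex volume), and make point evaluations bounded. The delicate point is that $C$ depends on the fixed basis $\{\mathbb P_\nu\}$ but not on $H$, so the bound is uniform across all PCRs. If the weighted supremum turns out too rigid, my fallback is an inductive limit over energy bounds $\sup_t\|H\|_\infty\le E$. The triple argument is unchanged in that case.
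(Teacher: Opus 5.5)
\paragraph{Gap: the second embedding does not have dense range.} Your first arrow is sound and matches the paper. The weighted supremum norm dominates $\|\cdot\|_\Gamma$, continuity of PCR coefficients gives injectivity, and \cref{thrm:pcf existence} gives density. The gap is in the second arrow. In the rigorous (SM) version of the corollary, $\hookrightarrow$ means a continuous injective map \emph{with dense range}. For $\Lambda\hookrightarrow\Lambda_{\rm reg}^\ast$ you check only continuity and injectivity, and with your norm density actually fails. The transpose of a dense injective embedding is injective with weak-$\ast$ dense range. Norm density follows from reflexivity of $\Lambda_{\rm reg}$, which a space closed in a supremum norm lacks, and here it genuinely fails.

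Here is a concrete obstruction. Let $H_\epsilon$ swap $\rm S$ with an environment copy, prepared in a state orthogonal to that of $\rm S$, during $[\tau-\epsilon,\tau]$, and swap back during $[\tau,\tau+\epsilon]$. Let $\Phi_\epsilon$ be the difference between this PCR and the one with $H=0$. All coefficients agree unless some $t_i\in[\tau-\epsilon,\tau+\epsilon]$. Hence $\|\Phi_\epsilon\|_{\rm reg}$ stays bounded while $\|\Phi_\epsilon\|_\Gamma=O(\sqrt{\epsilon})$. Yet the one-particle value of $\Phi_\epsilon$ at $\tau$, for a species such as $(Z,0)$, is a fixed nonzero constant. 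So for every $g\in\Lambda$, the $\Lambda_{\rm reg}^\ast$-distance between evaluation at $\tau$ and $\langle g|\cdot\rangle$ is bounded below by a positive constant independent of $g$. The point-evaluation effects you want to accommodate are exactly the functionals outside the closure of the image of $\Lambda$. Your inductive-limit fallback does not repair this.

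\paragraph{How the paper avoids it.} The paper uses a Hilbertian regularity norm, the Sobolev Fock norm on $\Gamma(H^1([0,T]))$. PCRs have finite $H^1$ norm because $\partial_{t_i}$ of a PCR coefficient simply inserts $[\mathbb P_{\nu_i},\mathbb H(t_i)]$. This gives a bound in terms of $\|\mathbb H\|_\infty$; the bound is not uniform in $H$, which also yields strictness of the inclusion. The closure $\Lambda_{\|\cdot\|_{H^1}}$ is a closed subspace of a Hilbert space, hence reflexive, so the adjoint of the inclusion has dense range. Point evaluation is bounded on $H^1([0,T])$ by the one-dimensional Sobolev embedding. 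That suffices, since a POVM at $\tau$ has only vacuum and one-particle components. Consistently, your swap sequence has $H^1$ norm of order $\epsilon^{-1/2}$. With this norm in place of yours, the rest of your argument goes through.

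\paragraph{A smaller point.} The Brownian example of \cref{prop:pcr strict} has constant one-particle coefficients, because the channels are unital and $\rho_0=\id/2$. Its discontinuity sits only in the two-particle sector at $(\tau,\tau)$. So it does not show that one-particle evaluation at $\tau$ is $\Gamma$-discontinuous. The swap sequence above does show this, since $\Phi_\epsilon\to 0$ in $\Gamma$ while its value at $\tau$ does not.
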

\setcounter{savedcorno}{\value{corollary}}
\noindent
Our PCR density theorem hence enables a rigorous formulation of instantaneous POVMs inside our framework. But, as we shall see, it is even more central to the overall formalism by allowing us to connect cPTs with operators and thus formulate a Choi duality in the continuum. 

\emph{Choi duality.---} In the discrete setting, the Choi matrix of a process tensor and its vectorisation are related by the isomorphism between operators on a finite-dimensional space and vectors in the doubled space. Loosely speaking, to each process tensor $\Upsilon$ and effect $\mathbf{A}$ (e.g. a sequence of CP maps) one associates positive operators $\hat{\Upsilon}$ and $\hat{\mathbf{A}}$ such that $\Upsilon[\mathbf{A}] = \Tr[\hat{\Upsilon}\hat{\mathbf{A}}^{\T}]$.
At a formal level, this exploits the relation $\mathcal{B}(\mathbb{C}^d)\cong \mathbb{C}^d\otimes \mathbb{C}^d$; physically this corresponds to the cut between the time-forwards and the time-backwards contour.
In the continuum however, such an isomorphism does not exist, because while the physical cut runs between the two Pauli labels $\mu, \mu'$, the Fock structure $\Gamma_T^{(d_{\rm S}^4-1)}$ does not factorise along this axis.
But on the subspace of $\Gamma$ where the cross-species (neither $P_\mu$ nor $P_{\mu'}$ equal to the identity) are unpopulated, the isomorphism can be restored.
As it turns out, the cross-species amplitudes of physical processes are in fact fully determined by the remaining amplitudes. Consequently, projecting them away retains all relevant information while circumventing the non-factorisability obstacle. \cref{thrm:pcf existence} allows us to make this point rigorous and hence formulate a Choi duality in the continuum. We present here briefly the main ingredients and defer the details to the SM~\cite{supp}.
\par 

We write $\gamma \coloneqq \Gamma^{(d_{\rm S}^2-1)}_T$ for the single-contour Fock space.  
We use a caret to denote the Choi operator of a cPT $\cPTchoi: \gamma \rightarrow \gamma$ and $\check{\mathcal{I}}$ for the object dual to it (\cref{def:instrument}) -- with the explicit maps $\mathcal{P}:\cPT\mapsto\cPTchoi$ and $\mathcal{V} : \bra{\mathcal{I}}\mapsto \check{\mathcal{I}}$, elaborated on in the SM~\cite{supp}. 
The \emph{Choi map} $\mathcal P$ retains only 
wavefunctions not containing the cross species, such that the integral kernel of $\cPTchoi$ becomes:
\begin{equation}\label{eq:choi map}
    (\cPTchoi)^{(k;k')}_{\bm{\mu};\bm{\mu'}}(\bm{t};\bm{t'}) \coloneqq \Upsilon^{(k+k')}_{(\bm{\mu}, \bm{0}), (\bm{0}, \bm{\mu'})}(\bm{t}, \bm{t'}),
\end{equation}
which is a Hilbert--Schmidt (HS) operator by construction. 
On the other hand, the \emph{instrument} map $\mathcal{V}$ retains by necessity all cross-species information. Consequently, the dual object $\check{\mathcal I}$ is in general not an operator on $\gamma$ but instead a sesquilinear form defined via an integral kernel. \par

With this context in hand, we now present a theorem formalising the Choi duality in the continuum, enabled by~\cref{thrm:pcf existence}.  In particular, the separate forward and backward Stinespring curves of a PCR make explicit that the cross species factorise, which allows them to be reconstructed from the Choi operator. 
\begin{theorem}[Choi duality in the continuum]\label{cor:choi}
    There exists a reconstruction map $\mathcal{R}$, such that for every $\cPT \in \mathsf{cPT}$, the Choi operator $\cPTchoi$ satisfies $\mathcal{R}\cPTchoi = \cPT$ and it is a positive-semidefinite trace-class operator.
    Hence, for any $\bra{\mathcal{I}} \in \Gamma^*$ it satisfies
    \begin{equation}\label{eq:duality}
       \braket{\check{\mathcal{I}},\cPTchoi}_{\rm ker} = \braket{\mathcal{I} | \Upsilon_T},
    \end{equation}
    with the left hand side being a HS-like sesquilinear form between integral kernels.
\end{theorem}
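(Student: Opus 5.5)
The plan is to prove the statement first on the dense subclass $\mathsf{PCR}$ and then extend it to all of $\mathsf{cPT}$ by continuity, using \cref{thrm:pcf existence} for the density.

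\textbf{Step 1: the PCR case.} For $\cPT \in \mathsf{PCR}$ with Hamiltonian $H$ and pure $\rho_{\rm SE} = \ket{\psi}\!\bra{\psi}$, the Liouvillian splits as $\mathbb{H}_{\rm SE} = -\mathrm{i}H\otimes\id + \mathrm{i}\id\otimes H^*$. Since $\mathbb{P}_{(\mu,\mu')} = P_\mu\otimes P_{\mu'}^*$, every coefficient factorises through a forward and a backward Stinespring curve. Writing $U$ for the forward propagator, I would define the single-contour vector
\begin{equation*}
\ket{\phi_a} \coloneqq \sum_k \sum_{\bm{\mu}} \int {\rm D}^k\bm{t}\, \bra{a}_{\rm E}\, U(T,t_k) P_{\mu_k} \cdots P_{\mu_1} U(t_1,0)\ket{\psi} \,\ket{\bm{\mu},\bm{t}} \in \gamma\otimes\mathcal{H}_{\rm S}\otimes\mathbb{C}^{d_{\rm E}}.
\end{equation*}
Here the index $a$ runs over an orthonormal basis of $\rm E$, and the $\mathcal H_{\rm S}$ output leg is traced together with $\rm E$. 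The key combinatorial point is that a word in the double-species labels $\nu=(\mu,\mu')$ corresponds to a forward word $\bm\mu$ and a backward word $\bm\mu'$ on the union of times, with identity padding where a label is $0$. Cross species are exactly the times occupied in both words. This should give $\cPTchoi = \sum_a \ket{\phi_a}\!\bra{\phi_a}$ after the partial trace, so $\cPTchoi$ is manifestly positive semidefinite. Its trace is $\sum_a\|\phi_a\|^2$, which is finite because the Dyson series with bounded $P_\mu$ and unitary $U$ is square-summable; $e^{T(d_{\rm S}^2-1)}$-type bounds suffice. Hence $\cPTchoi$ is trace class. I would then define the reconstruction $\mathcal{R}$ explicitly on kernels. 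The full amplitude $\Upsilon^{(n)}_{\bm\nu}(\bm s)$ is obtained from $\cPTchoi$ by taking the forward times $\bm t$ and backward times $\bm t'$ to be the subsets of $\bm s$ with nonzero forward and backward labels respectively. When cross species share a time, the times in $\bm t$ and $\bm t'$ coincide, so $\mathcal R$ is a restriction of the kernel $\hat\Upsilon(\bm t;\bm t')$ to a diagonal. For trace-class positive operators, such a restriction is well defined through the factorisation $\sum_a \phi_a(\bm t)\overline{\phi_a(\bm t')}$, and on PCRs it is pointwise. This shows $\mathcal R\mathcal P = \id$ on $\mathsf{PCR}$.

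\textbf{Step 2: extension.} By \cref{thrm:pcf existence}, any $\cPT$ is a $\Gamma$-limit of PCRs $\ket{\Upsilon_n}$. The map $\mathcal P$ is a coordinate projection followed by a re-indexing of kernels, so it is a contraction from $\Gamma$ to Hilbert--Schmidt operators. Hence $\hat\Upsilon_n \to \cPTchoi$ in HS norm, and positivity passes to the limit. For the trace-class claim I would use $\tr\hat\Upsilon_n = \sum_a\|\phi^{(n)}_a\|^2$. Identifying this with a quantity controlled by $\|\Upsilon_n\|_\Gamma$ requires care: the diagonal of the kernel is a cross-species amplitude, so the trace equals $\sum_{\bm\nu\ \text{diag}}\int\Upsilon_{\bm\nu}$. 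That is the overlap of $\Upsilon$ with a fixed vector $\bra{\mathcal I_{\rm tr}}$, but one that is only a sesquilinear form, not an element of $\Gamma^*$. Instead I would use lower semicontinuity of the trace norm under weak-operator convergence of positive operators. This requires a uniform bound on $\tr\hat\Upsilon_n$, and this is the step I expect to be the main obstacle. My first attempt would be to exploit causality and normalisation. The trace pairs the Choi operator with the deterministic ``identity'' instrument, i.e. inserting $\sum_\mu P_\mu\otimes P_\mu^*$ densities. I would try to show that this evaluates to a constant determined by $\braket{\Omega|\Upsilon}=1$ and $d_{\rm S}$, via causality applied to the smoothed marginals, uniformly in $n$. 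The density construction in \cref{thrm:pcf existence} can be chosen so that the approximants are themselves cPTs, so they also obey causality.

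\textbf{Step 3: duality.} Given a uniform trace bound, $\mathcal R$ extends to the limit and $\mathcal R\cPTchoi = \cPT$. For $\bra{\mathcal I}\in\Gamma^*$ I would define $\check{\mathcal I} = \mathcal V\bra{\mathcal I}$ as the kernel of $\mathcal I$ re-indexed into forward and backward labels. Then $\braket{\check{\mathcal I},\hat\Upsilon}_{\rm ker}$ is by definition the integral of $\overline{\check{\mathcal I}}$ against $\mathcal R\hat\Upsilon$ over all label configurations, which is literally $\braket{\mathcal I|\Upsilon_T}$. On PCRs this identity is a rearrangement of absolutely convergent sums. It extends to all of $\mathsf{cPT}$ because both sides are continuous in the approximating sequence: the right side by Cauchy--Schwarz in $\Gamma$, and the left side because it equals the right side along the sequence.
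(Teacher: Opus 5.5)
Your architecture is the paper's. You factorise a PCR into forward and backward Stinespring curves to get positivity and $\mathcal R\mathcal P=\id$, use density via \cref{thrm:pcf existence}, note that $\mathcal P$ is an HS contraction, pass positivity through weak limits, get trace class from Fatou, and obtain \eqref{eq:duality} by unfolding the $\delta$'s in $\check{\mathcal I}$. Two corrections to this part:
\begin{itemize}
\item The step you single out as the main obstacle is not one. Your own Step~1 estimate is already uniform in $H$, $d_{\rm E}$ and $\psi$. Moreover, $\sum_{\mu\neq0}P_\mu^\dagger P_\mu=(d_{\rm S}-1/d_{\rm S})\id$ gives $\tr\mathcal P(\ket{\Upsilon})=\e^{T(d_{\rm S}-1/d_{\rm S})}$ exactly for every PCR, which is the paper's lemma.
\item The trace functional you describe (amplitude $1$ on every word of diagonal species $(\mu,\mu)$, $\mu\neq0$) is a genuine element of $\Gamma^*$, with norm $\e^{T(d_{\rm S}^2-1)/2}$.
\end{itemize}

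The genuine gap is the sentence ``given a uniform trace bound, $\mathcal R$ extends to the limit''. The diagonal restriction defining $\mathcal R$ is continuous from trace norm to $L^1$. It is not continuous from HS norm, even on trace-bounded sets: $\frac1k\sum_{j\le k}\ket{g_j}\!\bra{g_j}$ with orthonormal Fourier modes has trace $1$, HS norm $k^{-1/2}$ and constant diagonal.

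Worse, the conclusion itself is false. The trace functional is bounded on $\Gamma$ and equals $\e^{T(d_{\rm S}-1/d_{\rm S})}$ on PCRs, so it equals this constant on every cPT. Hence $\mathcal R\cPTchoi=\cPT$ forces $\tr\cPTchoi=\e^{T(d_{\rm S}-1/d_{\rm S})}$, i.e.\ no trace may be lost in the HS limit. Conversely, conserved trace would upgrade HS convergence of the positive approximants to trace-norm convergence, and your argument would then work.

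Trace is lost in the following example.
\begin{itemize}
\item Take a qubit in $\ket{0}$, an environment $\mathbb C^k$ in $k^{-1/2}\sum_j\ket{j}$, and $H_k=\sum_{j\le k}\omega_j\ket{j}\!\bra{j}\otimes Z$ with $\omega_j=2\pi j/T$.
\item Each $\ket{\Upsilon_k}$ is a PCR and is an average over $j$ of unitary processes whose amplitudes are trigonometric polynomials in the $\omega_j t_i$. By dominated convergence, $\ket{\Upsilon_k}\to\ket{\Upsilon_\infty}$ in $\Gamma$, and $\ket{\Upsilon_\infty}\in\mathsf{cPT}$ because the cPT conditions are closed.
\item The one-particle cross amplitude of species $(X,X)$ is a nonzero constant for every $k$.
\item The non-cross amplitude with $(X,0)$ at $t$ and $(0,X)$ at $t'$ is $\propto\frac1k\sum_j\e^{2\mathrm{i}\omega_j(t'-t)}$, whose $L^2$ norm is $O(k^{-1/2})$.
\end{itemize}
So the $(1;1)_{X;X}$ block of $\hat\Upsilon_\infty$ vanishes, and $\mathcal R\hat\Upsilon_\infty\neq\ket{\Upsilon_\infty}$. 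Now let $\ket{\mathcal I}$ be the one-particle vector with constant amplitude on $(X,X)$. The left side of \eqref{eq:duality} is then the trace of a zero block, while the right side is $\propto T\neq0$, whatever $\mathcal R$ is chosen to be. This is a Zeno-type limit: the forward--backward correlations collapse onto $t=t'$, which an $L^2$ kernel cannot see.

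So your Step~3 cannot be closed for all of $\mathsf{cPT}$, and following the paper will not rescue it. The paper extends $\mathcal R$ by the bounded linear transformation theorem, asserting unit HS$\to\Gamma$ norm on $\mathcal P(\Span\mathsf{PCR})$. But $\|\cPT\|_\Gamma\ge\|\mathcal P\cPT\|_{\rm HS}$, with strict inequality whenever cross species are populated. It then identifies the continuous extension with the SVD-defined $\mathcal R$ by uniqueness, which holds only among bounded extensions.

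The same phenomenon makes your final continuity argument circular. The map $\hat\Upsilon\mapsto\langle\check{\mathcal I},\hat\Upsilon\rangle_{\rm ker}$ is not HS-continuous, so ``it equals the right side along the sequence'' does not pass to the limit.

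What your argument, like the paper's, does establish:
\begin{itemize}
\item For every cPT, $\cPTchoi$ is positive semidefinite and trace class, with $\tr\cPTchoi\le\e^{T(d_{\rm S}-1/d_{\rm S})}$.
\item $\mathcal R\cPTchoi=\cPT$, and hence \eqref{eq:duality}, holds on PCRs. More generally, $\mathcal R\cPTchoi=\cPT$ holds precisely on those cPTs for which this trace bound is attained.
\end{itemize}
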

\begin{proof}[Proof sketch]
       We introduce the operator $\mathcal{R}$ such that $\mathcal{R}\add{\!}\circ\add{\!}\mathcal{P}(\ket{\Upsilon}) = \ket{\Upsilon}$ directly follows for all $\ket{\Upsilon} \in \mathsf{PCR}$.
       \Cref{thrm:pcf existence} and the bounded linear transformation theorem extend this identity to all cPTs, after which~\cref{eq:duality} then follows by direct computation using the definitions of $\mathcal{R}$ and $\mathcal{V}$.
    Positive semidefiniteness of $\cPTchoi$ follows from the positivity of $\cPT$.
    Lastly, the trace of a PCR can be computed as a constant and hence, using~\cref{thrm:pcf existence} and Fatou's lemma, we can upper bound the trace of $\cPTchoi$.
\end{proof}
\noindent
We remark that for sufficiently well-behaved $\ket{\mathcal I}$, $\check{\mathcal{I}}$ is a bounded operator on $\gamma$ and then $\braket{\check{\mathcal{I}},\cPTchoi}_{\text{ker}} \equiv  \Tr\bigl[\check{\mathcal{I}}^\dag\cPTchoi]$ exactly \textit{is} the trace of the operator product $\check{\mathcal{I}}^\dag\cPTchoi$.
Further, note that since $\mathcal{R}\circ\mathcal{P}$ acts as the identity on all cPTs, the causality condition can be equivalently defined in the Choi operator picture, see the SM~\cite{supp}.

\emph{Discussion.---} 
In this work, we have clarified the structure of continuous non-Markovian processes, linking the operational framework introduced in an accompanying work \cite{wassnerOperationalContinuumLimit2026} to its underlying physical setting, while proving several key results pertaining to mathematical representation.
Most importantly, we proved a dilation theorem for cPTs, connecting the abstract conditions of positivity and causality on Fock vectors to a clear dynamical approximation in the $\Gamma$-norm.
A natural next step would be to formalise this connection and understand whether it is meaningful to go beyond this norm in some scenarios, while linking our formalism to the literature on quantum stochastic processes~\cite{accardiQuantumStochasticProcesses1982,Nurdin_2021}.
While the main focus of this paper has been on the processes, finding a structural theorem analogous to~\cref{thrm:pcf existence} for instruments would solidify the mathematical understanding of the process dual.
Further, we have motivated the $\Gamma$-norm as the natural approximation metric for processes by linking the instrument norm to the energy requirements of instruments by means of examples.
We thus pave the way for information-theoretic treatments which respect the physical constraint of finite energy.

\let\oldaddcontentsline\addcontentsline
\renewcommand{\addcontentsline}[3]{}       

\begin{acknowledgments}
\emph{Acknowledgements.---} This work has been supported by the BMFTR (DAQC, MuniQC-Atoms, QuSol, Hybrid++, PasQuops), Clusters of Excellence (ML4Q, MATH+), the Munich Quantum Valley, Berlin Quantum, the Quantum Flagship (Millenion, Pasquans2), the DFG (CRC 183, SPP 2514), the QuantERA (SDPCode), the European Research Council (DebuQC), and the Alexander-von-Humboldt Foundation. 

\emph{AI statement.---} 
AI tools supported us with fundamental mathematical analysis concepts when carrying out the detailed steps of the proofs and we have used them as a final check.
The main ideas, steps and writing were all done without the use of AI.
\end{acknowledgments}

\bibliography{references}

\let\addcontentsline\oldaddcontentsline  

\clearpage

\input{SM}

\end{document}

%% file: SM.tex
\providecommand{\je}[1]{{\color{cyan} #1}}

\onecolumngrid

\setcounter{secnumdepth}{2}
\appendix

\begin{center}
    {\centering \bfseries \large Dilation theorem for continuum quantum stochastic processes\\ Supplemental Material \par}    
\end{center}

\tableofcontents

\section{Background on Fock spaces, continuous process tensors and notational conventions}

In this appendix, we provide a basic introduction to analysis on Fock spaces, introduce our notation, and show how cPTs naturally become elements of these spaces.
For a more thorough introduction to Fock spaces, many standard texts are available; we recommend the excellent texts~\cite{araiInfiniteDimensionalDiracOperators2022,defariaMathematicalAspectsQuantum2010}.

\subsection{Fock spaces}

As presented in the main text and introduced in detail in the accompanying work \cite{wassnerOperationalContinuumLimit2026}, continuous process tensors on a continuous time-interval $[0,T]$ can be described by vectors $\ket{\Upsilon}$ in a bosonic free scalar quantum field theory. 
Mathematically, the relevant space is a bosonic Fock space.
Bosonic Fock spaces are built from a base Hilbert space $\mathcal{H}$ as follows:
\begin{equation}\label{eq:Fock space def}
\Gamma(\mathcal{H}) \coloneqq \bigoplus_{n=0}^\infty \text{Sym}^{\otimes n} (\mathcal{H}),
\end{equation}
where the notation $\text{Sym}^{\otimes n}(\mathcal{H})$ is the symmetrisation of the $n$-fold tensor product of $\mathcal{H}$.
The subspace labeled by $n$ is called the $n$-particle subspace, while the subspace labeled by $n=0$ is one-dimensional and is spanned by the vacuum.
Given the inner product $\braket{\cdot, \cdot}_\mathcal{H}$ on the base space, we can define the inner product on $\mathcal{H}^{\otimes n}$ by sesquilinearly extending
\begin{equation}
    \braket{f_1 \otimes\dots\otimes f_n, g_1\otimes \dots \otimes g_n}_n \coloneqq \prod_{i=1}^n \braket{f_i, g_i}_{\mathcal H},
\end{equation}
which also defines an inner product on $\text{Sym}^{\otimes n}(\mathcal{H})$.
The inner product on $\Gamma(\mathcal{H})$ becomes
\begin{equation}
    \braket{f, g}_{\Gamma(\mathcal{H})} \coloneqq \sum_{n=0}^\infty \braket{f^{(n)}, g^{(n)}}_n,
\end{equation}
where $f^{(n)} \in \text{Sym}^{\otimes n}(\mathcal{H})$ is the $n$-particle sector component of $f$, which in physics-related discussions is often called the $n$-particle \textit{wavefunction} or \textit{amplitude}.
We adopt this terminology here.
This inner product makes $\Gamma$ into a Hilbert space since direct sums of complete Hilbert spaces are complete.
The inner product also defines a norm and a topology on $\Gamma(\mathcal{H})$ via $\|f \|_{\Gamma(\mathcal{H})} \coloneqq \sqrt{\braket{f, f}}$ for any $f \in \Gamma(\mathcal{H})$.
This topology is also how the infinite direct sum in~\cref{eq:Fock space def} should be understood: It is the closure of the span of all vectors with support in a finite number of sectors under this $\|\cdot\|_{\Gamma(\mathcal{H})}$.

In this work, the base Hilbert spaces are either multiple copies of the square-integrable functions $\mathcal{H} = L^2([0, T]) \otimes \mathbb{C}^q \cong \bigoplus_{i=1}^{q}L^2([0,T])$ or of the Sobolev functions $\mathcal{H} = H^1([0,T])\otimes \mathbb{C}^q$, where the relevant inner products are
\begin{equation}
    \braket{f, g}_{L^2} \coloneqq \sum_{\nu=1}^q \int_0^T f_\nu^*(t) g_{\nu}(t) {\rm d}t, \quad \braket{f, g}_{H^1} \coloneqq \braket{f,g}_{L^2} + \braket{f',g'}_{L^2},
\end{equation}
where $'$ labels the weak derivative.
The Fock space built from $L^2$ is the standard multi-species bosonic Fock space describing a scalar bosonic theory with $q$ particle species.
In our case, to describe cPTs on a $d_{\rm S}$-dimensional system, the space $\Gamma$ has $q = d_{\rm S}^4 - 1$ particle species.

We can introduce a particle annihilation operator $a(u)$ for any element $u \in \mathcal{H}$, which maps the vacuum to zero and the $n$-particle component to an $n-1$-particle component by linearly extending
\begin{equation}
    a(u): \frac1{n!} \sum_{\pi \in \mathsf{S}_n} f_{\pi(1)} \otimes \dots \otimes f_{\pi(n)} \mapsto \frac{\sqrt{n}}{n!}\sum_{\pi \in \mathsf{S}_n} \braket{u, f_{\pi(1)}} f_{\pi(2)} \otimes\dots\otimes f_{\pi(n)}.
\end{equation}
The associated creation operator $a^\dag(u)$ is the adjoint of the annihilation operator.
These operators satisfy the commutation relation
\begin{equation}
    [a(u), a^\dag(v)] = \braket{u,v}_{\mathcal{H}}.
\end{equation}
For the standard multispecies bosonic Fock space $\Gamma$ it is common to introduce the field operators $\psi_\nu(x)$, which satisfy
\begin{equation}
    [\psi_\nu(x), \psi_{\nu'}^\dag(x')] = \delta_{\nu, \nu'} \delta(x - x'),
\end{equation}
which can be thought of as annihilation operators corresponding to the Dirac delta function $\delta_x$.
Hence, they are operator-valued \emph{densities} and they only make sense when integrated over a non-zero measure subset of the domain $[0, T]$.
In fact, for any $u$ in $L^2([0, T])$ we can write
\begin{equation}
    a_\nu(u) = \int_0^T {\rm d}t \ u(t)^* \psi_\nu(t).
\end{equation}

With the field operators in place, we can define the wavefunctions $\Upsilon^{(n)}_{\bm{\nu}}(\bm{t})$ of any $\ket{\Upsilon} \in \Gamma$ via
\begin{equation}
    \ket{\Upsilon} = \sum_{n=0}^\infty \sum_{\bm{\nu}\in[q]^n} \int_0^T {\rm D}^n\bm{t} \ \Upsilon_{\bm{\nu}}^{(n)}(\bm{t}) \prod_{i=1}^n \psi_{\nu_i}^\dag(t_i)\ket{\Omega}.
\end{equation}
The inner product on $\Gamma$ can now be written in terms of the wavefunctions via
\begin{equation}
    \braket{\Phi|\Upsilon} = \sum_{n=0}^\infty \sum_{\bm{\nu}\in[q]^n}\int_0^T {\rm D}^n\bm{t} \left(\ \Phi^{(n)}_{\bm{\nu}}(\bm{t})\right)^*\Upsilon^{(n)}_{\bm{\nu}}(\bm{t}).
\end{equation}
The $n$-particle subspace is hence spanned by square-integrable functions, where the inner product is defined with time ordering.
We can equivalently define the wavefunctions to be symmetric and replace the time-ordered integral with an integral over $[0, T]^n$, obtaining a rescaled inner product between the wavefunctions, which is more standard in describing non-relativistic particles.
For processes, however, it is useful to impose the time-ordering.

\subsection{Continuous process tensors as elements of a bosonic Fock space}\label{ssec:cpts in the Fock space}

In this subsection we motivate physically why the Fock space $\Gamma$ from Equation (1) in the main text is the natural space to host continuum process tensors of finite-dimensional systems.
Consider a system $\rm S$ coupled to its environment $\rm E$ by a CPTP generator $\mathcal{L}_{\rm SE}(t)$ throughout a time interval of interest $[0, T]$.
Suppose that $\rm SE$ starts in an initial state $\rho_{\rm SE}$.
We want to perform an experiment on $\rm S$, yielding an outcome from an outcome set $\mathcal X$ and we assume that for every $x \in \mathcal X$ we know the corresponding generator $\mathcal{A}_x(t)$ of CP dynamics on $\rm S$. That is $\sum_{x \in \mathcal X}\mathcal{A}_x(t)$ is a CPTP map. The associated probability density of obtaining the experimental outcome $x \in \mathcal{X}$ is given by
 \begin{equation}
    p(x | \mathcal{A}) \coloneqq \tr_{\rm SE} \left[\mathcal{T}\left(e^{\int_0^T {\rm d}t\,\bigl[\mathcal{L}_{\rm SE}(t) + \mathcal{A}_x(t)\bigr]}\right)(\rho_{\rm SE})\right],
\end{equation}
where $\mathcal{T}$ is the time-ordering operator. 
The goal of the continuous process tensor framework is to compute the associated probability density as a contraction between the \emph{continuous process tensor}, which depends on $\rho_{\rm SE}$ and $\mathcal{L}_{\rm SE}$, and the \emph{continuous instrument}, which depends on $\mathcal{A}_x(t)$.
To achieve this separation, we switch to the Liouville space and use the orthonormal generalised Pauli basis $\{\mathbb{P}_\nu\}_{\nu = 0, \dots, d_{\rm S}^4 - 1}$ for operators on its system part, such that $\mathcal{A}_x(t) = \sum_{\nu = 0}^{d_{\rm S}^4 - 1} c_\nu^x(t) \mathbb{P}_\nu$, see the accompanying work~\cite{wassnerOperationalContinuumLimit2026}.
Now, the separation can be achieved through the Dyson series as
\begin{align}
    \label{eq:dyson}
    p(x|\mathcal{A}) &= \sum_{n=0}^\infty \int_0^T {\rm D}^n \bm{t} \bbra{\id_{\rm SE}} \Lambda(T, t_n) \mathcal{A}_x(t_n) \Lambda(t_n, t_{n-1}) \dots \mathcal{A}_x(t_1) \Lambda(t_1, 0) \kket{\rho_{\rm SE}} \nonumber\\ 
\begin{split}
    &= \sum_{n=0}^\infty \int_0^T {\rm D}^n \bm{t} \sum_{\bm{\nu} \in \{1, \dots, d_{\rm S}^4-1\}^n} e^{\int_0^T {\rm d}t\,c_0^x(t)} c_{\nu_n}^x(t_n) \dots c_{\nu_1}^x(t_1) \bbra{\id_{\rm SE}} \Lambda(T, t_n) \id_{\rm E} \otimes \mathbb{P}_{\nu_n} \Lambda(t_n, t_{n-1}) \dots \\
    &\hspace{3cm}\dots \Lambda(t_2, t_1) \id_{\rm E} \otimes \mathbb{P}_{\nu_1} \Lambda(t_1, 0)\kket{\rho_{\rm SE}}
\end{split}
\end{align}
where $\int_0^T {\rm D}^n\bm{t}$ is the integral over $\bm{t} \in [0,T]^n$ with the time-ordering restriction $t_1<\dots<t_n$, $\Lambda(t',t)=\mathcal T e^{\int_t^{t'}{\rm d}s\,\mathcal L_{\rm SE}(s)}$, and we label vectors in the Liouville space with the double bra-ket notation.
In the following we will suppress $\id_{\rm E}$ when it is clear from the context.
Note that we use the notation $\mathcal{L}_{\rm SE}$ both for the GKSL operator and for its vectorisation, but which one is meant should be clear from the context.

Notice that the expression~\cref{eq:dyson} is an inner product in the space $\Gamma_T^{(d_{\rm S}^4 - 1)}$, the $(d_{\rm S}^4 - 1)$-species bosonic Fock space of the $L^2$ functions on $[0, T]$, where one of the vectors becomes the cPT and the other represents the instrument (see Definition 4 in the main text).
Focusing on the cPTs here, this approach encompasses all processes that have a generator.
In fact, these are exactly the processes that are described by a PCR vector.
But, as we have seen in the main text, the space $\Span(\mathsf{PCR})$ is not complete under the Fock space norm.
Hence, to obtain a Hilbert space for the cPTs, we complete the space under $\|\cdot \|_\Gamma$, which is the starting point of the theory.
The Dyson series illustrates that the $L^2$ inner product is \emph{the} natural inner product on cPTs and hence completing under the $L^2$ Fock norm is \emph{the} natural completion of the space.
However, we can complete the space under other norms, such as the Sobolev norms, which is, for instance, useful to obtain the Gelfand triple construction described in the main text, allowing us to make certain useful instruments rigorous.

\section{Bounds on norms of cPT wavefunctions}
\label{sec:bounds_on_norms_cPT_wavefunctions}

The $\Gamma$ norm can be written as a sum of time-ordered $L^2$ norms of the wavefunctions as
\begin{equation}
    \|\cPT\|_\Gamma = \sqrt{\sum_{n=0}^\infty \int_0^T {\rm D}^n\bm{t} \sum_{\bm{\nu}} | \Upsilon^{(n)}_{\bm{\nu}}(\bm{t})|^2} = \sqrt{\sum_{n=0}^\infty \sum_{\bm{\nu}} \|\Upsilon^{(n)}_{\bm{\nu}}\|^2_2}
\end{equation}
In this appendix we will obtain a general upper bound on each $\|\Upsilon^{(n)}_{\bm{\nu}}\|_2$.

Using the variational definition of the time-ordered $L^2$ norm we have
\begin{equation}
    \|\Upsilon^{(n)}_{\bm{\nu}}\|_2 =  \sup_{\substack{f\in L^2(\Delta_n): \\ \|f\|_2 = 1}} \left|\int_0^T {\rm D}^n\bm{t} \ f(\bm{t}) \Upsilon^{(n)}_{\bm{\nu}}(\bm{t})\right|.
\end{equation}
We can write any function $f\in L^2(\Delta_n)$ as
\begin{equation}
    f(\bm{t}) = \frac1\pi \int_0^{2\pi} {\rm d}\theta \ \e^{i\theta} |f(\bm{t})|\left[1 + \cos(\theta - \phi(\bm{t}))\right] \eqqcolon \int_0^{2\pi} {\rm d}\theta \ \e^{i\theta} f^\theta(\bm{t}),
\end{equation}
where $\phi(\bm{t})$ is the complex argument of $f(\bm{t})$.
This identity can be verified by a direct calculation using the orthogonality properties of $\sin$ and $\cos$.
The crucial observation is that $f^\theta(\bm{t}) \ge 0$ for all $\bm{t} \in \Delta_n$.
Therefore, we can write
\begin{equation}
    \|\Upsilon_{\bm{\nu}}^{(n)}\|_2 = \sup_{\substack{f\in L^2(\Delta_n):\\\|f\|_2 = 1}} \left|\int_0^{2\pi} {\rm d}\theta \ \e^{i\theta}\int_0^T {\rm D}^n\bm{t} \ f^\theta(\bm{t}) \Upsilon^{(n)}_{\bm{\nu}}(\bm{t})\right|.
\end{equation}
Since $f^\theta(\bm{t}) \ge 0$, whenever $\int_{\Delta_n} {\rm d} \bm{t} f^\theta(\bm{t})>0$, the definition of a cPT implies that
\begin{equation}
    \Upsilon^{f^\theta}_{\bm{\nu}} \coloneqq \frac1{\int_0^T {\rm D}^n\bm{t'} \ f^\theta(\bm{t'})}\int_0^T {\rm D}^n\bm{t} \ f^\theta(\bm{t})\Upsilon_{\bm{\nu}}^{(n)}(\bm{t})
\end{equation}
is an element of a valid $n$-step process tensor.
Therefore, we can dilate this object~\cite{pollock2018NonMarkovianQuantumProcesses} and write
\begin{equation}
    \|\Upsilon_{\bm{\nu}}^{(n)}\|_2 = \sup_{\substack{f\in L^2(\Delta_n):\\\|f\|_2 = 1}} \left|\int_0^{2\pi}{\rm d}\theta \ e^{i\theta} \bbra{\id_{\rm SE}} \mathbb{P}_{\nu_n} U_{n-1}\dots U_1 \mathbb{P}_{\nu_1} \kket{\rho} \int_0^T {\rm D}^n\bm{t} f^\theta(\bm{t})\right|,
\end{equation}
where the $U_i$ and $\rho$ depend in general on $f$ and $\theta$, and the $U_i$ are vectorised unitary channels on a $d_{\rm S}^{2n}$ dimensional quantum system and $\rho$ is a quantum state.
Using the definition of the operator norm we can upper bound 
\begin{equation}
    |\bbra{\id_{\rm SE}} \mathbb{P}_{\nu_n} U_{n-1}\dots U_1 \mathbb{P}_{\nu_1} \kket{\rho}| \le \|\bbra{\id_{\rm SE}}\| \|\mathbb{P}_{\nu_1}\|_\infty \dots \|\mathbb{P}_{\nu_n}\|_\infty \|\kket{\rho}\| \le d_{\rm S}^{n}\frac1{d_{\rm S}^n} = 1.
\end{equation}
Using the Cauchy--Schwarz inequality, we get
\begin{equation}
    \int_0^T {\rm D}^n\bm{t} \ f^\theta(\bm{t}) \le \frac2\pi \int_0^T {\rm D}^n\bm{t} |f(\bm{t})| \le \frac2\pi \|1\|_2 \|f\|_2 = \frac2\pi \sqrt{\frac{T^n}{n!}}
\end{equation}
and hence
\begin{align}
    \|\Upsilon^{(n)}_{\bm{\nu}}\|_2 \le 4 \sqrt{\frac{T^n}{n!}}
    \label{eq:wavefunction_bound_cPT}.
\end{align}
From the normalisation condition $\langle \Omega \cPT = 1$ we hence get
\begin{equation}
    1 \le \|\cPT\|_\Gamma = \sqrt{\sum_{n=0}^\infty \sum_{\bm{\nu}} \|\Upsilon_{\bm{\nu}}^{(n)}\|_2^2 } \le \sqrt{\sum_{n=0}^\infty 16 (d_{\rm S}^4 - 1)^n \frac{T^n}{n!}} = 4 \e^{T(d_{\rm S}^4 - 1)/2}.
\end{equation}

\section{Example of a cPT not exactly representable as a PCR}
In the main text, we briefly described a motivating example of a cPT $\cPT$ satisfying all the criteria of physicality, but for which no $\ket{\Phi}\in\mathsf{PCR}$ could be exactly $L^2$-equivalent. The purpose of this example is twofold: first, we wish to emphasise that the approximation error in our main theorem is necessary; it is not through lack of imagination that a PCR cannot be found for every cPT, the former is indeed strictly contained within the latter. Second, this example is geared towards the essential distinction between these two sets, which is (non-)continuity of the wavefunction amplitudes. To this effect, we exploit properties of Brownian motion, whose sign-transformed increments exhibit the type of discontinuity that we are after. Although the example is somewhat contrived, this is a matter of exposition and the stochastic dynamics from which it stems appear as a valid mathematical model in many physically relevant settings~\cite{benedetti2013dynamics, udupa2026performance}.

To begin, let $B(t)$ be the continuous-time self-similar stochastic process Brownian motion. This has zero expectation for all $t\in[0,T]$ and the covariance function over paths $\omega$
\begin{equation}\label{eq:fbm-cov}
    \operatorname{cov}[B(t), B(s)] = \frac12(|t|+ |s| - |t-s|).
\end{equation}
This process has stationary increments, which is to say that the distribution of $B(t+s) - B(s)$ depends only on $t$.

We define the cPT $\cPT$ as follows. On the interval $[0,T]$ pick a fixed $\tau\in (0,T)$. Draw a path $\omega\sim B$ and let $B(t,\omega)$ be its value at some time $t$. Next, let $\sigma_t(\omega)$ be an indicator function for each $t$, relative to $\tau$. We define it as
\begin{equation}
    \sigma_t(\omega) = \begin{cases}
        1 & B(t,\omega) > B(\tau,\omega),\\
        0 & \rm{otherwise}.
    \end{cases}
\end{equation}
From this indicator we define the process as follows: for every $t$, check if $\sigma_t(\omega)$ has changed from its initial value $\sigma_0(\omega)$; if yes, apply an $X$ gate to the system, and if no apply the identity. So the path-dependent propagator $\mathcal{U}_\omega(t,s) = U_\omega(t,s) \otimes U^\ast _\omega(t,s)$ with $U_\omega(t,s) = X^{\sigma_t(\omega) \oplus \sigma_s(\omega)}$. That is, conjugate the state by $X$ if and only if the signal changes from $s$ to $t$. Hence, every finite process tensor element is defined by this propagator averaged over paths:
\begin{equation}\label{eq:fbm-wavefunctions}
    \Upsilon_{\bm \nu}^{(k)}(\bm{t}) = \mathbb{E}_\omega \Tr[\mathcal{U}_{\omega}(T,t_k)\circ \mathbb{P}_{\nu_k}\circ \mathcal{U}_{\omega}(t_k,t_{k-1})\circ \mathbb{P}_{\nu_{k-1}}\circ \cdots \circ \mathbb{P}_{\nu_1}\circ \mathcal{U}_{\omega}(t_1,0)[\rho_0]].
\end{equation}
For simplicity, we let $\rho_0=\id/2$ so that the initial propagator can be safely ignored.

Now, to show separation from $\mathsf{PCR}$, it suffices to show some discontinuity in the wavefunctions which can be made visible across sets of positive measure. In the simplest instance, this can be seen in the two-step process tensors. Concretely, for $\Upsilon_{\nu_1,\nu_2}^{(2)}(t_1,t_2)$, let $\mathbb{P}_{\nu_1}=\mathbb{P}_{\nu_2} = \frac12(Z\otimes \id)$. Then~\cref{eq:fbm-wavefunctions} reduces to 
\begin{equation}\label{eq:fbm-2-step}
\begin{split}
    \Upsilon_{(3,3)}^{(2)}(t_1,t_2) &= \frac18 \mathbb{E}_\omega\Tr[ZX^{\sigma_{t_2}(\omega)\oplus \sigma_{t_1}(\omega)}ZX^{\sigma_{t_2}(\omega)\oplus \sigma_{t_1}(\omega)}]
    = \frac14\mathbb{E}_\omega\left[(-1)^{\sigma_{t_1}(\omega)\oplus\sigma_{t_2}(\omega)}\right].
    \end{split}
\end{equation}
In evaluating this expression, we shift coordinates to be relative to $\tau$, introducing $u = t_1-\tau$ and $v=t_2-\tau$. Since $\sigma_t(\omega)$ is just related to the sign of the increments, we define the random variable $\Delta_u := B(u+\tau)-B(\tau)$ and observe that $(-1)^{\sigma_{t_1}\oplus \sigma_{t_2}} = \rm{sgn}(\Delta_u)\rm{sgn}(\Delta_v)$. Then~\cref{eq:fbm-2-step} becomes
\begin{equation}\label{eq:fbm-2-step-sign}
    \Upsilon_{(3,3)}^{(2)}(t_1,t_2) = \frac14 \mathbb{E}_\omega[\rm{sgn}(\Delta_u)\rm{sgn}(\Delta_v)].
\end{equation}
Both $\Delta_u$ and $\Delta_v$ have zero mean and are jointly Gaussian with covariance given by~\cref{eq:fbm-cov}. Thus, their correlation coefficient $\varrho(u,v):=\rm{cov}[\Delta_u,\Delta_v]/(\sigma_{\Delta_u}\sigma_{\Delta_v})$ is equal to $\frac12(|u| + |v| - |u-v|)/(\sqrt{|u||v|})$. Applying a standard Gaussian identity to the above, it then follows that for $u,v\neq 0$ we have
\begin{equation}
    \Upsilon_{(3,3)}^{(2)}(t_1,t_2) = \frac{1}{2\pi}\arcsin(\varrho(u,v)) = \frac{1}{2\pi}\arcsin(\varrho(t_1-\tau,t_2-\tau)).
\end{equation}
The scale invariance of Brownian motion increments means that $\varrho(\lambda u,\lambda v)= \varrho(u,v)$ for all $\lambda > 0$. Thus, for any fixed $(u_0,v_0)$, the ray $\{(\lambda u_0,\lambda v_0)\mid\lambda>0\}$ has constant correlation $\varrho(u_0,v_0)$ terminating at $(u,v)=(0,0)$ or, equivalently, $(t_1,t_2)=(\tau,\tau)$. For example, $u_0=-1$ and $v_0=\pm\tfrac12$ give two constant-valued rays whose respective two-step process-tensor coefficients differ, namely
\begin{equation}
    a_- \coloneqq\lim_{\lambda\to 0} \Upsilon_{(3,3)}^{(2)}\left(\tau - \lambda,\tau - \frac{\lambda}{2}\right) \neq \lim_{\lambda\to 0} \Upsilon_{(3,3)}^{(2)}\left(\tau - \lambda, \tau + \frac{\lambda}{2}\right)\coloneqq a_+,\qquad a_-=\tfrac18,\quad a_+=0.
\end{equation}
To see that there can be no PCR representative in the $L^2$ class of $\cPT$, one can grow these constant-valued rays to narrow cones which have positive two-dimensional Lebesgue measure. Consider $C_+$ and $C_-$ to be sufficiently narrow cones around the above rays. For any sufficiently small $\epsilon>0$, the cones may be chosen such that 
\begin{equation}
    |\Upsilon^{(2)}_{(3,3)}(\tau+u,\tau+v) - a_\pm| < \epsilon~\forall~(u,v)\in C_\pm,\quad \text{with}~~|a_+-a_-| > 4\epsilon.
\end{equation}
Next, suppose that there were a function $g(u,v)$, continuous up to $(0,0)$ on the closed ordered domain, which agreed with $\Upsilon_{(3,3)}^{(2)}(\tau+u,\tau+v)$ for almost every $(u,v)$. By continuity of $g$ at the origin, there exists $r>0$ such that $|g(u,v) - g(0,0)| < \epsilon$ everywhere in $B_r(0)$ -- the ball of radius $r$ centred around $(0,0)$. Since $g(u,v) = \Upsilon_{(3,3)}^{(2)}(\tau+u,\tau+v)$ almost everywhere, and since each $C_\pm\cap B_r(0)$ has positive measure, there exist points in both cones at which the two functions agree. It follows that $|g(0,0) - a_\pm| < 2\epsilon$. By the triangle inequality, this implies however that $|a_+ - a_-| < 4\epsilon$, contradicting our initial choice of $\epsilon$. We conclude therefore that $\cPT$ admits no representative in its $L^2$ equivalence class that extends continuously to $(t_1,t_2)=(\tau,\tau)$, and hence that $\cPT \not\in \mathsf{PCR}$. 

\section{Examples of continuous instruments and their norms}\label{sec:instruments}

In the companion work \cite{wassnerOperationalContinuumLimit2026}, several examples of continuous instruments were presented, in particular for the scenarios of coherent control, weak measurements in the jump and diffusive regime and controlled coherent operations.
The effects of these instruments  were derived by computing the overlaps with PCRs, which are easy to interpret as quantum dynamics.
Density of the set of PCRs in the set of cPTs, the main result of this work, implies that this consideration is enough to guarantee the correct action of these effects on all cPTs, provided the corresponding instrument norm is finite.
In this appendix we formalise these instruments in terms of Definition 4 from the main text: we specify the associated measure space $\mathcal X = (X, \mathcal{F}, \mu)$ and compute an upper bound on the associated instrument norm.
We will see that the norm is larger when the instrument requires more energy to implement. 
The diffusive measurement is a special case: as the idealised limit which requires an infinite-energy local oscillator, it has an infinite instrument norm. However, it is a well-defined map from the cPTs to $L^1$ functions on the associated measure space. 

\subsection{Coherent control}

The coherent control instrument $\mathcal{I}_\text{coherent}$ has a single effect $\bra{\Phi_T[c_0, \{c_\nu\}, \kket{\rho_{\rm A}}\!\bbra{\id_{\rm A}}]}$, where $\rm A$ is an auxiliary system, $\rho_{\rm A}$ is a quantum state and $c_\nu = \tr_{\rm S}[\mathbb{P}_\nu \mathcal{L}_{\rm SA}]^*$, where $\mathcal{L}_{\rm SA}$ is the generator of the control dynamics implemented on $\rm SA$.
This is an example of a deterministic instrument with no associated outcome.
Hence, it is associated to the trivial measure space $\mathcal{X}$ with $X = \{0\}$, $\mathcal{F} = \{\emptyset,X\}$ and $\mu(\emptyset) = 0$ while $\mu(X) = 1$.
As discussed in the companion work~\cite{wassnerOperationalContinuumLimit2026}, this instrument satisfies $\braket{\mathcal{I}_\text{coherent}|\Upsilon} = 1$ for all $\ket{\Upsilon} \in \mathsf{PCR}$, and since PCRs are dense in the cPTs, this is in fact the case for all $\ket{\Upsilon} \in \mathsf{cPT}$, showing that the coherent control satisfies the definition of an instrument in the main text.
We will now upper bound its instrument norm.
Consider
\begin{align}
    \|\mathcal{I}_\text{coherent}\|_{\Gamma\rightarrow L^1(\mathcal{X})}^2 
    &= \|\ket{\Phi_T[c_0, \{c_\nu\}, \kket{\rho_{\rm A}}\!\bbra{\id_{\rm A}}]}\|_\Gamma^2 \nonumber \\
    &= \bbra{\id_{\rm A}} \otimes \bbra{\id_{\rm A'}} \mathcal{T} \e^{\int_0^T {\rm d}t\, c_0 \otimes \id + \id \otimes c_0^* + \sum_\nu c_\nu \otimes c_\nu^*} \kket{\rho_{\rm A}} \otimes \kket{\rho_{\rm A'}} \nonumber \\
    &\le d_{\rm A} \e^{T \sup_{t\in[0, T]} \|c_0 \otimes \id + \id \otimes c_0^* + \sum_\nu c_\nu \otimes c_\nu^*\|_\infty},
\end{align}
where we label the backward contour using a prime $'$.
Here, we have used
\begin{equation}
    \|\mathcal{T} e^{\int_0^T A(t) {\rm d}t}\|_\infty = \|\sum_{n=0}^\infty \int {\rm D}^n\bm{t} A(t_n)\dots A(t_1)\|_\infty \le \sum_{n=0}^\infty \frac{T^n}{n!} \left(\sup_{t\in[0,T]}\|A(t)\|_\infty\right)^n = \e^{T \sup_{t \in [0,T]} \|A(t)\|_\infty}
\end{equation}
and the cMPS overlap formula~\cite{haegemanCalculusContinuousMatrix2013}.
Now notice
\begin{equation}\label{eq:sum of paulis}
    \sum_{\nu} \mathbb{P}_\nu \otimes \mathbb{P}_\nu^* = \left(\ket{\id_{\rm SS'}}\!\bra{\id_{\rm SS'}} - \frac1{d_{\rm S}^2} \id_{\rm S} \otimes \id_{\rm S'}\right) \otimes \id_{\rm A} \otimes \id_{\rm A'},
\end{equation}
which implies
\begin{align}
    \left\|\sum_{\nu} c_{\nu} \otimes c_{\nu}^* \right\|_\infty &= \left\|\bra{\id_{\rm SS'}}\mathcal{L}_{\rm SA} \otimes \mathcal{L}_{\rm S'A'} \ket{\id_{\rm SS'}} - \frac1{d_{\rm S}^2} c_0\otimes c_0^*\right\|_\infty \nonumber\\
                                                                &\le (d_{\rm S} + 1) \|\mathcal{L}_{\rm SA}\|_\infty^2.
\end{align}
Finally, we get
\begin{equation}
    \sup_{t\in[0, T]} \|c_0 \otimes \id + \id \otimes c_0^* + \sum_\nu c_\nu \otimes c_\nu^*\|_\infty \le 2 \|\mathcal{L}_{\rm SA}\|_\infty + (d_{\rm S} + 1) \|\mathcal{L}_{\rm SA}\|_\infty^2
\end{equation}
and hence
\begin{equation}
    \|\mathcal{I}_{\text{coherent}}\|_{\Gamma \rightarrow L^1(\mathcal{X})} \le \sqrt{d_{\rm A}} \e^{T\left((d_{\rm S} + 1)\|\mathcal{L}_{\rm SA}\|_\infty^2 + 2\|\mathcal{L}_{\rm SA}\|_\infty \right)},
\end{equation}
where $\|\mathcal{L}_{\rm SA}\|_\infty \coloneqq \sup_{t\in[0,T]} \|\mathcal{L}_{\rm SA}(t)\|_\infty$.
Hence, the upper bound on the instrument norm is determined by the operator norm of the vectorised Lindbladian, which depends on the energy needed to implement it.

It may seem that the units in the exponent do not align.
This is due to the fact that we are working in natural units $\hbar=1$. In fact, $\frac\hbar{c_0}$ and $\frac{\hbar}{c_\nu^2}$ both have units of time.

\subsection{Jump measurement}

The outcomes of jump measurements are sequences of times when a particular outcome, a ``click'', has been observed.
A prototypical example is a photon-counting experiment, where the click is the arrival of a photon.
For an introduction to jump measurements see~\cite{wisemanQuantumMeasurementControl2014,guilminParametersEstimationFitting2024a,rosal2025deterministicequationsfeedbackcontrol,gardiner2004quantum}.
Often, a click is associated with a type (e.g. polarisation of the photon) from some type alphabet $\Omega$, such that the outcome of the measurement is a sequence of times $t_i$ along with a list of types $x_i\in\Omega$ of clicks, such that a click of type $x_i$ has been observed at the time $t_i$.

To define the measure space for the jump measurement, consider for each non-negative integer $n$ the space
\begin{equation}
    X_n \coloneqq \Delta_n \times \Omega^n,
\end{equation}
and let $X \coloneqq \bigsqcup_{n=0}^{\infty} X_n$.
The $\sigma$-algebra $\mathcal F_n$ on $X_n$ is the product of the Borel $\sigma$-algebra on $\Delta_n$ and the discrete $\sigma$-algebra on $\Omega^n$. The $\sigma$-algebra on $X$ is $\mathcal F\coloneqq\{E\subseteq X:E\cap X_n\in\mathcal F_n\text{ for every }n\}$.
The reference measure is the Poisson measure, induced by the Lebesgue measure on $\Delta_n$.

Each jump type $x \in \Omega$ is associated with a jump operator $L_x$ acting on $\mathcal H_{\rm S}$.
The collection $\{L_x\}_{x\in\Omega}$ defines the measurement.
The effects of the associated instrument are the sums over $n$ time-ordered Lebesgue integrals of
\begin{equation}
    \sum_{\bm{\nu}\in(\{0\}\cup[q])^n} \bra{\Phi_T[c_0, \{c_\nu\}, 1]} \prod_{j\in [n]} d_{\nu_j}^{x_j} \xi_{\nu_j}(t_j),
\end{equation}
where $c_\nu \coloneqq -\frac12 \tr\left[\mathbb{P}_\nu \left(\sum_{x\in\Omega} L_x^\dag L_x \otimes \id + \id \otimes L_x^T L_x^*\right)\right]^\ast$ and $d_\nu^x \coloneqq \tr[\mathbb{P}_\nu L_x \otimes L_x^*]^\ast$.
We can compute the associated deterministic instrument
\begin{align}
    \bra{\mathcal{I}_\text{jump}(X)} 
    &= \sum_{n=0}^\infty \int_0^T {\rm D}^n \bm{t} \sum_{\bm{x}\in\Omega^n} \sum_{\bm{\nu}\in(\{0\}\cup[q])^n} \bra{\Phi_T[c_0, \{c_\nu\}, 1]} \prod_{j\in [n]} d_{\nu_j}^{x_j} \xi_{\nu_j}(t_j) \nonumber\\
    &= \sum_{n=0}^\infty \int_0^T {\rm D}^n \bm{t} \ \e^{T\sum_{x\in\Omega}d_0^x} \sum_{\bm{x} \in \Omega^n}\sum_{\bm{\nu}\in[q]^n} \bra{\Phi_T[c_0, \{c_\nu\}, 1]} \prod_{j\in[n]} d_{\nu_j}^{x_j} \psi_{\nu_j}(t_j) \nonumber\\
    &= \Bra{\Phi_T\left[c_0 + \sum_{x\in\Omega} d_0^x, \{c_\nu + \sum_{x\in\Omega} d_\nu^x\}, 1\right]} \label{eq:deter_instr_jump},
\end{align}
as can be verified by computing the wavefunctions of the resulting cMPS~\cite{haegemanCalculusContinuousMatrix2013}.
We recognise this cMPS from the previous section as the effect of the coherent control with no auxiliary system, which is implementing the Lindbladian
\begin{equation}
    \mathcal{L}_{\rm S} = \sum_{x \in \Omega} \Bigl[L_x \otimes L_x^* - \frac12\left(L_x^\dag L_x \otimes \id_{\rm S'} + \id_{\rm S}\otimes L_x^TL_x^*\right)\Bigr],
\end{equation}
which is the usual backaction associated to this jump measurement.
We have seen in the previous section that coherent control is an example of a deterministic instrument, having unit overlap with all cPTs.
This verifies that the jump instrument as defined above is correctly normalised.

We will now upper bound the instrument norm for the jump process.
Consider
\begin{align}
    \|\mathcal{I}_{\text{jump}} \|_{\Gamma\rightarrow L^1(\mathcal{X})} &= \sup_{\substack{\ket{\Upsilon} \in \Gamma:\\\|\ket{\Upsilon} \|_\Gamma \leq 1}}
    \sum_{n=0}^\infty \int_0^T {\rm D}^n \bm{t} \sum_{\bm{x} \in \Omega^n} 
    \left| \sum_{\bm{\nu}\in(\{0\}\cup[q])^n} \braket{\Phi_T[c_0, \{c_\nu\}, 1] | \prod_{j \in[n]} d^{x_j}_{\nu_j} \xi_{\nu_j}(t_j) | \Upsilon} \right| \nonumber\\
    &= \sup_{f: \|f\|_\infty = 1}\sup_{\substack{\ket{\Upsilon}\in\Gamma: \\\|\ket{\Upsilon}\|_\Gamma = 1} }\Bigl|
    \sum_{n=0}^\infty \int_0^T {\rm D}^n \bm{t} \sum_{\bm{x} \in \Omega^n} 
    f^{(n)}_{\bm{x}}(\bm{t})\sum_{\bm{\nu}\in(\{0\}\cup[q])^n} \braket{\Phi_T[c_0, \{c_\nu\}, 1] | \prod_{j \in[n]} d^{x_j}_{\nu_j} \xi_{\nu_j}(t_j) | \Upsilon}\Bigr| \nonumber\\
    &= \sup_{f: \|f\|_\infty = 1}\left\| \sum_{n=0}^\infty \int_0^T {\rm D}^n \bm{t} \sum_{\bm{x} \in \Omega^n} 
    f^{(n)}_{\bm{x}}(\bm{t})\sum_{\bm{\nu}\in(\{0\}\cup[q])^n} 
    \bra{\Phi_T[c_0, \{c_\nu\}, 1]} \prod_{j \in[n]} d^{x_j}_{\nu_j} \xi_{\nu_j}(t_j) \right\|_\Gamma \nonumber\\
    &\le \sup_{f: \|f\|_\infty = 1} \sum_{n=0}^\infty \sum_{m=0}^\infty
    \left\| \int_0^T {\rm D}^n\bm{t} \sum_{\bm{\nu}\in(\{0\}\cup[q])^n} \sum_{\bm{x}\in\Omega^n} \left(f_{\bm{x}}^{(n)}(\bm{t})\right)^*
    \prod_{j\in[n]} \left(d_{\nu_j}^{x_j}\right)^* \xi_{\nu_j}^\dag(t_j) \ket{\Phi_m}\right\|_\Gamma,
\end{align}
where $\ket{\Phi_m}$ is the $m$-particle component of $\ket{\Phi[c_0, \{c_\nu\}, 1]}$.
Direct computation shows that 
\begin{equation}
    \|\ket{\Phi_m}\|_\Gamma = \sqrt{\frac{T^m}{m!}} \e^{T\Re(c_0)} \|\bm c_\perp\|_2^m,
\end{equation}
where $\|\bm c_\perp \|_2^2 \coloneqq \sum_{\nu\in[q]} |c_\nu|^2$.
We now define
\begin{equation}
    \ket{\Phi_{n,m}} \coloneqq 
    \int_0^T {\rm D}^n\bm{t} \sum_{\bm{\nu}\in(\{0\}\cup[q])^n} \sum_{\bm{x}\in\Omega^n} \left(f_{\bm{x}}^{(n)}(\bm{t})\right)^*
    \prod_{j\in[n]} \left(d_{\nu_j}^{x_j}\right)^* \xi_{\nu_j}^\dag(t_j) \ket{\Phi_m}.
\end{equation}
The vector $\ket{\Phi_{n,m}}$ has support only in sectors with $m+k$ particles for $0\le k \le n$.
Let $\ket{\Phi_{n,m}^k}$ be the $m+k$-particle component of $\ket{\Phi_{n,m}}$ and write
\begin{equation}
    \|\mathcal{I}_\text{jump}\|_{\Gamma \rightarrow L^1(\mathcal{X})} \le 
    \sup_{f:\|f\|_\infty = 1} \sum_{n=0}^\infty \sum_{m=0}^\infty \|\ket{\Phi_{n,m}}\|_\Gamma \le
    \sup_{f:\|f\|_\infty = 1} \sum_{n=0}^\infty \sum_{m=0}^\infty \sum_{k=0}^n \|\ket{\Phi_{n,m}^k}\|_\Gamma.
\end{equation}
Note that the vectors $\ket{\Phi_{n,m}}$ depend on the function $f$.
We will now upper bound the norm of $\ket{\Phi_{n,m}^k}$.

Without loss of generality we can restrict each $f^{(n)}_{\bm{x}}(\bm{t})$ to be symmetric under simultaneous permutations of the time--outcome pairs $(t_i,x_i)$.
Hence, labelling $\Phi^{(m)}$ the wavefunction of $\ket{\Phi_{m}}$, we have
\begin{align}
    \ket{\Phi_{n,m}^k} 
    &= \Pi_{k+m} \int_0^T {\rm D}^n\bm{t} \sum_{\bm{\nu}\in(\{0\}\cup[q])^n} \sum_{\bm{x}\in\Omega^n} \left(f_{\bm{x}}^{(n)}(\bm{t})\right)^*
    \prod_{j\in[n]} \left(d_{\nu_j}^{x_j}\right)^* \xi_{\nu_j}^\dag(t_j)\ket{\Phi_{m}} \nonumber\\
    &= \Pi_{k+m} \frac1{n!} \int_0^T {\rm d}^n\bm{t} \sum_{\bm{\nu} \in (\{0\}\cup [q])^n} \sum_{\bm{x}\in\Omega^n} \left(f_{\bm{x}}^{(n)}(\bm{t})\right)^*
    \prod_{j\in[n]} \left(d_{\nu_j}^{x_j}\right)^* \xi_{\nu_j}^\dag(t_j)\ket{\Phi_{m}} \nonumber\\
    &= \frac1{n!} \binom{n}{k} \int_0^T {\rm d}^{k}\bm{t} \sum_{\bm{\nu}\in[q]^{k}} \sum_{\bm{x}\in\Omega^{k}}
    (n-k)! g_{\bm{x}}^{(n,k)}(\bm{t}) \prod_{j\in[k]} \left(d_{\nu_j}^{x_j}\right)^* \psi_{\nu_j}^\dag(t_j)\ket{\Phi_{m}} \nonumber\\
    &= \int_0^T {\rm D}^k \bm{t} \sum_{\bm{\nu}\in[q]^{k}} \sum_{\bm{x}\in\Omega^{k}}
    g_{\bm{x}}^{(n,k)}(\bm{t}) \prod_{j\in[k]} \left(d_{\nu_j}^{x_j}\right)^* \psi_{\nu_j}^\dag(t_j)
    \int_0^T {\rm D}^m \bm{t'} \sum_{\bm{\nu'}\in[q]^m} \Phi^{(m)}_{\bm{\nu'}}(\bm{t'}) \prod_{i\in[m]} \psi^\dag_{\nu'_i}(t'_i)\ket{\Omega} \nonumber\\
    &= \frac1{(k+m)!} \int_0^T {\rm d}^{k+m} \bm{t} \sum_{\substack{\bm{s}\in\{0,1\}^{k+m} \\ \|\bm{s}\| = k}} \sum_{\bm{\nu}\in[q]^{k+m}} \sum_{\bm{x}\in\Omega^{k}}
    g_{\bm{x}}^{(n,k)}(\bm{t}_{\bm{s}=1}) \Phi^{(m)}_{\bm{\nu}_{\bm{s}=0}}(\bm{t}_{\bm{s}=0}) 
    \prod_{j\in[k]} \left(d_{(\nu_{\bm{s}=1})_j}^{x_j}\right)^* \prod_{i\in[k+m]} \psi_{\nu_i}^\dag(t_i)\ket{\Omega} \nonumber\\
    &= \int_0^T {\rm D}^{k+m} \bm{t} \sum_{\bm{\nu}\in[q]^{k+m}} \sum_{\substack{\bm{s}\in\{0,1\}^{k+m} \\ \|\bm{s}\| = k}} \sum_{\bm{x}\in\Omega^{k}}
    g_{\bm{x}}^{(n,k)}(\bm{t}_{\bm{s}=1}) \Phi^{(m)}_{\bm{\nu}_{\bm{s}=0}}(\bm{t}_{\bm{s}=0}) 
    \prod_{j\in[k]} \left(d_{(\nu_{\bm{s}=1})_j}^{x_j}\right)^* \prod_{i\in[k+m]} \psi_{\nu_i}^\dag(t_i)\ket{\Omega}
\end{align}
where $\Pi_{k+m}$ is the projector to the $k+m$-particle subspace, indexing a vector with $\bm{s}=0, 1$ throws away all the elements $j$ where $s_j=1,0$ respectively, and for $\bm{t}\in[0,T]^k$ and $\bm{x}\in\Omega^k$ we define the symmetric function
\begin{equation}
    g^{(n,k)}_{\bm{x}}(\bm{t}) \coloneqq \frac1{(n-k)!}\sum_{\bm{x'} \in \Omega^{n-k}} \int_0^T {\rm d}^{n-k}\bm{t'} \left(f^{(n)}_{\bm{x}, \bm{x'}}(\bm{t}, \bm{t'})\right)^* \prod_{j\in[n-k]} \left(d_0^{x'_j}\right)^*.
\end{equation}
For $\|f\|_\infty \le 1$ we have $\|g^{(n,k)}_{\bm{x}}\|_\infty \le \frac{T^{n-k}}{(n-k)!} |\sum_{x\in\Omega} d_0^x|^{n-k}$.
The norm of $\ket{\Phi^k_{n,m}}$ is hence upper bounded by
\begin{align}
    \|\ket{\Phi^k_{n,m}}\|_\Gamma 
    &= \sqrt{\sum_{\bm{\nu}\in[q]^{k+m}} \int_0^T {\rm D}^{k+m}\bm{t} \ \Big|\sum_{\substack{\bm{s}\in\{0,1\}^{k+m} \\ \|\bm{s}\| = k}} 
    \sum_{\bm{x}\in\Omega^{k}} g_{\bm{x}}^{(n,k)}(\bm{t}_{\bm{s}=1}) \Phi^{(m)}_{\bm{\nu}_{\bm{s}=0}}(\bm{t}_{\bm{s}=0}) 
    \prod_{j\in[k]} \left(d_{(\nu_{\bm{s}=1})_j}^{x_j}\right)^*\Big|^2} \nonumber\\
    &= \sqrt{\frac1{(k+m)!} \sum_{\bm{\nu}\in[q]^{k+m}} \int_0^T {\rm d}^{k+m}\bm{t} \ \Big|\sum_{\substack{\bm{s}\in\{0,1\}^{k+m} \\ \|\bm{s}\| = k}} 
    \sum_{\bm{x}\in\Omega^{k}} g_{\bm{x}}^{(n,k)}(\bm{t}_{\bm{s}=1}) \Phi^{(m)}_{\bm{\nu}_{\bm{s}=0}}(\bm{t}_{\bm{s}=0}) 
    \prod_{j\in[k]} \left(d_{(\nu_{\bm{s}=1})_j}^{x_j}\right)^*\Big|^2
    } \nonumber\\
    &\le \binom{k+m}{k} \sqrt{ \frac1{(k+m)!} \sum_{\bm{\nu}\in[q]^{k}}\sum_{\bm{\nu'}\in[q]^{m}} \int_0^T 
    {\rm d}^{k}\bm{t} \ {\rm d}^m\bm{t'} \ \left|\sum_{\bm{x}\in\Omega^k} g_{\bm{x}}^{(n,k)}(\bm{t}) 
    \prod_{j\in[k]} \left(d_{\nu_j}^{x_j}\right)^* \Phi^{(m)}_{\bm{\nu'}}(\bm{t'})\right|^2} \nonumber\\
    &\le \binom{k+m}{k} \frac{T^{n-k}}{(n-k)!} \left|\sum_{x\in\Omega}d_0^x\right|^{n-k} \bigl(\sum_{x\in\Omega}\|\bm d_\perp^x\|_2\bigr)^k
    \sqrt{ \frac{m!}{(k+m)!} T^k \sum_{\bm{\nu'}\in[q]^m} \int_0^T {\rm D}^m\bm{t'} \ |\Phi^{(m)}_{\bm{\nu'}}(\bm{t'})|^2} \nonumber\\
    &= \binom{k+m}{k} \frac{T^{n-\frac{k}2}}{(n-k)!} \sqrt{\frac{m!}{(k+m)!}} 
    \left|\sum_{x\in\Omega}d_0^x\right|^{n-k} \bigl(\sum_{x\in\Omega}\|\bm d_\perp^x\|_2\bigr)^k\|\ket{\Phi_m}\|_\Gamma \nonumber\\
    &\le \e^{T\Re(c_0)} \frac{T^{n+\frac{m-k}2}\sqrt{(k+m)!}}{k!m!(n-k)!} 
    \left|\sum_{x\in\Omega}d_0^x\right|^{n-k} \bigl(\sum_{x\in\Omega}\|\bm d_\perp^x\|_2\bigr)^k \|\bm c_\perp\|_2^m.
\end{align}
Finally, we obtain a bound on the instrument norm
\begin{align}
    \|\mathcal{I}_\text{jump}\|_{\Gamma \rightarrow L^1(\mathcal{X})} 
    &\le \sup_{f:\|f\|_\infty = 1} \sum_{n=0}^\infty \sum_{m=0}^\infty \sum_{k=0}^n \|\ket{\Phi_{n,m}^k}\|_\Gamma \nonumber\\
    &\le \sum_{n=0}^\infty \sum_{m=0}^\infty \sum_{k=0}^n \e^{T\Re(c_0)} \frac{T^{n+\frac{m-k}2}\sqrt{(k+m)!}}{k!m!(n-k)!} 
        \left|\sum_{x\in\Omega}d_0^x\right|^{n-k} \bigl(\sum_{x\in\Omega}\|\bm d_\perp^x\|_2\bigr)^k \|\bm c_\perp\|_2^m \nonumber\\
    &\le \e^{T\Re(c_0)} \sum_{n=0}^\infty \sum_{m=0}^\infty \frac{T^{n+\frac{m}2}\sqrt{(n+m)!}}{n!m!} \|\bm c_\perp\|_2^m 
        \left(\left|\sum_{x\in\Omega}d_0^x\right| + \frac{\bigl(\sum_{x\in\Omega}\|\bm d_\perp^x\|_2\bigr)}{\sqrt{T}}\right)^n \nonumber\\
    &= \e^{T\Re(c_0)} \sum_{p=0}^\infty \sum_{q=0}^p \frac{T^{\frac{p+q}2}}{\sqrt{p!}} \binom{p}{q} \|\bm c_\perp\|_2^{p-q}
        \left(\left|\sum_{x\in\Omega}d_0^x\right| + \frac{\bigl(\sum_{x\in\Omega}\|\bm d_\perp^x\|_2\bigr)}{\sqrt{T}}\right)^q \nonumber\\
    &= \e^{T\Re(c_0)} \sum_{p=0}^\infty \frac{T^{\frac{p}2}}{\sqrt{p!}} 
        \left(\|\bm c_\perp\|_2 + \bigl(\sum_{x\in\Omega}\|\bm d_\perp^x\|_2\bigr) + \sqrt{T}\left|\sum_{x\in\Omega}d_0^x\right|\right)^p \nonumber\\
    &\le \sqrt{2} \exp\left[T\left(\Re(c_0) + \left(\|\bm c_\perp\|_2 + \bigl(\sum_{x\in\Omega}\|\bm d_\perp^x\|_2\bigr) + \sqrt{T}\left|\sum_{x\in\Omega}d_0^x\right|\right)^2\right)\right],
\end{align}
where in the final line we have used the Cauchy--Schwarz inequality: for any $a\ge0$
\begin{equation}
    \sum_{p=0}^\infty \frac{a^p}{\sqrt{p!}} 
    \le \sqrt{\sum_{p=0}^\infty \frac1{2^p}} \sqrt{\sum_{q=0}^\infty \frac{(2a^2)^q}{q!}} = \sqrt{2} \e^{a^2}.
\end{equation}
We can express the coefficients in terms of the jump operators using~\cref{eq:sum of paulis} to obtain
\begin{align}
\label{eq:c_d_intermsof_Loperators}
    c_0 &= - \tr \sum_{x\in\Omega} L_x^\dag L_x\\
    \|\bm c_\perp\|_2^2 &= \frac12 \left(d_{\rm S}\tr \left[\sum_{x\in\Omega} L_x^\dag L_x \sum_{x'\in\Omega} L_{x'}^\dag L_{x'}\right] - \left| \tr\sum_{x\in\Omega} L_x^\dag L_x\right|^2 \right)\\
    \left|\sum_{x\in\Omega} d_0^x\right| &= \frac1{d_{\rm S}} \sum_{x\in\Omega} |\tr L_x|^2\\
    \sum_{x\in\Omega} \|\bm d_\perp^x\|_2^2 &= \sum_{x\in\Omega} \|L_x\|_F^4 - \frac1{d_{\rm S}^2} |\tr L_x|^4
\end{align}

\subsection{Diffusive measurement}

Diffusive measurements output continuous functions $C([0, T])$, which can be real or complex. 
An example of this measurement is the measurement of photocurrents in homodyne and heterodyne detection \cite{barchielliQuantumTrajectoriesMeasurements2009,wisemanQuantumMeasurementControl2014}. 
Just like the jump measurements, they are associated with measurement operators $\{L_x\}_{x\in\Omega}$ on $\mathcal H_{\rm S}$ for a finite alphabet $\Omega$. 
As in the companion work \cite{wassnerOperationalContinuumLimit2026} we will consider here only the case of independent heterodyne detection in each mode corresponding to complex-valued outcome functions with independent complex Wiener reference noises.  
Hence, we have $X = C([0, T],\mathbb C)^{|\Omega|}$ and the $\sigma$-algebra is generated by all the cylinder sets
\begin{equation}
    \left\{c\in X:(c(t_1),\ldots,c(t_n))\in A\right\},
\end{equation}
where $n\ge1$, $0\le t_1<\cdots<t_n\le T$, and $A$ is a Borel subset of $(\mathbb C^{|\Omega|})^n$.
The measure $\mu_W$ is the product measure of $|\Omega|=:k$ complex Wiener measures. 
We denote the derivative in a distributional sense of the outcome functions $(Y_x )_x  \in X$ by $\vec{J} = (J_x )_x $. Under the complex Wiener measure $\mu$ the \textit{white noises} $J_x ,J_x ^\ast$, $x \in [k]$ are centred, jointly Gaussian, and obey 
\begin{equation}
    \label{eq:white_noise_prop}
    \mathbb E[J_x (t)J_y^\ast(s)]= \delta_{xy}\delta(t-s), \qquad  \mathbb E[J_x^\ast(t)J_y^\ast(s)] =  \mathbb E[J_x (t)J_y(s)]= 0.
\end{equation}

The diffusive measurement is an idealised limit of a physical measurement protocol, where $\rm S$ interacts continuously with an optical mode, which is then mixed with a local oscillator and heterodyne measurement is performed.
The diffusive limit is obtained by letting the strength of the local oscillator diverge.
Hence, strictly speaking, the diffusive measurement requires infinite energy and cannot be actually performed in the lab, despite being a useful approximation in many scenarios.
We will see in this section that while our framework is well capable of accommodating this instrument, its instrument norm diverges.
In other words, the infinite energy required to perform it gives this instrument unphysical distinguishing power in the sense of Eq. (8) in the main text.

The associated instrument effects are built from particle-sector integrals of \cite{wassnerOperationalContinuumLimit2026} 
\begin{equation}
\label{eq:diff_instrument_density}
    \bra{\mathcal I_{\text{diffusive}}(\vec{J})} = \bra{\Phi_T[c_0 +d_0+ \chi_0^{\vec{J}},\{c_{\nu} +d_\nu+ \chi^{\vec{J}}_{\nu}\}_{\nu},1]}
\end{equation}
with
\begin{align}
     c_\nu &\coloneqq - \tfrac{1}{2} \tr\Bigl[\mathbb{P}_\nu\bigl(\sum_{x \in [k]} L_x ^\dag L_x  \otimes \id + \id \otimes L_x ^T L_x ^*\bigr)\Bigr]^\ast \nonumber\\
     d_\nu&\coloneqq \sum_{x \in [k]} d_{\nu}^x = \tr\Bigl[\mathbb{P}_\nu\bigl(\sum_{x \in [k]}  L_x  \otimes L_x ^*\bigr)\Bigr]^\ast \nonumber \\
        \chi^{\vec{J}}_\nu(t) &\coloneqq \tr\left[\mathbb{P}_\nu \sum_{x \in [k]} \bigl(J^\ast_x (t) L_x \otimes \id + \id \otimes J_x (t)L^\ast_{x}\bigr)\right]^\ast.
\end{align}
We note that $ \bra{\mathcal I_{\text{diffusive}}(\vec{J})} = \bra{\Phi_T[c_0 +d_0+ \chi_0^{\vec{J}},\{c_{\nu} +d_\nu+ \chi^{\vec{J}}_{\nu}\}_{\nu},1]}$ is of cMPS form with linear functionals of the noise as its wavefunctions, so it is not a vector in $\Gamma$. All stochastic integrals encountered are of Itô type \cite{gardinerStochasticMethodsHandbook2009,meyerQuantumProbabilityProbabilists1995}. 

It is helpful to think of Itô stochastic integrals in terms of the \textit{Wick product} \cite{jansonGaussianHilbertSpaces1997,meyerQuantumProbabilityProbabilists1995}. The Wick product $:Y_1Y_2\cdots Y_n:$ of centred, jointly Gaussian random variables  $Y_1,Y_2,\cdots ,Y_n$ (here linear functionals of the noise) is the polynomial obtained from the ordinary product by deleting all self-contractions:
\begin{equation}
    \colon Y_1Y_2\cdots Y_n \colon  = \sum_P (-1)^{|P|}\prod_{\{i,j\}\in P}\mathbb E[Y_iY_j ]\prod_{i \notin \cup P} Y_i,
\end{equation}
where the sum runs over all sets $P$ of disjoint pairs in $[n]$ (including $P = \emptyset$). For $X, Y_1, \cdots,Y_n$ centred jointly Gaussian random variables and constants $a,a_1,\cdots,a_n \in \mathbb C$, the Wick product satisfies 
\begin{align}
 \colon (a + a_1Y_1 + \cdots + a_nY_n) \colon &= a + a_1\colon Y_1 \colon +\cdots +  a_n\colon Y_n \colon ,\label{eq:linearity_wick} \\
    \colon \e^{X}\prod_{i=1}^n(a_i+Y_i) \colon &= \e^{X-\tfrac{1}{2}\mathbb E [X^2]}\colon \prod_{i=1}^n(a_i-\mathbb E[XY_i]+Y_i) \colon ,\label{eq:wick_prod_exp_formula} \\
    \label{eq:wick_prod_expect_value} 
\mathbb E [\colon Y_1 \cdots Y_n \colon] &= 0\quad(n\ge1), \; \text{ and} \\
\label{eq:wick_prod_expec_of_exp} \colon \e^{X} \colon &= \e^{X- \tfrac 12 \mathbb E[X^2]} .
\end{align}
 
Using the Wick product, the contraction of the instrument density with $\ket{\Phi}\in \Gamma$ is  expressible as the formal series
\begin{align}
\label{eq:formal_series_diff_1}
    \braket{\mathcal I_{\text{diffusive}}(\vec J)|\Phi}&= \sum_{n=0}^\infty \int_{0}^T \mathrm{D}^n\bm{t} \sum_{\bm{\nu}} \phi_{\bm{\nu}}^{(n)}(\bm{t}) \e^{T(c_0+d_0)} \colon \e^{\int_0 ^T \mathrm ds \chi_0^{\vec J}(s)^\ast}\prod_{i=1}^n(c_{\nu_i}+d_{\nu_i}+\chi_{\nu_i}^{\vec J}(t_i))^\ast \colon.
\end{align}
To obtain the corresponding deterministic instrument or analyse the instrument norm we have to consider expectation values with respect to $\mu_W$ of the above expression and its absolute value, respectively. For this it is convenient to collect all white noises in the vector $\xi =(\xi_\alpha)_{\alpha \in [2k]}$ with $\xi_x := J^\ast_x $ and $\xi_{k+x}:= J_x $ such that
\begin{equation}
    \chi_{\nu}^{\vec J}(t)^\ast = \sum_{\alpha \in [2k]} C_{\nu,\alpha}\xi_\alpha(t), \qquad C_{\nu ,x}:= \tr[\mathbb P_\nu (L_x \otimes \id)], \; C_{\nu, x+k}:= \tr[\mathbb P_\nu ( \id \otimes L^\ast_x )],
    \label{eq:matrix_C_coefficient_change}
\end{equation}
where $\nu = 0,1,\cdots, d_{\rm S}^{4}-1$ and $C \in \mathbb C^{d_{\rm S}^4 \times 2k}$.  Then, we can express the contraction of two random coefficients as
\begin{align}
    \mathbb E[  \chi_{\nu}^{\vec J}(t)^\ast  \chi_{\nu'}^{\vec J}(s)^\ast] = \underbrace{(C \Sigma C^T)}_{=:\Xi}\;_{\nu,\nu'} \delta(t-s),  \qquad \text{with }\Sigma =  \begin{pmatrix} 0&\id_k\\\id_k&0\end{pmatrix}.
\end{align}
With this we start manipulating the formal series in \cref{eq:formal_series_diff_1}, term by term in $n$. First we use \cref{eq:wick_prod_exp_formula} with $X = \int_0 ^T \mathrm ds \chi_0^{\vec J}(s)^\ast$ and $Y_i = \chi_{\nu_i}^{\vec J}(t_i)^\ast$ such that we get
\begin{multline}
\label{eq:formal_series_diff_2}
    \braket{\mathcal I_{\text{diffusive}}(\vec J)|\Phi}= \\ \e^{T(c_0+d_0)}\e^{\int_0 ^T \mathrm ds \chi_0^{\vec J}(s)^\ast - \tfrac 1 2 T \Xi_{0,0}}\sum_{n=0}^\infty \int_{0}^T \mathrm{D}^n\bm{t} \sum_{\bm{\nu}} \phi_{\bm{\nu}}^{(n)}(\bm{t}) \colon \prod_{i=1}^n\bigl((c_{\nu_i}+d_{\nu_i})^\ast-\Xi_{0,\nu_i}+\sum_\alpha C_{\nu_i\alpha}\xi_\alpha(t_i)\bigr) \colon.
\end{multline}

Since $X$ is real and Gaussian,
the expression
\begin{equation}
\e^{X-\tfrac1 2 \mathbb E[X^2]}=\e^{\int_0 ^T \mathrm ds \chi_0^{\vec J}(s)^\ast - \tfrac 1 2 T \Xi_{0,0}}
\end{equation}
can be interpreted as a probability density with respect to $\mu_W$ (the Cameron-Martin-Girsanov density \cite{meyerQuantumProbabilityProbabilists1995,barchielliQuantumTrajectoriesMeasurements2009}). This means that under $\e^{X-\tfrac1 2 \mathbb E[X^2]}\mathrm d\mu$ the noise $\xi$ has the law of $\xi + \lambda$ under $\mu$, where $\lambda \in \mathbb C^{2k}$ is a constant drift $\lambda_\alpha = \mathbb E[\xi_\alpha(t)X]$ which evaluates to $\lambda_x  = \tr[L_x ]^*$, $\lambda_{x+k} = \tr[L_x ]$. The Wick polynomial in $\xi$ evaluated on deterministically shifted noise is again a Wick polynomial in $\xi$ with the shift entering additively as a constant. 
Because $(C\lambda)_\nu=\Xi_{0,\nu}$, the deterministic shift cancels inside the Wick polynomial after the change of measure. Define
\begin{equation}
\label{eq:diff_F_definition}
    F_\Phi(\xi)\coloneqq\sum_{n=0}^\infty \int_0^T \mathrm{D}^n\bm{t} \sum_{\bm{\nu}} \phi_{\bm{\nu}}^{(n)}(\bm{t}) \colon \prod_{i=1}^n\bigl((c_{\nu_i}+d_{\nu_i})^\ast+\sum_\alpha C_{\nu_i\alpha}\xi_\alpha(t_i)\bigr) \colon.
\end{equation}
Then, whenever the expressions are integrable, $\mathbb E_{\mu_W}\langle\mathcal I_{\mathrm{diffusive}}|\Phi\rangle=\e^{T(c_0+d_0)}\mathbb E_{\mu_W}F_\Phi$ and $\mathbb E_{\mu_W}|\langle\mathcal I_{\mathrm{diffusive}}|\Phi\rangle|=\e^{T(c_0+d_0)}\mathbb E_{\mu_W}|F_\Phi|$. By the linearity of the Wick product, \cref{eq:linearity_wick}, it holds that
\begin{align}
\label{eq:diff_F_expression_1}
\begin{split}
    F_\Phi(\xi) &= \sum_{n=0}^\infty \int_0^T \mathrm{D}^n\bm{t} \sum_{\bm{\nu}} \phi_{\bm{\nu}}^{(n)}(\bm{t}) \colon \prod_{i=1}^n(c_{\nu_i}+d_{\nu_i}+ \chi^{\vec J}_{\nu_i}(t_i))^\ast \colon \\
    &= \sum_{n=0}^\infty \int_0^T \mathrm{D}^n\bm{t}  \sum_{\bm{\nu}} \phi_{\bm{\nu}}^{(n)}(\bm{t}) \sum_{S \subset [n]}\sum_{\bm \alpha \in [2k]^{|S|}} \prod_{i \notin S }(c_{\nu_i}+d_{\nu_i})^\ast \bigl(\prod_{i \in S}C_{\nu_i\alpha_i}\bigr)\colon\prod_{i\in S}\xi_{\alpha_i}(t_i)\colon.
    \end{split}
\end{align}
where the sum over subsets $S \subset [n]$ also includes the empty set.

Now, to obtain the deterministic instrument associated to the diffusive measurement, we perform the integration of \cref{eq:diff_F_definition} with respect to the Wiener measure over the full space $X$. Using \cref{eq:diff_F_expression_1} and \cref{eq:wick_prod_expect_value} we get
\begin{align}
\label{eq:diff_calc_det_instr}
   \int \mathrm d\mu_W \braket{\mathcal I_{\text{diffusive}}(\vec J)|\Phi} &= \int \mathrm d\mu_W  \e^{T (c_0+d_0)} \sum_{n=0}^\infty \int_0^T \mathrm{D}^n\bm{t} \sum_{\bm{\nu}} \phi_{\bm{\nu}}^{(n)}(\bm{t}) \prod_{i=1}^n(c_{\nu_i}+d_{\nu_i})^\ast 
\end{align}
We recognise this as $\bra{\mathcal I_{\text{diffusive}}(X)}\Phi\rangle$ with 
\begin{align}
  \ket{\mathcal I_{\text{diffusive}}(X)} 
  &= \ket{\Phi_T[c_0 +d_0,\{c_{\nu} +d_\nu\}_{\nu},1]}.
\end{align}
We see that the deterministic instrument of a diffusive measurement coincides with the one from the jump measurement case in \cref{eq:deter_instr_jump}. This is expected because the diffusive and the jump measurement constitute two different unravellings of the same master equation \cite{wisemanQuantumMeasurementControl2014}.

We now turn to the examination of the instrument norm. We use the variational characterisation of the $L^1$ norm to write
\begin{align}
      \int \mathrm{d}\mu_W |
    \braket{\mathcal I_{\text{diffusive}}(\vec J)|\Phi}| &= 
  \e^{T( c_0+d_0)} \int \mathrm{d}\mu_W
    \left|F_{\Phi}(\xi) \right| \nonumber\\
    &=  \e^{T(c_0+d_0)} \sup_{f \in L^\infty(\mu): \|f\|_\infty = 1}\; 
  \Bigl|\int \mathrm{d}\mu_W
   f(\xi) F_{\Phi}(\xi)\Bigr|. \label{eq:var_char_diff_inst_norm}
\end{align}
 We remark that on a finite measure space $L^\infty(\mu) \subset L^2(\mu)$. To proceed we want to make use of the \textit{Wiener-Chaos decomposition} \cite{jansonGaussianHilbertSpaces1997,meyerQuantumProbabilityProbabilists1995}. This is an explicit isomorphism between the space $L^2(\mu)$ and the symmetric Fock space over $L^2([0,T])\otimes \mathbb C^{2k}$. We will denote it as $\Gamma_{\text{diff}}$ in the following. For a symmetric kernel $h \in L^2([0,T]^m)\otimes (\mathbb C^{2k})^{\otimes m}$, the multiple Wiener-Ito integral is defined as 
\begin{equation}
    I_m(h) := \int_{[0,T]^m}\mathrm d \bm{t} \sum_{\bm \alpha}h_{\bm{\alpha}}(\bm t) \colon \prod_{i=1}^m \xi_{\alpha_i}(t_i) \colon . 
\end{equation}
It holds that 
\begin{equation}
    \mathbb E[I_m(h)I_{m'}(h')^\ast] = \delta_{mm'}m! \underbrace{\int_{[0,T]^m}\mathrm d \bm{t} \sum_{\bm \alpha}h'_{\bm{\alpha}}(\bm t)^\ast h_{\bm{\alpha}}(\bm t)}_{=:\langle h',h\rangle}.
    \label{eq:wiener_chaos_decomp_rules}
\end{equation}
The Wiener-Chaos decomposition associates with any random variable $f\in L^2(\mu)$ an element of $\Gamma_\text{diff}$ characterised by a series of kernels $(h^{(m)})_m$, such that
\begin{equation}
    f = \sum_{m=0}^\infty I_m(h^{(m)}),
\end{equation}
with $\|f\|^2_{L^2(\mu)}= \sum_{m=0}^\infty m! \|h^{(m)}\|^2$, and $ \|h^{(m)}\|^2 = \braket{h^{(m)},h^{(m)}}$.
We will now show that as a formal series 
\begin{equation}
    F_\Phi = \sum_{m=0}^\infty \frac{1}{m!}I_m(g^{(m)})
    \label{eq:chaos_decomp_F_phi}
\end{equation}
with 
\begin{align}
\label{eq:diff_g_psi_definitions}
\begin{split}
    g^{(m)}_{\bm \alpha}(\bm t) &:= \sum_{\bm \nu}\psi^{(m)}_{\bm \nu}(\bm t)\prod_{i=1}^m C_{\nu_i,\alpha_i}, \quad \text{and}\\
   \psi^{(m)}_{\bm \nu}(\bm t) &:= \sum_{r=0}^\infty \frac{1}{r!} \int_{[0,T]^r}\mathrm d \bm s \sum_{\bm \nu'}\phi^{(m+r)}_{\bm{\nu}\bm{\nu'}}(\bm t, \bm s)(c_{\nu'_1}+d_{\nu'_1})^\ast\cdots (c_{\nu'_r}+d_{\nu'_r})^\ast.
   \end{split}
\end{align}
Starting from \cref{eq:diff_F_expression_1} we use the symmetry of $\phi_{\bm{\nu}}^{(n)}(\bm t)$ under the joint permutations of $\bm{t}$ and $\bm{\nu}$ to write 
\begin{align}
    F_\Phi 
    &= \sum_{n=0}^\infty \frac{1}{n!} \int_{[0,T]^n} \mathrm{d}\bm{t} \sum_{\bm{\nu}} \phi_{\bm{\nu}}^{(n)}(\bm{t}) \sum_{m = 0}^{n} \binom{n}{m} \prod_{i=m+1 }^{n}(c_{\nu_i}+d_{\nu_i})^\ast \colon \prod_{i=1}^m(\chi^{\vec J}_{\nu_i}(t_i))^\ast \colon.
\end{align}
 Denoting with $\bm \mu= (\nu_{m+1},\cdots,\nu_{n})$ and $\bm \nu'=(\nu_{1},\cdots,\nu_{m})$ and changing variables $r = n-m$ we get as a formal manipulation that
\begin{align}
    F_\Phi 
    &= \sum_{m=0}^\infty  \sum_{r = 0}^{\infty} \frac{1}{m!} \sum_{\bm{\nu'}} \int_{[0,T]^{m}}  \mathrm{d}\bm{t} \frac{1}{r!} \int_{[0,T]^{r}}\mathrm{d}\bm{s} \sum_{\bm \mu} \phi_{\bm{\nu'}\bm \mu}^{(r+m)}(\bm{t},\bm{s}) \prod_{i=1 }^{r}(c_{\mu_i}+d_{\mu_i})^\ast \colon \prod_{i=1}^{m}(\chi^{\vec J}_{\nu'_i}(t_i))^\ast \colon , 
\end{align}
which shows the claimed \cref{eq:chaos_decomp_F_phi}. As a consequence, we get together with using \cref{eq:wiener_chaos_decomp_rules} that
\begin{align}
    \int \mathrm d\mu_W f(\xi) F_\Phi(\xi)
    &=  \sum_{m=0}^\infty \int_{[0,T]^m}\mathrm d \bm{t} \sum_{\bm \alpha} \bigl(h_{\bm \alpha}^{(m)}(\bm t)\bigr)^*g_{\bm \alpha}^{(m)}(\bm t) 
\end{align}
Note that the boundedness of $f$ implies $\|f\|^2_{L^2(\mu)}= \sum_{m=0}^\infty m! \|h^{(m)}\|^2 \leq 1$. 
This together with two applications of Cauchy--Schwarz gives 
\begin{align}
   \Bigl|\int \mathrm d\mu_W f(\xi) F_\Phi(\xi)\Bigr|
    &\leq  \sum_{m=0}^\infty \|h^{(m)}\| \; \|g^{(m)}\| \nonumber \\
    & \leq \Bigl(\sum_{m=0}^\infty \frac{1}{m!} \; \|g^{(m)}\|^2\Bigr)^{1/2} .
\end{align}
The matrix $C$ defined in \cref{eq:matrix_C_coefficient_change} informally converts from the $\bm \alpha$ basis to the $\bm \nu$ basis. It contains the information about the measured operators $L_x $, $x \in  [k].$ We can proceed by bounding this expression from above by considering its restriction to the submatrix $C_\perp$ where $\nu \neq 0$. 
Then
\begin{align}
 \Bigl|\int \mathrm d\mu_W f(\xi) F_\Phi(\xi)\Bigr|
    &\leq \Bigl(\sum_{m=0}^\infty \frac{1}{m!} \| C_\perp \|^{2m} \|\psi^{(m)}\|^2\Bigr)^{1/2} .
\end{align}
Here, $\| C_\perp \|$ denotes the operator norm, or largest singular value, of $C_\perp$. To bound $\|\psi^{(m)}\|^2$ we apply Cauchy--Schwarz to get 
\begin{align}
   \|\psi^{(m)}\|^2 &= \int_{[0,T]^m}\mathrm d \bm{t} \sum_{\bm \nu \in [d_{\rm S}^4-1]^m}  |\sum_{r=0}^\infty \frac{1}{r!}  \sum_{\bm \nu' \in [d_{\rm S}^4-1]^r}\int_{[0,T]^r}\mathrm d \bm s\phi^{(m+r)}_{\bm{\nu}\bm{\nu'}}(\bm t, \bm s)(c_{\nu'_1}+d_{\nu'_1})^\ast\cdots (c_{\nu'_r}+d_{\nu'_r})^\ast|^2 \nonumber \\
   &\leq \int_{[0,T]^m}\mathrm d \bm{t} \sum_{\bm \nu \in [d_{\rm S}^4-1]^m} \Bigl|\sum_{r=0}^\infty \frac{1}{r!} \sqrt{\sum_{\bm \nu'\in [d_{\rm S}^4-1]^r}\int_{[0,T]^r}\mathrm d \bm s |\phi^{(m+r)}_{\bm{\nu}\bm{\nu'}}(\bm t, \bm s)|^2}\sqrt{T^r\sum_{\bm \nu'\in [d_{\rm S}^4-1]^r}\prod_{i}|c_{\nu'_i}+d_{\nu'_i}|^2}\Bigr|^2 \nonumber \\ 
   &\leq \int_{[0,T]^m}\mathrm d \bm{t} \sum_{\bm \nu \in [d_{\rm S}^4-1]^m}  \Bigl|\sum_{r=0}^\infty \frac{1}{r!} \sum_{\bm \nu'\in [d_{\rm S}^4-1]^r}\int_{[0,T]^r}\mathrm d \bm s |\phi^{(m+r)}_{\bm{\nu}\bm{\nu'}}(\bm t, \bm s)|^2 \Bigr| \Bigl|\sum_{r=0}^\infty \frac{T^r}{r!}(\sum_{\nu \neq 0}|c_{\nu}+d_{\nu}|^2)^r\Bigr| \nonumber \\
 & \leq \e^{T\sum_{\nu \neq 0 }|c_{\nu}+d_{\nu}|^2}  \sum_{r=0}^\infty \frac{1}{r!} \int_{[0,T]^m}\mathrm d \bm{t}\int_{[0,T]^r}\mathrm d \bm s \sum_{\bm \nu \in [d_{\rm S}^4-1]^m} \sum_{\bm \nu'\in [d_{\rm S}^4-1]^r} |\phi^{(m+r)}_{\bm{\nu}\bm{\nu'}}(\bm t, \bm s)|^2.
\end{align}
Putting things together, changing the summation variables and using the binomial theorem we get
\begin{align}
   \|\mathcal{I}_{\text{diffusive}} \|_{\Gamma\rightarrow L^1(\mathcal{X})} 
    &\leq  \e^{T(c_0+d_0+\|\bm c_\perp + \bm d_\perp\|_2^2}/2) \sup_{\substack{\ket{\Phi} \in \Gamma:\\ \|\ket{\Phi}\|_\Gamma \leq 1}}  \Bigl(\sum_{n=0}^\infty \sum_{r=0}^{n} \binom{n}{r}\| C_\perp \|^{2(n-r)} 1^{r} \int_0^T\mathrm D^n \bm{t}   \sum_{\bm \nu} |\phi^{(n)}_{\bm{\nu}}(\bm t)|^2\Bigr)^{1/2} \nonumber \\
    &= \e^{T(c_0+d_0+\|\bm c_\perp + \bm d_\perp\|_2^2/2)} \sup_{\substack{\ket{\Phi} \in \Gamma:\\ \|\ket{\Phi}\|_\Gamma \leq 1}} \Bigl(\sum_{n=0}^\infty (1+\| C_\perp \|^{2})^n \int_0^T\mathrm D^n \bm{t}  \sum_{\bm \nu} |\phi^{(n)}_{\bm{\nu}}(\bm t)|^2\Bigr)^{1/2}, \label{eq:instr_norm_diff_infinite}
\end{align}
where we abbreviated $\|\bm c_\perp + \bm d_\perp\|_2^2 := \sum_{\nu \neq 0 }|c_{\nu}+d_{\nu}|^2$. 

After these manipulations we are ready to deduce some properties of the diffusive instrument. The series in \cref{eq:instr_norm_diff_infinite} is finite precisely when $\sum_{n\ge0}(1+\|C_\perp\|^2)^n\|\Phi^{(n)}\|_\Gamma^2<\infty$. This gives a sufficient condition on the particle wavefunctions for the displayed bound to establish integrability; super-exponential decay suffices but decay merely proportional to $(1+\|C_\perp\|^2)^{-n}$ does not. This excludes perfectly admissible vectors in $\Gamma$: for $\int_0^T\mathrm D^n \bm{t}\sum_{\bm \nu}|\phi^{(n)}_{\bm{\nu}}(\bm t)|^2 \propto 1/n^2$, for example, the series diverges.
Our upper bound can therefore only show that the diffusive instrument for general measurement operators $L_x $, $x \in  [k]$ is well defined on a dense subspace of $\Gamma$, e.g. on vectors with finite particle number or with sufficiently rapidly decaying particle-number distribution. 
By virtue of the bounds obtained in~\cref{sec:bounds_on_norms_cPT_wavefunctions}, we know that the domain of the instrument contains all the cPTs, which are hence mapped to functions with uniformly bounded $L^1$ norm.
However, we will now show that for a general Fock vector the bound inside the supremum~\cref{eq:instr_norm_diff_infinite} is attainable and hence the instrument norm of the diffusive instrument diverges.

We consider for any $z \in \mathbb C$ the exponential vector $\ket{\Phi_z} \in \Gamma$ with $n$-particle wavefunctions 
\begin{equation}
    \phi_{\bm \nu}^{(n)}(\bm t) = \e^{-|z|^2/2}z^n \prod_{i=1}^n u(t_i) v_{\nu_i},
\end{equation} 
where $u \in L^2([0,T])$ satisfies 
\begin{align}
    \int_{[0,T]}\mathrm dt\;  |u(t)|^2 &=1 \nonumber  \\
    \int_{[0,T]}\mathrm dt \; u(t) & =\int_{[0,T]}\mathrm dt \; u^2(t)=0, \label{eq:averaging_prop_satur_example}
\end{align}
and the normalised vectors $v \in \mathbb C^{d_{\rm S}^4-1}, \tilde v \in \mathbb C^{2k}$ are such that $C^T_\perp v = \|C_\perp\| \tilde v $. Then $\ket{\Phi_z}$ lies in the domain of the diffusive instrument map and $\|\ket{\Phi_z} \|_\Gamma^{2} = \e^{-|z|^2}\sum_n |z|^{2n}/n! =1$.  
Furthermore, with these choices $\psi^{(m)}_{\bm \nu} =  \phi^{(m)}_{\bm \nu}$ in \cref{eq:diff_g_psi_definitions} and the corresponding $g^{(m)}_{\bm \alpha}$ is
\begin{equation}
    g^{(m)}_{\bm \alpha}(\bm t) = \|C_\perp\|^m \e^{-|z|^2/2}z^m \prod_{i=1}^m u(t_i) \tilde v_{\alpha_i}
\end{equation}
such that we get
\begin{align}
    F_{\Phi_z} &= \sum_{m=0}^\infty \frac{1}{m!}I_m(g^{(m)}) \nonumber \\
    &= \e^{-|z|^2/2} \sum_{m=0}^\infty \frac{\|C_\perp\|^m z^m }{m!} \int_{[0,T]^m}\mathrm d \bm{t} \sum_{\bm \alpha}\Bigl(\prod_{i=1}^m u(t_i)\tilde v_{\alpha_i}\Bigr)\colon\prod_{i=1}^m\xi_{\alpha_i}(t_i)\colon \nonumber \\
    &= \e^{-|z|^2/2} \sum_{m=0}^\infty \frac{\|C_\perp\|^m z^m }{m!}\colon\Bigl(\underbrace{ \int_{[0,T]}\mathrm d t \sum_{\alpha} u(t) \tilde v_{\alpha} \xi_{\alpha}(t)}_{=: X } \Bigr)^m\colon.
\end{align}
We recognise $X$ as being a centred complex Gaussian variable. By the rules of the Wick product, \cref{eq:wick_prod_expec_of_exp}, we have 
\begin{align}
  \int \mathrm d\mu_W \bigl|  F_{\Phi_z} \bigr| &= \e^{-|z|^2/2} \int \mathrm d\mu_W \bigl| \colon \e^{\|C_\perp\| z X} \colon \bigr| \nonumber \\
     &= \e^{-|z|^2/2}\int \mathrm d\mu_W \bigl|  \e^{\|C_\perp\| z X-\tfrac1 2 \|C_\perp\|^2 z^2 \mathbb E[X^2]}  \bigr| \nonumber \\
     &= \e^{-|z|^2/2}\int \mathrm d\mu_W \bigl|  \e^{\|C_\perp\| z X-\tfrac1 2 \|C_\perp\|^2 z^2 \bigl(2\sum_{x\in[k]}\tilde v_x\tilde v_{x+k}\bigr)(\int_0^T \mathrm d t u^2(t))} \bigr| \nonumber  \\
     &= \e^{-|z|^2/2}\int \mathrm d\mu_W \bigl|  \e^{\|C_\perp\| z X} \bigr|,
\end{align}
where in the second-to-last line we used that $\mathbb E[\xi_{\alpha}(t)\xi_{\beta}(s)] = \Sigma_{\alpha \beta}\delta(t-s)$ and the last line follows from \cref{eq:averaging_prop_satur_example}. Then using that for linear functionals $Y$ of jointly Gaussian random variables $\mathbb E[\e^{Y}]= \e^{\tfrac 1 2 \mathbb E [Y^2]}$, we get
\begin{align}
  \int \mathrm{d}\mu_W |
    \braket{\mathcal I_{\text{diffusive}}(\vec J)|\Phi_z}| &= \e^{T(c_0+d_0)}\e^{-|z|^2/2}\int \mathrm d\mu_W  \e^{ \|C_\perp\|(z X + z^\ast X^\ast)/2} \nonumber \\
    &= \e^{T(c_0+d_0)}\e^{-|z|^2/2} \e^{ \|C_\perp\|^2 \mathbb E[(z X + z^\ast X^\ast)^2]/8} \nonumber \\
    &= \e^{T(c_0+d_0)}\e^{-|z|^2/2} \e^{\|C_\perp\|^2 |z|^2 (\sum_{\alpha} |\tilde v_\alpha|^2)(\int_0^T \mathrm d t |u|^2(t))/4} \nonumber \\
    &=  \e^{T(c_0+d_0)} \e^{\tfrac{1}{4}|z|^2(\|C_\perp\|^2-2)}. \label{eq:diff_saturs_bound_exam}
\end{align}
Since $\|\ket{\Phi_z}\|=1 $ for all $z \in \mathbb C$, letting $|z| \rightarrow \infty$ shows that the diffusive instrument cannot be uniformly bounded on the dense domain of vectors with exponentially decaying particle distribution whenever $\|C_\perp\|^2 > 2$. To see how $\|C_\perp\|$ depends on the measurement operators we use that $\{\mathbb P_\nu\}_\nu$ is Hermitian and orthonormal with $\mathbb P_0=\id/d_{\rm S}$,
so that Parseval's identity gives, for any operators $X,Y$ on the doubled space $\mathcal H_{\rm S}\otimes \mathcal H_{\rm S}$,
$\sum_{\nu\neq 0}\tr[\mathbb P_\nu X]^*\tr[\mathbb P_\nu Y]
=\tr[X^\dagger Y]-\tr[X]^*\tr[Y]/d_{\rm S}^2$.
Applied to the columns $L_x \otimes\id$ and $\id\otimes L_x ^*$ of $C_\perp$ (see \cref{eq:matrix_C_coefficient_change}), the mixed
contributions cancel and the Gram matrix becomes block diagonal,
\begin{equation}
  C_\perp^\dagger C_\perp=\begin{pmatrix}A&0\\0&A^T\end{pmatrix},
  \qquad
  A_{xx'}:=d_{\rm S}\tr\big[\tilde L _x ^\dagger \tilde L _{x'}\big],
  \qquad
  \tilde L _x :=L_x -\frac{\tr[L_x ]}{d_{\rm S}}\id,
\end{equation}
where $A\geq0$ is $d_{\rm S}$ times the Gram matrix of the traceless parts of the measurement
operators. Since $A$ and $A^T$ have the same spectrum, we obtain
\begin{equation}
  \|C_\perp\|^2=\|A\|\leq \tr A=d_{\rm S}\sum_{x \in [k]}\big\|\tilde L_x \big\|_{\text{HS}}^2.
\end{equation}
If the $\tilde L_x $ are Hilbert--Schmidt orthogonal it exactly holds that $\|C_\perp\|^2 = d_{\rm S} \max_x  \|\tilde L_x \|^2_{\text{HS}}$. We see that the quantity controlling the instrument norm is therefore the Hilbert--Schmidt norm of
the traceless part of the measurement operators, which quantifies the strength of the
system-meter coupling. This is a further instance of the connection between the
distinguishing power of an instrument and the cost of implementing it.

As the form of the diffusive instrument has been obtained by considering its action only on the set of PCRs, the unbounded instrument norm implies that we cannot use density of PCRs in cPTs to extend its action to all cPTs.
Intuitively, one can understand this as the consequence of the fact that every run of the diffusive instrument probes the process over the whole interval $[0,T]$, making it sensitive to pointwise values of the cPT, which is not allowed due to the $L^2$-like equivalence on $\Gamma$.
Furthermore, the diverging instrument norm implies that the diffusive instrument has infinite distinguishing power: using the bounds obtained in~\cref{sec:bounds_on_norms_cPT_wavefunctions} in~\cref{eq:instr_norm_diff_infinite} we can evaluate an upper bound on the total variation distance between the diffusive instrument statistics evaluated on two cPTs $\cPT$ and $\ket{\Upsilon'_T}$
\begin{equation}
    \text{TV}(\cPT, \ket{\Upsilon'_T}) \le \frac{\e^{T(c_0 + d_0 + \frac12\|\bm{c} + \bm{d}\|_2^2)}}{\|\cPT - \ket{\Upsilon'_T}\|} \left(\sum_{n=0}^\infty (1 + \|C_\perp\|^2)^n 32\frac{T^n}{n!}\right)^\frac12 = \frac{4\sqrt{2}\e^{T\left(c_0 + d_0 + \frac12 (\|\bm{c}+\bm{d}\|_2^2 + \|C_\perp\|^2 + 1)\right)}}{\|\cPT - \ket{\Upsilon'_T}\|},
\end{equation}
where the difference in the denominator can be arbitrarily small.
This is the case even if the cPTs are in the PCR.
While it is not completely clear that this bound is attained by physical processes, since we have seen that it is tight for general Fock vectors this indicates unbounded distinguishing power of the diffusive instrument.

The infinite instrument norm is directly related to the idealised unphysical nature of the diffusive limit: the diffusive measurement
can be understood as the, suitably rescaled, jump measurement in the limit of infinite rate. For homodyne detection this corresponds to interfering the output field with a local oscillator of diverging amplitude~\cite{wisemanQuantumMeasurementControl2014}. In this limit the measurement noise has a fixed intensity per unit time, independent of how many excitations have already been emitted. Since in the Fock space picture every particle corresponds to one emission time, this translates into a constant weight per particle, namely the factor  $(\|C_\perp\|^2-2)$ in \cref{eq:diff_saturs_bound_exam}. The diffusive instrument thus probes a Fock state vector in any particle sector with the same sensitivity and no sector is suppressed. 

Any measurement that is actually implemented in the lab has a finite bandwidth and hence cannot access the currents $J_x$ directly \cite{guilminTimeaveragedContinuousQuantum2025}.
Instead, it resolves smoothed versions of the integrated records $Y_x(t)=\int_0^t J_x(s)\,{\rm d}s$,
where the smoothening approximates the white-noise limit better as the strength of the local oscillator increases.
As the strength of the local oscillator approaches the diffusive limit, the distinguishing power quantified by the instrument norm of the associated instrument diverges and so does the energy required to power such a measurement.
We thus see, once more, that the distinguishing power of an
instrument in our framework is tied to the energy required to implement it.

\section{Gelfand triple}

From the Dyson series~\cref{ssec:cpts in the Fock space} we have seen that any process that has an associated (finite) Hamiltonian generator is a PCR. 
We have also seen in~\cref{sec:instruments} that the instrument norm is physically connected to the energy needed to perform the measurement and we have seen that the $\Gamma$-norm quantifies the statistical distinguishability of cPTs under bounded energy instruments, rendering it a natural norm on cPTs.
As the span of PCRs is not closed under the $\Gamma$-norm, we would like to take the closure under the $\Gamma$-norm to obtain a mathematically well-defined theory.
Theorem 1 in the main text shows that this is precisely what the cPT definition ensures.

But this places the wavefunctions of a cPT in the space of $L^2$ functions, which renders probing them pointwise meaningless.
This reflects the physical reality that nothing is truly instantaneous in nature and real physical measurements take a finite amount of time.
However, in many practical situations it is a very good simplifying approximation to think of processes happening instantaneously: for instance, we often talk of measuring a POVM at a given time $\tau\in[0,T]$.
We would like to include such approximations into our theory.

The correct mathematical structure, unlocked by Theorem 1 of the main text, that allows us to do this is that of a rigged Hilbert space or a Gelfand triple~\cite{gelfandGeneralizedFunctionsApplications2014,wlokaPartialDifferentialEquations1987,madridRoleRiggedHilbert2005}.
This is the same structure that allows one to rigorously treat the eigenfunctions of the position and momentum operators when treating a quantum mechanical particle in space.
Concretely, the Gelfand triple introduces the space of regular cPTs by closing the span of PCRs under a different norm: the Sobolev Fock norm on $\Gamma(H^1([0,T]))$.
This imposes (weak) differentiability of the wavefunctions, placing the instantaneous instruments into the dual space of this closure.

We now state and prove a rigorous version of Corollary 1 in the main text.

\setcounter{corollary}{\numexpr\value{savedcorno}-1\relax}

\begin{corollary}[Process Gelfand triple] \label{lem:gelfand}
    Let $\|\cdot \|_{H^1}$ be the norm on $\Gamma(H^1([0, T]))$, the Fock space of the Sobolev space $H^1([0, T])$.
        Let $\Lambda_{\|\cdot\|_{H^1}}$ and $\Lambda_{\|\cdot\|_{\Gamma}}$ be the closures of $\Span(\mathsf{PCR})$ under $\|\cdot \|_{H^1}$ and $\|\cdot \|_\Gamma$ respectively.
    Now, $\Lambda_{\|\cdot\|_{H^1}} \subsetneq \Lambda_{\|\cdot\|_\Gamma}$ and the inclusion map $\imath: \Lambda_{\|\cdot\|_{H^1}} \rightarrow \Lambda_{\|\cdot\|_\Gamma}$ defines a Gelfand triple
    \begin{equation}\label{eq:process gelfand triple}
        \Lambda_{\|\cdot\|_{H^1}} \hookrightarrow \Lambda_{\|\cdot\|_\Gamma} \hookrightarrow \Lambda_{\|\cdot\|_{H^1}}^*,
    \end{equation}
    where the notation $\hookrightarrow$ means that there exists a continuous injective map with a dense range between the two spaces.
    The space $\Lambda_{\|\cdot\|_{H^1}}^*$ includes the effects of the instrument measuring a POVM at some $t\in[0,T]$.
\end{corollary}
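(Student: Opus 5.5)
The plan is to verify the abstract Gelfand-triple axioms for the pair of closures, and then to check separately that the POVM-at-time-$t$ effect is a bounded functional on the smaller space. First, every PCR has wavefunctions with enough regularity for the $H^1$ Fock norm to be finite. Concretely, for a PCR with bounded measurable $H$, the wavefunctions are
\begin{equation*}
\Upsilon^{(n)}_{\bm\nu}(\bm t)=\bbra{\id}\Lambda(T,t_n)\mathbb{P}_{\nu_n}\cdots\mathbb{P}_{\nu_1}\Lambda(t_1,0)\kket{\rho}.
\end{equation*}
The propagators $\Lambda$ are Lipschitz in their time arguments, with constant $\|\mathbb{H}\|_\infty$, so each wavefunction is in $W^{1,\infty}$ on the simplex. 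Bounding the weak derivatives term by term, as in the Dyson estimate $\|\mathcal{T}e^{\int A}\|\le e^{T\sup\|A\|}$, gives a bound of order $n\,\|\mathbb{H}\|_\infty \sqrt{T^n/n!}$ per sector. This is summable over $n$ and over the $q^n$ species strings, so $\Span(\mathsf{PCR})\subset\Gamma(H^1)$ and the closure $\Lambda_{\|\cdot\|_{H^1}}$ is well defined.

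Second, since $\|\cdot\|_\Gamma\le\|\cdot\|_{H^1}$ sectorwise, the identity on $\Span(\mathsf{PCR})$ extends to a contraction $\imath:\Lambda_{\|\cdot\|_{H^1}}\to\Lambda_{\|\cdot\|_\Gamma}$. It is injective because $\Gamma(H^1)\hookrightarrow\Gamma(L^2)$ is injective and $\imath$ is the restriction of that map. Its range contains $\Span(\mathsf{PCR})$, so it is dense by the definition of $\Lambda_{\|\cdot\|_\Gamma}$; \cref{thrm:pcf existence} further identifies this space with the closed span of all cPTs. The second arrow is then the standard construction: identify $\Lambda_{\|\cdot\|_\Gamma}$ with its dual by Riesz and compose with the adjoint $\imath^*$. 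Continuity is automatic, and injectivity and density of $\imath^*$ both follow from $\imath$ being injective with dense range.

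Third comes strictness. The Brownian-motion cPT of \cref{prop:pcr strict} lies in $\Lambda_{\|\cdot\|_\Gamma}$, since it is a cPT and hence approximable by PCRs by \cref{thrm:pcf existence}. Its two-particle wavefunction takes different limits along different rays into $(\tau,\tau)$. In two dimensions, $H^1$ does not force continuity, so the cleaner route is to restrict to a transversal line, e.g.\ fix $t_1=\tau-\lambda$ and vary $t_2$. I will try to show that the one-variable trace is discontinuous at $t_2=\tau$ for a positive-measure set of $t_1$, and hence not $H^1$ in $t_2$; by Fubini this contradicts membership in $\Gamma(H^1)$. If that turns out to be delicate, the fallback is a simpler explicit example: a classical mixture of unitary processes that switch at a uniformly random time gives a one-particle wavefunction with a jump. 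Convex combinations of PCRs lie in the $\Gamma$-closure of the span, and a jump is not in $H^1([0,T])$.

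Finally, for the POVM at time $t$, the effect acts as $\ket{\Upsilon}\mapsto\sum_{n}\sum_{\bm\nu}\,\cdots\,$, evaluating one time argument at $t$ and integrating out the remaining ones against coherent-control-type coefficients. I will bound it using the one-dimensional trace inequality $|u(t)|\le C_T\|u\|_{H^1([0,T])}$, applied to the evaluated variable after tracing over the other variables. Summing over sectors with the same exponential weights as in the coherent-control norm bound of the SM then gives boundedness on $\Gamma(H^1)$. I expect the strictness step to be the main obstacle, because it requires the trace argument in the correct coordinates on the simplex. The trace-inequality bookkeeping across sectors, where a factor of $n$ from symmetrisation must be absorbed by the $1/n!$ decay, is the secondary difficulty.
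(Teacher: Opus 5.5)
The gap is in the strictness step $\Lambda_{\|\cdot\|_{H^1}}\subsetneq\Lambda_{\|\cdot\|_\Gamma}$: neither of your two routes works as stated.

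Your primary route fails because, for the Brownian cPT, the transversal slices are \emph{continuous}. Fix $t_1=\tau-\lambda$ and write $v=t_2-\tau$. For $-\lambda<v<0$ the increments $\Delta_{-\lambda}$ and $\Delta_v$ overlap on $[\tau+v,\tau]$, so $\varrho=\sqrt{|v|/\lambda}$. For $v>0$ they are independent, so $\varrho=0$. Hence $t_2\mapsto\Upsilon^{(2)}_{(3,3)}(t_1,t_2)$ tends to $0$ from both sides of $\tau$. The direction dependence lives only at the single point $(\tau,\tau)$: the wavefunction is homogeneous of degree zero in $(t_1-\tau,t_2-\tau)$, and no line with $t_1\neq\tau$ sees it.

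The example is still outside $\Gamma(H^1)$, but for a different reason. On each such slice $|\partial_{t_2}\Upsilon^{(2)}_{(3,3)}|\approx(4\pi)^{-1}(\lambda|t_2-\tau|)^{-1/2}$ as $t_2\uparrow\tau$, which is not square-integrable, for every $t_1<\tau$. So your Fubini argument goes through once ``discontinuous'' is replaced by this. Alternatively, a nonconstant degree-zero homogeneous function has a divergent Dirichlet integral near its vertex.

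Your fallback fails outright. Averaging a kick $\mathcal{K}$ over a uniformly random time $s$ gives $\Upsilon^{(1)}_\nu(t)=\frac{t}{T}b_\nu+\bigl(1-\frac{t}{T}\bigr)a_\nu$, with $a_\nu=\bbra{\id}\mathbb{P}_\nu\kket{\rho}$ and $b_\nu=\bbra{\id}\mathbb{P}_\nu\mathcal{K}\kket{\rho}$. This is continuous: the mixing removes the jump. A jump instead comes from a kick at a \emph{fixed} time. Take the delta-Hamiltonian cMPS $\ket{\Phi^{\tilde{\mathbb H}}}$ from the proof of \cref{thrm:pcf existence} with $N=2$ and $\mathbb{H}_0=0$; it lies in $\Lambda_{\|\cdot\|_\Gamma}$ by \cref{lem:smooth_PCR}. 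Choose $\rho=\ket{0}\!\bra{0}$, $\mathcal{K}$ conjugation by $X$, and $\nu=(\mu,0)$ with $P_\mu\propto Z$. Its one-particle wavefunction then jumps from $a_\nu$ to $b_\nu=-a_\nu\neq0$ at $T/2$, which is impossible in $H^1([0,T])\subset C([0,T])$. The paper instead argues by norm comparison: PCRs have uniformly bounded $\Gamma$-norm but no uniform $H^1$ bound, which, via the open mapping theorem, prevents $\imath$ from being onto.

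There are two smaller points. First, in your first step the norm of $\Gamma(H^1)$ is the tensor-product norm. The $n$-particle sector therefore involves all mixed derivatives $\partial_{\bm\alpha}$, $\bm\alpha\in\{0,1\}^n$, as in the paper's formula. $W^{1,\infty}$ and a per-sector bound $\sim n\|\mathbb{H}\|_\infty\sqrt{T^n/n!}$ only control first partials. The repair is the paper's. Write $\Upsilon^{(n)}_{\bm\nu}=\bbra{\id}\tilde{\mathbb{P}}_{\nu_n}(t_n)\cdots\tilde{\mathbb{P}}_{\nu_1}(t_1)\kket{\rho}$ with $\tilde{\mathbb{P}}_\nu(t)=\Lambda(t,0)^{-1}\mathbb{P}_\nu\Lambda(t,0)$, which is Lipschitz. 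Each differentiated slot then inserts $[\mathbb{P}_{\nu_i},\mathbb{H}(t_i)]$, so $|\partial_{\bm\alpha}\Upsilon^{(n)}_{\bm\nu}|$ carries a factor $(2\|\mathbb{H}\|_\infty)^{|\bm\alpha|}$. The sector weight becomes $(1+4\|\mathbb{H}\|_\infty^2)^nT^n/n!$, which is still summable.

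Second, the POVM effect $\sum_\nu\chi_\nu(x)\bra{\Omega}\xi_\nu(\tau)$ has only vacuum and one-particle components, since idling elsewhere is $\xi_0=\id$. So $|u(\tau)|\le C_T\|u\|_{H^1([0,T])}$ on the one-particle sector is all you need. The multi-sector trace bookkeeping you anticipate is unnecessary. It would also need care, because on the simplex the interval available to $t_k$ is $(t_{k-1},t_{k+1})$, which shrinks near the diagonals.

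Your remaining steps agree with the paper: the contraction, injectivity via $\Gamma(H^1)\hookrightarrow\Gamma(L^2)$, dense range, and the adjoint. Norm-density of the adjoint's range uses reflexivity, which is automatic since $\Lambda_{\|\cdot\|_{H^1}}$ is a Hilbert space. Your injectivity argument is more explicit than the paper's.
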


\begin{proof}
The Sobolev Fock norm of a vector with wavefunctions $\phi^{(n)}_{\bm{\nu}}(\bm{t})$ is given by
\begin{equation}
    \|\ket{\phi}\|_{H^1}^2 = \sum_{n=0}^\infty \sum_{\bm{\nu}\in[d_{\rm S}^4 - 1]^n} \sum_{\bm{\alpha}\in\{0,1\}^n} \int_0^T {\rm D}^n \bm{t} \ |\partial_{\bm{\alpha}} \phi^{(n)}_{\bm{\nu}}(\bm{t})|^2,
\end{equation}
where $\partial_{\bm{\alpha}}$ takes the derivative with respect to all $t_i$ with $\alpha_i = 1$.
If $\ket{\phi}$ is in the PCR, we can write
\begin{equation}
    \partial_{\bm{\alpha}} \phi_{\bm{\nu}}^{(n)}(\bm{t}) = \bbra{\id} U(T, t_n) X_{\alpha_n,\nu_n}(t_n) U(t_n, t_{n-1}) \dots X_{\alpha_1,\nu_1}(t_1) U(t_1, 0) \kket{\psi},
\end{equation}
where $U(t', t) \coloneqq \mathcal{T}\e^{\int_t^{t'}{\rm d}s \mathbb{H}(s)}$ and $X_{0,\nu}(t) = \mathbb{P}_\nu$ while $X_{1,\nu}(t)=[\mathbb P_\nu,\mathbb H(t)]$, such that we can bound
\begin{equation}
    |\partial_{\bm{\alpha}} \phi^{(n)}_{\bm{\nu}}(\bm{t})| \le d_{\rm S}d_{\rm E} \frac{(2\|\mathbb H\|_\infty)^{|\bm\alpha|}}{d_{\rm S}^n},
\end{equation}
where $|\bm{\alpha}|$ is the number of non-zero elements of $\bm{\alpha}$.
Therefore, for a PCR
\begin{align}
    \|\ket{\phi}\|_{H^1}^2 &\le \sum_{n=0}^\infty \sum_{\bm{\nu}}\sum_{\bm{\alpha}} \int_0^T {\rm D}^n\bm{t} \ d_{\rm S}^2 d_{\rm E}^2 \frac{(2\|\mathbb{H}\|_\infty)^{2|\bm{\alpha}|}}{d_{\rm S}^{2n}} \nonumber\\
    &= \sum_{n=0}^\infty \sum_{\bm{\nu}}\sum_{k=0}^n \binom{n}{k} \frac{T^n}{n!} d_{\rm S}^2 d_{\rm E}^2 \frac{(2\|\mathbb{H}\|_\infty)^{2k}}{d_{\rm S}^{2n}} \nonumber\\
    &= d_{\rm S}^{2}d_{\rm E}^{2} \sum_{n=0}^\infty \sum_{\bm{\nu}} \frac{T^n(4\|\mathbb{H}\|_\infty^2 + 1)^n}{n! d_{\rm S}^{2n}} \nonumber\\
    &= d_{\rm S}^2d_{\rm E}^2 \sum_{n=0}^\infty \frac{T^n}{n!}(4\|\mathbb{H}\|_\infty^2 + 1)^n \left(d_{\rm S}^2 - \frac1{d_{\rm S}^2}\right)^n \nonumber\\
    &= d_{\rm S}^2d_{\rm E}^2 \e^{T\left(d_{\rm S}^2 - \frac1{d_{\rm S}^2}\right)(4\|\mathbb{H}\|_\infty^2 + 1)}
\end{align}
and we find that the Sobolev norm increases with the operator norm of the Hamiltonian, but remains finite for all PCRs.
Hence $\Span(\mathsf{PCR})$ can be closed under $\|\cdot\|_{H^1}$.
Since the topology induced by $\|\cdot\|_{H^1}$ is strictly stronger than the one induced by $\|\cdot\|_\Gamma$, and from the above calculation we know that there are sequences of PCRs with no uniform bound on $\|\cdot\|_{H^1}$, we know that $\Lambda_{\|\cdot\|_{H^1}} \subsetneq \Lambda_{\|\cdot\|_\Gamma}$.

Now, consider the inclusion map $\imath:\Lambda_{\|\cdot\|_{H^1}} \to \Lambda_{\|\cdot\|_\Gamma}$ and its adjoint $\imath^*:\Lambda_{\|\cdot\|_\Gamma}^* \to \Lambda_{\|\cdot\|_{H^1}}^*$.
The map $\imath$ has dense range because its range contains $\Span(\mathsf{PCR})$.
Furthermore, $\Lambda_{\|\cdot\|_{H^1}}$ is a closed subspace of $\Gamma(H^1([0,T]))$ and therefore it is a reflexive Banach space.
Hence, a standard result (see, e.g., section 17.1 in~\cite{wlokaPartialDifferentialEquations1987}) implies that the map $\imath^*$ has a dense image, which allows us to obtain the entire $\Lambda_{\|\cdot\|_{H^1}}^*$ by the closure of the image.
This completes the construction of the process Gelfand triple.

Finally, as explained in the companion work~\cite{wassnerOperationalContinuumLimit2026}, the effects of the instrument measuring a POVM $\{E_x\}_x$ at time $\tau$ are $\bra{\mathcal{I}(\{x\})} = \sum_{\nu=0}^{d_{\rm S}^4-1} \chi_\nu(x) \bra{\Omega} \xi_\nu(\tau),$ where $\chi_\nu(x) = \tr \mathbb{P}_\nu E_x$.
These effects have only vacuum and one-particle components. Point evaluation at $\tau$ is a bounded functional on $H^1([0,T])$, so each effect defines a continuous functional on the Sobolev Fock space and restricts to an element of $\Lambda_{\|\cdot\|_{H^1}}^*$.
\end{proof}

\section{Proof of Theorem 1 in the main text}

\begin{proof}[\unskip\nopunct]
Consider the intervals $I_k \coloneqq \left[(k-1)\frac{T}{N}, k\frac{T}{N}\right)$ with $k \in [N]$ and define the indicator function
\begin{equation}
    \bm{1}_{I_k}(t) \coloneqq \begin{cases}
        1 & t \in I_k\\
        0 & t \notin I_k.
    \end{cases}
\end{equation}
Note that $\|\bm{1}_{I_k}\|_2 = \sqrt{\frac{T}{N}}$.
Consider the annihilation operators on $\Gamma$ corresponding to the normalised indicator functions
\begin{equation}
    a_\nu(k) \coloneqq \sqrt{\frac{N}{T}} \int_{I_k} {\rm d}t \ \psi_\nu(t),
\end{equation}
which satisfy the commutation relations
\begin{equation}
    [a_\nu(k), a_{\nu'}^\dag(k')] = \delta_{\nu, \nu'}\delta_{k, k'}.
\end{equation}
We introduce the projectors 
\begin{align}
    P_N &\coloneqq  \sum_{\substack{\bm{m}\in \mathbb N_0^{N \times (d_{\rm S}^4-1)} }}\frac{1}{m_{(1,1)}!\cdots m_{(N,d_{\rm S}^4-1)}!} \prod_{\substack{(k,\nu)\\k\in[N] \\
    \nu \in [d_{\rm S}^4-1]}} \bigl(a_{\nu}^\dag(k)\bigr)^{m_{(k,\nu)}} \ket{\Omega}\!\bra{\Omega} \prod_{\substack{(k',\nu')\\k'\in[N] \\
    \nu' \in [d_{\rm S}^4-1]}} \bigl(a_{\nu'}(k')\bigr)^{m_{(k',\nu')}} \label{eq:projector_P_N} \\
    P_N^{(0,1)} &\coloneqq \sum_{n=0}^N \sum_{\bm{\nu} \in [d_{\rm S}^4 -1]^n} \sum_{1 \le \ell_1 < \dots < \ell_n\le N} \prod_{k\in[n]} a_{\nu_k}^\dag(\ell_k) \ket{\Omega}\!\bra{\Omega} \prod_{k' \in [n]} a_{\nu_{k'}}(\ell_{k'}) \label{eq:projector_P_N_01}. 
\end{align}
Now consider a positive and causal vector $\ket{\Psi} \in \Gamma_T^{(d_{\rm S}^4 - 1)}$. For all $ \bm{\tilde \nu}\in ([d_{\rm S}^4 -1] \cup \{0\} )^N$ we define
\begin{align}
    \Upsilon_{\bm{\tilde \nu}}&:= (\tfrac{N}{T})^{\|\bm{\tilde \nu}\|_0/2}\bra{\Omega}\prod_{i \in \text{supp}(\bm{\tilde \nu})} \bigl(a_{\tilde \nu_i}(i)\bigr)  P_N^{(0,1)} \ket{\Psi} \nonumber \\
    &= (\tfrac{N}{T})^{\|\bm{\tilde \nu}\|_0/2}\bra{\Omega}\prod_{i \in \text{supp}(\bm{\tilde \nu})} \bigl(a_{\tilde \nu_i}(i)\bigr) \sum_{n=0}^N \sum_{\bm{\nu} \in [d_{\rm S}^4 -1]^n} \sum_{1\le\ell_1<\dots<\ell_n\le N} \prod_{k\in[n]} a_{\nu_k}^\dag(\ell_k) \ket{\Omega}\!\bra{\Omega} \prod_{k' \in [n]} a_{\nu_{k'}}(\ell_{k'})\ket{\Psi} \nonumber \\
    &= (\tfrac{N}{T})^{\|\bm{\tilde \nu}\|_0/2}\bra{\Omega}\prod_{i \in \text{supp}(\bm{\tilde \nu})} \bigl(a_{\tilde \nu_i}(i)\bigr)\ket{\Psi} \nonumber \\
    &= (\tfrac{N}{T})^{\|\bm{\tilde \nu}\|_0} \Bigl(\prod_{i \in \text{supp}(\bm{\tilde \nu})} \int_{I_{i}} {\rm d}t_i \Bigr) \psi^{(\|\bm{\tilde \nu}\|_0)}_{\bm{\tilde \nu}\neq 0}(\bm{t}) \nonumber\\
    &= (\tfrac{N}{T})^{N} \Bigl(\prod_{i \in [N]} \int_{I_{i}} {\rm d}t_i \Bigr)  g_{\bm{\tilde \nu}}(\bm{t}) ,\label{eq:ave_dPT_elements}
\end{align}
where $\psi^{(\|\bm{\tilde \nu}\|_0)}_{\bm{\tilde \nu}\neq 0}(\bm{t})$ is the $\|\tilde{\bm{\nu}}\|_0$-particle amplitude of $\ket{\Psi}$ with the particle species given by the non-zero elements of the string $\bm{\tilde \nu}$ and
\begin{equation}
    g_{\bm{\tilde \nu}}(\bm{t}):= \bra{\Omega}\xi_{\tilde \nu_N}(t_N)\xi_{\tilde \nu_{N-1}}(t_{N-1})\cdots \xi_{\tilde \nu_1}(t_1)\ket{\Psi}.
\end{equation}

Because $\ket{\Psi}$ is positive and causal we have by definition that the set $\{\Upsilon_{\bm{ \nu}}\}_{\bm{\nu}}$ defined in \cref{eq:ave_dPT_elements} are the elements of the Choi matrix of an $N$-step discrete process tensor $\Upsilon_N$. 
Therefore, we can dilate it as described in \cite{milzQuantumStochasticProcesses2021} and obtain $N$ unitary channels of dimension $d_{\rm S}^{4 N}$ and a pure initial state $\kket{\psi}$, such that for any $\alpha,\alpha'\in \{0,\ldots,d_{\rm S}^2-1\}$
\begin{align}
  \sum_{\nu_N \in \{[d_{\rm S}^4-1]\cup \{0\}\}} c_{0, \alpha'}^{\nu_N}  \Upsilon_{\bm{\nu}}&= \bbra{\id_{\rm S}}P_0\rrangle \bbra{\id_{\rm E} \otimes P_{\alpha'}}\mathcal U_{N-1}\mathbb P_{\nu_{N-1}}\mathcal U_{N-2}\cdots \mathcal U_{1}\mathbb P_{\nu_1}\mathcal{U}_0\kket{\psi},
\end{align}
where $\mathcal U_k$ are the vectorisations of the unitary channels and $c_{\alpha, \alpha'}^\nu$ is the basis transformation such that $\kket{P_\alpha}\!\bbra{P_{\alpha'}} = \sum_\nu c_{\alpha, \alpha'}^\nu \mathbb{P}_\nu$. 
For each $\mathcal U_k$ there exists $\mathbb H_k = -i(H_k \otimes \id - \id \otimes H_k^\ast)$ with a Hermitian matrix $H_k$ such that $\mathcal U_k = \exp(\mathbb H_k)$. 
Due to the causality condition we can write that 
\begin{align}
 &\quad \sum_{\nu_N\in[d_{\rm S}^4-1]\cup\{0\}}c_{\alpha,\alpha'}^{\nu_N}\Upsilon_{\bm\nu}=   \bbra{\id_{\rm S}}P_\alpha \rrangle\bbra{\id_{\rm E} \otimes P_{\alpha'}}\mathcal U_{N-1}\mathbb P_{\nu_{N-1}}\mathcal U_{N-2}\cdots \mathcal U_{1}\mathbb P_{\nu_1}\mathcal{U}_0\kket{\psi} \nonumber\\
 \Leftrightarrow & \quad \sum_{\alpha,\alpha'}(c_{\alpha, \alpha'}^{\nu'_N})^\ast  \sum_{\nu_N \in \{[d_{\rm S}^4-1]\cup \{0\}\}} c_{\alpha, \alpha'}^{\nu_N}  \Upsilon_{\bm{\nu}} = \sum_{\alpha,\alpha'} (c_{\alpha, \alpha'}^{\nu'_N})^\ast\bbra{\id_{\rm S}}P_\alpha \rrangle\bbra{\id_{\rm E} \otimes P_{\alpha'}}\mathcal U_{N-1}\mathbb P_{\nu_{N-1}}\mathcal U_{N-2}\cdots \mathcal U_{1}\mathbb P_{\nu_1}\mathcal{U}_0\kket{\psi} \nonumber \\
 \Leftrightarrow & \quad   \sum_{\nu_N \in \{[d_{\rm S}^4-1]\cup \{0\}\}} \delta_{\nu_N,\nu'_N}  \Upsilon_{\bm{\nu}} = \bbra{\id_{\rm SE} } \mathbb P_{\nu'_{N}}\mathcal U_{N-1}\mathbb P_{\nu_{N-1}}\mathcal U_{N-2}\cdots \mathcal U_{1}\mathbb P_{\nu_1}\mathcal{U}_0\kket{\psi} \nonumber \\
 \Leftrightarrow & \quad    \Upsilon_{\bm{\nu}} = \bbra{\id_{ \rm SE} } \mathbb P_{\nu_{N}}\mathcal U_{N-1}\mathbb P_{\nu_{N-1}}\mathcal U_{N-2}\cdots \mathcal U_{1}\mathbb P_{\nu_1}\mathcal{U}_0\kket{\psi} \label{eq:discr_PT_N_steps_from_smoothing}. 
\end{align}
We now consider the formal cMPS
\begin{equation}
    \cmps{\sum_{k=0}^{N-1}\delta(t-\tfrac{k T}{N})\mathbb H_{k}}{\{\mathbb P_\nu\}_\nu}{\kketbra{\psi}{\id_{\rm SE}}}\equiv \ket{ \Phi^{\tilde{\mathbb H}}} \in \Gamma,
\label{eq:PCR_proof_cMPS_delta_Ham}
\end{equation}
which is up to the distribution-valued drift matrix, which we label $\tilde{\mathbb H}$, in the PCR.
Both $\tilde{\mathbb H}$ and $\psi$ depend on $N$ but for clarity we leave this dependence implicit. 
Writing
\begin{align}
    \tint &= \sum_{\substack{\bm{n}\in \mathbb N_0^{N}\\ \|\bm{n}\|_1=n}} \prod_{k \in [N]}  \int_{(k-1)\tfrac{T}{N}\leq t_{\sum_{i=1}^{k-1}n_i+1} \leq \cdots \leq   t_{\sum_{i=1}^k n_i} \leq k\tfrac{T}{N}}\mathrm{d}t_{\sum_{i=1}^{k-1}n_i+1}\cdots \mathrm{d}t_{\sum_{i=1}^{k}n_i} \nonumber \\
    &\equiv \sum_{\substack{\bm{n}\in \mathbb N_0^{N}\\ \|\bm{n}\|_1=n}} \prod_{k \in [N]}  \int_{I_{k}}\mathrm{D}^{n_k}t
\end{align}
and $M_{\tilde{\mathbb H}}(t, t') \coloneqq \mathcal{T} \e^{\int_{t'}^t {\rm d}s \tilde{\mathbb H}(s)}$, we have
\begin{multline}
    \ket{ \Phi^{\tilde{\mathbb H}}}= 
  \sum_{n=0}^\infty \sum_{\bm{\nu}}\sum_{\substack{\bm{n}\in \mathbb N_0^{N}\\ \|\bm{n}\|_1=n}} \prod_{k \in [N]}  \int_{I_{k}}\mathrm{D}^{n_k}t  \bbra{\id_{\mathrm{SE}}}M_{\tilde{\mathbb H}}(T,t_n)\mathbb P_{\nu_{n}}\cdots \mathbb P_{\nu_{1}}  M_{\tilde{\mathbb H}}(t_1,0)\kket{\psi}\psi^\dagger_{\nu_n}(t_n)\cdots\psi^\dagger_{\nu_1}(t_1) \ket{\Omega}  \\ =  \sum_{n=0}^\infty \sum_{\bm{\nu}}\sum_{\substack{\bm{n}\in \mathbb N_0^{N}\\ \|\bm{n}\|_1=n}}  \bbra{\id_{\rm SE} }\mathbb P_{\nu_{n}}\cdots\mathbb P_{\nu_{\sum_{i=1}^{N-1} n_i+1}}\Bigl(\prod_{j\in [N-1]} \mathcal U_{j}\mathbb P_{\nu_{\sum_{i=1}^j n_i}} \cdots \mathbb P_{\nu_{\sum_{i=1}^{j-1} n_i+1}}\Bigr) \mathcal{U}_0\kket{\psi}\times \\ \Bigl(\prod_{k \in [N]}  \int_{I_{k}}\mathrm{D}^{n_k}t\Bigr) \psi^\dagger_{\nu_n}(t_n)\cdots\psi^\dagger_{\nu_1}(t_1) \ket{\Omega}.
\end{multline}
We can rewrite the integrals as
\begin{equation}
      \int_{I_{k}
      }\mathrm{D}^{n_k}t \, \psi_{\nu_{\sum_{i=1}^k n_i}}^\dagger(t_{\sum_{i=1}^k n_i})\cdots \psi_{\nu_{\sum_{i=1}^{k-1} n_i+1}}^\dagger(t_{\sum_{i=1}^{k-1}n_i+1}) =  \frac{1}{n_k!}\Bigl(\frac{T}{N}\Bigr)^{n_k/2}
     \Bigl(\prod_{j=\sum_{i=1}^{k-1}n_i+1}^{\sum_{i=1}^k n_i}a_{\nu_j}^\dagger(k)\Bigr),
\end{equation}
and hence
\begin{multline}
    \ket{ \Phi^{\tilde{\mathbb H}}} =  
   \sum_{n=0}^\infty \sum_{\bm{\nu}}\sum_{\substack{\bm{n}\in \mathbb N_0^{N}\\ \|\bm{n}\|_1=n}} \Bigl(\frac{T}{N}\Bigr)^{\tfrac{n}{2}}
    \frac{\bbra{\id_{\rm SE} }\mathbb P_{\nu_{n}}\cdots\mathbb P_{\nu_{\sum_{i=1}^{N-1} n_i+1}}\Bigl(\prod_{j\in [N-1]} \mathcal U_{j}\mathbb P_{\nu_{\sum_{i=1}^j n_i}} \cdots \mathbb P_{\nu_{\sum_{i=1}^{j-1} n_i+1}}\Bigr) \mathcal{U}_0\kket{\psi}}{n_N!\cdots n_1!}\times \\ \times \Bigl(\prod_{k \in [N]}\prod_{j=\sum_{i=1}^{k-1}n_i+1}^{\sum_{i=1}^k n_i}a_{\nu_j}^\dagger(k)\Bigr) \ket{\Omega} .
    \label{eq:expl_form_delta_PCR}
\end{multline}
Therefore we see that $P_N \ket{ \Phi^{\tilde{\mathbb H}}}= \ket{ \Phi^{\tilde{\mathbb H}}}$. 

We proceed by establishing that this construction implies $P^{0,1}_N \ket{ \Phi^{\tilde{\mathbb H}}}= P^{0,1}_N\ket{\Psi}$.
The projector $ P^{0,1}_N$ annihilates all contributions in $\ket{ \Phi^{\tilde{\mathbb H}}}$ with more than one $a^\dagger(k)$ excitation in any interval $I_k$. We extend the summation index $\bm{\nu}\in [d_{\rm S}^4-1]^n$ in Eq.~(\ref{eq:expl_form_delta_PCR}) to a string $\bm{\tilde \nu}\in  \{[d_{\rm S}^4-1]\cup \{0\}\}^N$ by inserting the value zero $\tilde \nu_k = 0$ at those positions $k\in [N]$ where $n_k=0$.
For a given $\bm{n}$ the sum over $\tilde{\bm{\nu}}$ then varies all the non-zero elements of $\tilde{\bm{\nu}}$ in $[d_{\rm S}^4 - 1]$. 
Now
\begin{align}
    P^{0,1}_N \ket{ \Phi^{\tilde{\mathbb H}}}&= \sum_{n=0}^N\sum_{\substack{\bm{n}\in \{0,1\}^{N}\\ \|\bm{n}\|_1=n}}\sum_{\bm{\tilde \nu}} \Bigl(\frac{T}{N}\Bigr)^{\tfrac n 2}
    \bbra{\id_{\rm SE} }\mathbb P_{\tilde \nu_{N}}\Bigl(\prod_{j\in [N-1]} \mathcal U_{j}\mathbb P_{\tilde \nu_{j}}\Bigr) \mathcal U_0 \kket{\psi} \Bigl(\prod_{k \in \text{supp}(\bm{\tilde \nu})}a_{\tilde \nu_k}^\dagger(k)\Bigr) \ket{\Omega} \nonumber \\
     &= \sum_{n=0}^N\sum_{\substack{\bm{n}\in \{0,1\}^{N}\\ \|\bm{n}\|_1=n}}\sum_{\bm{\tilde \nu}} \Bigl(\frac{T}{N}\Bigr)^{\tfrac n 2}   \Upsilon_{\bm{ \tilde\nu}} \Bigl(\prod_{k \in \text{supp}(\bm{\tilde \nu})}a_{\tilde \nu_k}^\dagger(k)\Bigr) \ket{\Omega} \nonumber \\
     &= \underbrace{\sum_{n=0}^N\sum_{\substack{\bm{n}\in \{0,1\}^{N}\\ \|\bm{n}\|_1=n}}\sum_{\bm{\tilde \nu}} 
     \bigl(\prod_{k \in \text{supp}(\bm{\tilde \nu})}a_{\tilde \nu_k}^\dagger(k)\bigr) \ket{\Omega}\!\bra{\Omega}\bigl(\prod_{i \in \text{supp}(\bm{\tilde \nu})} a_{\tilde\nu_i}(i)\bigr)}_{=P_N^{(0,1)}}  P_N^{(0,1)} \ket{\Psi} \nonumber \\
 &= P_N^{(0,1)} \ket{\Psi},
\end{align}
where we used Eq.~(\ref{eq:discr_PT_N_steps_from_smoothing}) and the definition of $\Upsilon_{\bm \nu}$.

To obtain a Fock space vector in process-canonical representation  with a bounded generator, we consider a smoothed version of the distribution-valued drift matrix $\tilde{\mathbb H}(t) = \sum_{k=0}^{N-1}\delta(t-\tfrac{k T}{N})\mathbb H_{k} $. Given any smoothening parameter $\eta> 0$, we define
\begin{equation}
    \mathbb H_\eta(t) = \frac{1}{2\eta}\left(\bm1_{[0,2\eta]}(t)\mathbb H_0+\sum_{k=1}^{N-1}\bm1_{[\tfrac{kT}{N}-\eta,\tfrac{kT}{N}+\eta]}(t)\mathbb H_k\right)  ,
\end{equation}
and the corresponding PCR $\ket{\Phi^{\mathbb H_\eta}}:= \ket{\Phi[\mathbb H_\eta,\{\mathbb P_\nu\}_\nu,\kketbra{\psi}{\id_{SE}} ]}$. 
The distance of $\ket{\Phi^{\mathbb H_\eta}}$ to the positive and causal Fock space vector $\ket{\Psi}$ can now be bounded by a series of triangle inequalities and using the previously shown properties:
\begin{align}
    \|\ket{\Psi} - \ket{\Phi^{\mathbb H_\eta}}\| &\leq  \|\ket{\Psi} - P_N\ket{ \Phi^{\tilde{\mathbb H}}}\|+ \underbrace{ \|P_N\ket{ \Phi^{\tilde{\mathbb H}}}-  \ket{ \Phi^{\tilde{\mathbb H}}}\|}_{= 0}+ \|\ket{ \Phi^{\tilde{\mathbb H}}}- \ket{\Phi^{\mathbb H_\eta}}\| \nonumber \\
    &\leq   \|\ket{\Psi}- P_N^{(0,1)}\ket{\Psi} \|+ \underbrace{\|P_N^{(0,1)}(\ket{\Psi}- \ket{ \Phi^{\tilde{\mathbb H}}})\|}_{=0} +\|P_N^{(0,1)}\ket{ \Phi^{\tilde{\mathbb H}}}  - P_N\ket{ \Phi^{\tilde{\mathbb H}}}\|+  \|\ket{ \Phi^{\tilde{\mathbb H}}}- \ket{\Phi^{\mathbb H_\eta}}\| \nonumber \\
    &=  \|\ket{\Psi}- P_N^{(0,1)}\ket{\Psi} \| +\|(P_N^{(0,1)}   - P_N)\ket{ \Phi^{\tilde{\mathbb H}}}\|+  \|\ket{ \Phi^{\tilde{\mathbb H}}}- \ket{\Phi^{\mathbb H_\eta}}\| \label{eq:contr_bound_dist_PCR}. 
\end{align}
The following three lemmata provide a bound on each term in the final expression above, which can be made arbitrarily small by making $N$ and $\eta$ sufficiently large and small respectively.
The lemmata are proven in Appendix~\cref{app:lemmata_proofs}.

The second term in~\cref{eq:contr_bound_dist_PCR} is controlled by the following.
\begin{lemma}[Arbitrarily small projector difference]
\label{lem:diff_proj_cMPS}
   For an arbitrary Fock vector $\ket{\psi}$ it holds that for any $\epsilon > 0$ there exists an $N_0 \in \mathbb N$ such that for all $N > N_0$ 
   \begin{equation}
        \|(P_N^{(0,1)}   - P_N)\ket{\psi}\| < \frac \epsilon 3.
   \end{equation}
\end{lemma}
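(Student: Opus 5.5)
The plan is to identify both projectors as explicit operators on each particle sector. Then $(P_N - P_N^{(0,1)})$ becomes the cell-averaging operator restricted to the \emph{coincidence cells}, whose total measure vanishes as $N\to\infty$.

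\emph{Structure of the projectors.} Let $p_N$ be the orthogonal projection on $L^2([0,T])\otimes\mathbb{C}^{q}$ onto the span of $\{\bm{1}_{I_k}\otimes e_\nu\}$. Since the $a_\nu^\dag(k)$ create the orthonormal modes $\sqrt{N/T}\,\bm{1}_{I_k}\otimes e_\nu$, the projector $P_N$ of \cref{eq:projector_P_N} is the second quantisation $\Gamma(p_N)$. On the $n$-particle sector it acts on a (time-ordered, equivalently symmetrised) wavefunction $\phi^{(n)}_{\bm\nu}$ by averaging over each product cell $I_{k_1}\times\dots\times I_{k_n}$, i.e.\ as a conditional expectation $E_N$. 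The projector $P_N^{(0,1)}$ of \cref{eq:projector_P_N_01} keeps exactly the occupation configurations with $m_{(k,\cdot)}\le 1$ for all $k$. Hence on the $n$-sector $P_N^{(0,1)} = \bm{1}_{D_N^c}\,E_N$, where $D_N\subset\Delta_n$ is the union of cells in which two times lie in the same $I_k$. Because $D_N$ is a union of cells, $\bm{1}_{D_N}$ commutes with $E_N$. Consequently $P_N^{(0,1)}\le P_N$, and $P_N-P_N^{(0,1)}$ is itself an orthogonal projection; in particular it has operator norm at most $1$. It acts on the $n$-sector as $\bm{1}_{D_N}E_N = E_N\bm{1}_{D_N}$.

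\emph{Reduction to finitely many sectors.} Given $\epsilon>0$, choose $M$ such that the part of $\ket{\psi}$ in sectors $n>M$ has norm below $\epsilon/6$. Since $\|P_N - P_N^{(0,1)}\|\le 1$ and the operator preserves particle number, this tail contributes less than $\epsilon/6$ uniformly in $N$. For each sector $n\le M$ and each of the finitely many species strings $\bm\nu$, contractivity of $E_N$ gives
\begin{equation}
\|E_N\bm{1}_{D_N}\phi^{(n)}_{\bm\nu}\|_2\le\|\bm{1}_{D_N}\phi^{(n)}_{\bm\nu}\|_2 .
\end{equation}

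\emph{Vanishing of the coincidence set.} $D_N$ is contained in the union over pairs $i<j$ of $\{|t_i-t_j|<T/N\}$. Its Lebesgue measure is therefore at most $\binom{n}{2}\,2T^{n-1}(T/N)$, which tends to $0$ as $N\to\infty$. By absolute continuity of the integral of $|\phi^{(n)}_{\bm\nu}|^2\in L^1$ (or by dominated convergence, since $\bm{1}_{D_N}\to 0$ a.e.), each $\|\bm{1}_{D_N}\phi^{(n)}_{\bm\nu}\|_2\to0$. With finitely many terms we can pick $N_0$ such that the sum of squares over $n\le M$ and all $\bm\nu$ is below $(\epsilon/6)^2$ for all $N>N_0$. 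Adding the tail estimate gives $\|(P_N^{(0,1)}-P_N)\ket{\psi}\|<\epsilon/3$.

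The main point requiring care is the first step. One must check that the normalisations $1/m!$ and $\sqrt{N/T}$ in \cref{eq:projector_P_N,eq:projector_P_N_01} make $P_N$ exactly $\Gamma(p_N)$ with the time-ordered inner product, and make $P_N^{(0,1)}$ exactly its restriction to non-coincident cells. Both follow by comparing the projectors on the orthonormal occupation-number basis generated by the $a^\dag_\nu(k)$. No uniformity in $N$ beyond the norm bound $\le1$ is needed, so non-nested grids cause no issue.
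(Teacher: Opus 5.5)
Your proposal is correct and follows essentially the same route as the paper: you identify $P_N$ as the second quantisation of the single-particle cell projector, write $P_N-P_N^{(0,1)}=\bm{1}_{D_N}\Pi_N^{\otimes n}=\Pi_N^{\otimes n}\bm{1}_{D_N}$ on each $n$-particle sector, and then bound by $\|\bm{1}_{D_N}\phi_n\|$ using the vanishing measure of the collision set and absolute continuity, with the reduction to finitely many sectors coming from the uniform norm bound. One small caveat is that your per-$\bm\nu$ contractivity claim should be read in the symmetrised picture on $[0,T]^n$, since in the time-ordered picture the cell average on coincident cells mixes species orderings; the summed estimate you actually use holds either way.
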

The first term in Eq.~(\ref{eq:contr_bound_dist_PCR}) is controlled by the convergence of the projector $P_N^{(0,1)}$ to the identity.
\begin{lemma}[Convergence of projector $P_N^{(0,1)}$]
\label{lem:conv_PN01_proj}
    The sequence of projectors  $\bigl(P_N^{(0,1)}\bigr)_{N\in \mathbb N}$ converges in strong operator topology to the identity: For any vector $\ket{\psi} \in \Gamma$ it holds that
\begin{equation}
    \lim_{N \to \infty} \|P_N^{(0,1)}\ket{\psi}- \ket{\psi}\| = 0.
\end{equation} 
\end{lemma}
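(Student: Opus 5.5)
The statement is a strong-convergence claim for a family of uniformly bounded operators, so I will use the standard density argument: show $\|P_N^{(0,1)}\|\le 1$ for all $N$, prove convergence on a dense subset of $\Gamma$, and extend to all of $\Gamma$ by an $\varepsilon/3$ estimate.

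\emph{Step 1: identify $P_N^{(0,1)}$.} The vectors $\prod_{k\in[n]} a^\dag_{\nu_k}(\ell_k)\ket{\Omega}$ with $\ell_1<\dots<\ell_n$ are orthonormal. This uses $[a_\nu(k),a^\dag_{\nu'}(k')]=\delta_{\nu\nu'}\delta_{kk'}$ and the fact that all modes are distinct, so no combinatorial factors appear. Hence \cref{eq:projector_P_N_01} is an orthogonal projector, and in particular $\|P_N^{(0,1)}\|\le 1$. In the time-ordered wavefunction picture I expect it to act sector by sector as follows. The $n$-particle amplitude $\psi^{(n)}_{\bm\nu}$ is replaced by its average over each product cell $I_{\ell_1}\times\dots\times I_{\ell_n}$ with strictly increasing $\ell_1<\dots<\ell_n$. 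It is set to zero on cells where two times share an interval. The $n$-particle sector is killed entirely for $n>N$. The averaging interpretation follows from the normalisation $a_\nu(k)=\sqrt{N/T}\int_{I_k}\psi_\nu$, together with the observation that such off-diagonal product cells lie entirely inside the simplex $\Delta_n$. I would verify this identification first, by a short computation of $\bra{\Omega}\prod_k \psi_{\nu_k}(t_k)\,P_N^{(0,1)}\ket{\psi}$.

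\emph{Step 2: a dense set.} Take $\mathcal D$ to be the span of vectors with finitely many nonzero sectors, each with amplitudes continuous on the closed simplex $\overline{\Delta_n}$ (for all species labels). This set is dense in $\Gamma$, because continuous functions are dense in $L^2(\Delta_n)$ and finite-particle vectors are dense in the Fock space. For $\ket{\psi}\in\mathcal D$ it suffices to treat a single sector $n$ and a single label $\bm\nu$, and to take $N>n$. Write $f=\psi^{(n)}_{\bm\nu}$. Then $\|(P_N^{(0,1)}-\id)f\|_2^2$ splits into two contributions:
\begin{itemize}
\item \emph{Off-diagonal cells.} Here $|Pf-f|\le\omega_f(\sqrt n\,T/N)$, where $\omega_f$ is the modulus of uniform continuity of $f$. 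Since the total measure is at most $T^n/n!$, this contribution tends to zero.
\item \emph{Diagonal cells.} Here $Pf=0$, so the contribution is at most $\|f\|_\infty^2$ times the measure of the set of $\bm t$ with two coordinates in the same interval. That measure is bounded by $\binom n2\,\frac TN\,T^{n-1}$, which also tends to zero.
\end{itemize}
Summing over the finitely many $(n,\bm\nu)$ gives $P_N^{(0,1)}\ket{\psi}\to\ket{\psi}$ on $\mathcal D$.

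\emph{Step 3: extension to all of $\Gamma$.} For general $\ket\psi\in\Gamma$ and $\varepsilon>0$, pick $\ket{\phi}\in\mathcal D$ with $\|\psi-\phi\|<\varepsilon/3$. Using $\|P_N^{(0,1)}\|\le1$,
\[
\|(P_N^{(0,1)}-\id)\psi\|\le 2\|\psi-\phi\|+\|(P_N^{(0,1)}-\id)\phi\|<\varepsilon
\]
for all $N$ large enough.

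The main point requiring care is Step 1. One has to get the normalisations right, namely time-ordered versus symmetric integrals and the $\sqrt{N/T}$ factors, so that $P_N^{(0,1)}$ really is the cell-averaging (conditional expectation) map with the diagonal cells removed, rather than some rescaled version of it. Once that identification is secured, Steps 2 and 3 are routine.
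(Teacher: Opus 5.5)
Your proof is correct. The identification in Step~1 checks out: for $\ell_1<\dots<\ell_n$ one has $\bra{\Omega}\prod_k a_{\nu_k}(\ell_k)\ket{\psi}=(N/T)^{n/2}\int_{I_{\ell_1}\times\cdots\times I_{\ell_n}}\psi^{(n)}_{\bm{\nu}}$. The corresponding basis vector has time-ordered wavefunction $(N/T)^{n/2}\prod_k\bm{1}_{I_{\ell_k}}(t_k)$, so $P_N^{(0,1)}$ is exactly cell averaging on strictly ordered cells and zero on collision cells.

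Your route differs from the paper's. The paper keeps general $L^2$ amplitudes (in the symmetric picture) and splits $\id-P_N^{(0,1)}=(\id-P_N)+(P_N-P_N^{(0,1)})$. For the second piece, it writes $P_N-P_N^{(0,1)}$ as $\bm{1}_{D_N}\Pi_N^{\otimes n}$, shows the two factors commute, and finishes with absolute continuity of the integral as $|D_N|\to0$. For the first piece, it uses that $P_N$ is the second quantisation of the single-particle cell projector $\Pi_N$, proves $\Pi_N\to\id$ strongly via continuous functions, and lifts this to $\mathcal{H}^{\otimes n}$ by a telescoping estimate on product vectors.

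You instead restrict once to amplitudes continuous on the closed simplex. Uniform continuity in all $n$ variables (cell diameter $\sqrt{n}\,T/N$) then replaces the second-quantisation and tensor-lifting steps. The bound $\|f\|_\infty^2$ times the collision measure replaces the absolute-continuity step.

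Your argument is shorter, more elementary, and native to the time-ordered picture. The paper's argument yields a collision estimate valid for arbitrary $L^2$ vectors, which is how it obtains Lemma~\ref{lem:diff_proj_cMPS} at the same time. You lose nothing on that front, however. Since $P_N^{(0,1)}\le P_N\le\id$, the projector $P_N-P_N^{(0,1)}$ is dominated by $\id-P_N^{(0,1)}$. Hence $\|(P_N-P_N^{(0,1)})\ket{\psi}\|\le\|(\id-P_N^{(0,1)})\ket{\psi}\|$, and that lemma follows directly from your result.
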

Lemma 2 is in fact a corollary of Lemma 1.

Finally, the last term is dealt with by the following.
\begin{lemma}[Closeness of relevant PCRs]
\label{lem:smooth_PCR}
    For a fixed  $N \in \mathbb N$ and any $\epsilon > 0$, there exists an $\eta_{\epsilon,N} > 0$ such that 
    \begin{equation}
        \|\ket{ \Phi^{\tilde{\mathbb H}}}- \ket{\Phi^{\mathbb H_{\eta_{\epsilon,N}}}}\| < \frac \epsilon 3.
    \end{equation}
\end{lemma}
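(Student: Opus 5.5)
The plan is to compare the two vectors sector by sector through their wavefunctions, and to show that they differ only on a set of times of measure $O(\eta N)$. On that set the difference is uniformly bounded. For the formal vector $\ket{\Phi^{\tilde{\mathbb H}}}$ I will use as definition the explicit expansion \cref{eq:expl_form_delta_PCR}. Its $n$-particle wavefunction at a time-ordered $\bm t$ (with no $t_i$ on the grid $\Lambda_N$) is
\begin{equation*}
\tilde\phi^{(n)}_{\bm\nu}(\bm t)=\bbra{\id_{\rm SE}}\,\mathcal W_n\,\mathbb P_{\nu_n}\mathcal W_{n-1}\cdots\mathbb P_{\nu_1}\mathcal W_0\kket{\psi}.
\end{equation*}
Here $\mathcal W_j$ is the ordered product of those $\mathcal U_k$ whose grid points $kT/N$ lie in $(t_j,t_{j+1})$, with the conventions $t_0=0^-$, so that $\mathcal U_0$ always acts first, and $t_{n+1}=T$. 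For the smoothed PCR the wavefunction has the same form, with $\mathcal W_j$ replaced by the propagator $M_{\mathbb H_\eta}(t_{j+1},t_j)$.

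First I fix $\eta<T/(2N)$, so that the supports of the boxes in $\mathbb H_\eta$ are pairwise disjoint. Let $G_\eta\coloneqq[0,2\eta]\cup\bigcup_{k=1}^{N-1}[kT/N-\eta,kT/N+\eta]$. Whenever no $t_i$ lies in $G_\eta$, each box is traversed completely between two consecutive times. Since $\mathbb H_k$ commutes with itself, we get $\mathcal T\exp\int_{\text{box}}\mathbb H_\eta=\e^{\mathbb H_k}=\mathcal U_k$, while in between the generator vanishes. Hence $M_{\mathbb H_\eta}(t_{j+1},t_j)=\mathcal W_j$, and the two wavefunctions coincide exactly off the set $S^{(n)}_\eta\coloneqq\{\bm t\in\Delta_n:\exists i,\ t_i\in G_\eta\}$.

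Second, I bound both wavefunctions uniformly. Every $M_{\mathbb H_\eta}$ and every $\mathcal W_j$ is a vectorised unitary channel $U\otimes U^*$, hence unitary on Liouville space. Also $\|\mathbb P_\nu\|_\infty\le 1/d_{\rm S}$ for the normalised Paulis, $\|\kket{\id_{\rm SE}}\|=\sqrt{d_{\rm S}d_{\rm E}}$ and $\|\kket{\psi}\|\le1$. Together these give $|\tilde\phi^{(n)}_{\bm\nu}|,|\phi^{(n)}_{\eta,\bm\nu}|\le \sqrt{d_{\rm S}d_{\rm E}}\,d_{\rm S}^{-n}$, exactly as in \cref{sec:bounds_on_norms_cPT_wavefunctions}. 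The measure of $S^{(n)}_\eta$ is at most $n\,|G_\eta|\,T^{n-1}/(n-1)!\le 2\eta N\,n\,T^{n-1}/(n-1)!$. Summing over sectors and species with $q=d_{\rm S}^4-1$ then yields
\begin{equation*}
\|\ket{\Phi^{\tilde{\mathbb H}}}-\ket{\Phi^{\mathbb H_\eta}}\|_\Gamma^2\le 8\eta N d_{\rm S}d_{\rm E}\sum_{n\ge1}\frac{n\,(q/d_{\rm S}^{2})^{n}T^{n-1}}{(n-1)!}=:\eta\, C_{N,T,d_{\rm S},d_{\rm E}},
\end{equation*}
where the constant is finite because the series converges. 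Choosing $\eta_{\epsilon,N}<\min\{T/(2N),\ \epsilon^2/(9C)\}$ then gives the claim.

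The main obstacle is not the estimate but the precise meaning of $\ket{\Phi^{\tilde{\mathbb H}}}$. One must check that the time-ordered exponential of the $\delta$-drift, read through \cref{eq:expl_form_delta_PCR}, really places $\mathcal U_k$ between the jumps in $I_k$ and those in $I_{k+1}$, and $\mathcal U_0$ before all jumps. One must also check that the asymmetric first box $[0,2\eta]$ reproduces this ordering; it does, since jumps in $[0,2\eta]$ belong to $S_\eta$ and are discarded anyway. Configurations with some $t_i$ exactly on the grid form a null set, so they are irrelevant in $L^2$. Once this identification is made, the argument is just the dominated estimate above. Alternatively, one can take pointwise convergence off $\bigcup_\eta S_\eta$, use the uniform summable bound, and apply dominated convergence.
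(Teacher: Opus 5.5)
Your argument is correct and is essentially the paper's proof. The two amplitudes agree off the set where some $t_i$ lies in a smoothing window, and both are uniformly bounded by unitarity of the propagators and $\|\mathbb P_\nu\|_\infty = 1/d_{\rm S}$. A union bound shows this set has measure $O(\eta N)$ in each sector, and summing over sectors gives $\|\ket{\Phi^{\tilde{\mathbb H}}}-\ket{\Phi^{\mathbb H_\eta}}\|^2 = O(\eta)$.

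There is one slip in the disjointness threshold. Because the first window is $[0,2\eta]$, it is disjoint from $[T/N-\eta,T/N+\eta]$ only for $\eta<T/(3N)$, which is the paper's choice, not for $\eta<T/(2N)$. Since you take $\eta$ small anyway, this is fixed by using $\min\{T/(3N),\epsilon^2/(9C)\}$. Your measure bound also carries a harmless extra factor of $n$; the union bound actually gives $2\eta N\,T^{n-1}/(n-1)!$.
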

By the above lemmata, for any $\epsilon > 0$, there exists a $N_\epsilon \in \mathbb N$ and an $\eta_{\epsilon,N_{\epsilon}} > 0$ such that 
\begin{equation}
     \|\ket{\Psi} - \ket{\Phi^{\mathbb H_{\eta_{\epsilon,N_{\epsilon}}}}}\| < \tfrac{\epsilon}{3} + \tfrac{\epsilon}{3} + \tfrac{\epsilon}{3} = \epsilon.
\end{equation}
This proves Theorem 1 in the main text. 
\end{proof}

\section{Proofs of lemmata from PCR theorem proof}
\label{app:lemmata_proofs}

\subsection{Proof of Lemma~\ref{lem:diff_proj_cMPS} and Lemma~\ref{lem:conv_PN01_proj}}
\begin{proof}[\unskip\nopunct]
As we will show in more detail below, Lemma~\ref{lem:conv_PN01_proj} essentially follows from two intuitions (and their generalisation): that (i) the Lebesgue measure on $[0,T]^2$ of the set $\{(t,t') \in  [0,T]^2 : t = t'\}$ is zero, and (ii) any continuous function on $[0,T]$ can be arbitrarily well approximated by piecewise constant functions. Fact (i) will prove Lemma~\ref{lem:diff_proj_cMPS}. 
We will be using the canonical unitary identification of the single-particle Hilbert space $\mathcal H = \oplus_{j=1}^q L^2([0,T])$, where $q = d_{\rm S}^4-1$, with the tensor product
\begin{align}
    \mathcal H  &\simeq L^2([0,T])\otimes \mathbb C^q \nonumber \\
    (f_\nu)_{\nu = 1}^q &\mapsto \sum_{\nu = 1}^q f_\nu \otimes v_\nu,
\end{align}
where $\{v_\nu\}_\nu$ is an orthonormal basis of $\mathbb C^q$. We denote
\begin{equation}
    e_k = \sqrt{\tfrac{N}{T}} \bm{1}_{I_k}, \quad k \in [N]
\end{equation}
such that $\{e_k\}_{k \in [N]}$ is an orthonormal set in $L^2([0,T])$. Let
\begin{equation}
    \Pi_N := \sum_{k \in [N]} \ket{e_k}\bra{e_k} \otimes \id_{\mathbb C^q}
\end{equation}
be the single-particle orthogonal projector onto the step functions constant on the cells $I_k$. We will call $\ket{e_k}$ tensored with the particle species basis states the \textit{lattice basis states}. The Fock space projector $P_N$ in \cref{eq:projector_P_N} is the second quantisation of $\Pi_N$ (see \cite[Section X.7]{reedsimon_vol2}). This means that $P_N$ acts on the $n$-particle subspace of $\Gamma_T^{(q)}$ as $\Pi_N^{\otimes n}$.  

Because $P_N^{(0,1)}$ is an orthogonal projector for every $N$, it is a bounded operator and it satisfies the uniform bound $\|P_N^{(0,1)}\|=1$ on its operator norm.
Therefore, to prove the lemma it is enough to show $\lim_{N \to \infty}\|P_N^{(0,1)}\ket{\phi} - \ket{\phi}\|= 0$ for $\ket{\phi} \in \mathcal D$ being an element in a dense subset $\mathcal D \subset \Gamma$.
Let $\mathcal D$ be the set of finite-particle vectors, i.e. those vectors for which there exists $n_0$ such that all $n$-particle amplitudes with $n>n_0$ vanish.
This set is dense in $\Gamma$.
As $P_N^{(0,1)}$ does not mix the particle sectors, it is therefore enough to show $\lim_{N \to \infty}\|P_N^{(0,1)}\ket{\phi_n} - \ket{\phi_n}\|= 0$ for all elements $\ket{\phi_n}$ of the $n$-particle sector.
In fact, this is the case also for the orthogonal projectors $P_N$ and $P_N - P_N^{(0,1)}$ (this is an orthogonal projector since $P_N^{(0,1)} \le P_N$), as will be useful below.

We will denote by  $\phi_n \in L^2([0,T]^n)\otimes (\mathbb C^{q})^{\otimes n} $ the symmetric $n$-particle wavefunction of $\ket{\phi_n}$.
By the triangle inequality we get that
\begin{equation}
    \|P_N^{(0,1)}\ket{\phi_n} - \ket{\phi_n}\| \leq \underbrace{\|(P_N -P_N^{(0,1)})\ket{\phi_n}\|}_{(i)} + \underbrace{\|\ket{\phi_n} - P_N\ket{\phi_n}\|}_{(ii)}.
    \label{eq:proof_lem3_tot_exp}
\end{equation}
(i) We define the collision set 
\begin{equation}
    D_N := \bigl\{(t_1,\dots,t_n)\in[0,T]^n \;:\; \exists\, i<j,\ \text{ such that } t_i,t_j\in I_k \text{ for some } k\in[N] \bigr\}. 
\end{equation}
The lattice basis states 
with $n$ excitations are labelled by $(\ell_1,\cdots,\ell_n)$ with $1 \leq \ell_1,\cdots,\ell_n \leq N $. 
The wavefunctions of the lattice basis states having all $\ell_j$'s distinct are supported on the complement $D_N^c$, while if some labels are repeated then the wavefunction is supported on $D_N$. Therefore, when expanding $\Pi_N^{\otimes n}\phi_n$ in the lattice basis a multiplication by the indicator function $\bm{1}_{D_N}$ precisely deletes the singly-occupied components in the expansion and preserves the ones which are multiply-occupied. But this is exactly the action of $P_N -P_N^{(0,1)}$ in every finite-particle sector. Hence, 
\begin{equation}
    \|(P_N -P_N^{(0,1)})\ket{\phi_n}\|^2 = \|\bm{1}_{D_N}\Pi_N^{\otimes n}\phi_n\|^2 . 
\end{equation}
Both operators on the right hand side act as the identity on $(\mathbb C^q)^{\otimes n}$, the particle species part of the $n$-particle subspace. 
Furthermore, the two operators $\bm{1}_{D_N}$ and $\Pi_N^{\otimes n}$ commute.
To see this, consider their action on the box decomposition $L^2([0,T]^n)=\bigoplus_{\boldsymbol \ell\in[N]^n}L^2(B_{\bm{\ell}})$, where $B_{\bm{\ell}}:=I_{\ell_1}\times\cdots\times I_{\ell_n}$: the operator $\Pi_N^{\otimes n}$
preserves each summand, acting on it as the rank-one projection onto the constant functions on the block, while
$\mathbf 1_{D_N}$ acts on each summand as a \emph{scalar} ($1$ if $\bm{\ell}$ has a repeated
index, $0$ otherwise). A block-scalar
operator commutes with any block-preserving one, so
$\mathbf 1_{D_N}\Pi_N^{\otimes n}=\Pi_N^{\otimes n}\mathbf 1_{D_N}$.
Now, since $\Pi_N^{\otimes n}$ is a projector, we get
\begin{align}
     \|(P_N -P_N^{(0,1)})\ket{\phi_n}\|^2 &= \|\Pi_N^{\otimes n}\bm{1}_{D_N}\phi_n\|^2 \nonumber \\
     &\leq \|\Pi_N^{\otimes n}\| \|\bm{1}_{D_N}\phi_n\|^2 \nonumber \\
     &= \frac{1}{n!}  \int_{D_N}\mathrm{d}^n\bm{t}\|\phi_n(t_1,\cdots,t_n)\|_2^2 ,
\end{align}
where $\|\cdot\|_2$ refers to the Frobenius norm on the particle species part of the $n$-particle subspace $(\mathbb C^q)^{\otimes n}$.
As $N \to \infty$, the Lebesgue measure of the collision set vanishes. To see this, consider 
\begin{align}
    \int_{D_N}\mathrm{d}^n\bm{t} &= \underbrace{\bigl(N^n -\frac{N!}{(N-n)!}\bigr) }_{\text{number of }\bm{\ell}\in [N]^n \text{with repeated indices}}\left(\frac{T}{N}\right)^n \nonumber  \\
   & = T^n \bigl(1 - \prod_{j=0}^{n-1}(1-\tfrac{j}{N})\bigr) \nonumber \\
   &= T^n\left(\sum_{j=0}^{n-1}\tfrac{j}{N}+ \mathcal O(N^{-2})\right).
\end{align}
Therefore, there exists an $N_0$ such that for all $N \geq N_0$ it holds that
\begin{equation}
    \int_{D_N}\mathrm{d}^n\bm{t} \leq \binom{n}{2}\tfrac{T^n}{N},
\end{equation} 
which means that the measure vanishes in the $N \to \infty$ limit. Now, since $\|\phi_n(\cdot)\|_2^2 \in L^1([0,T]^n)$ and $\|\phi_n(t_1,\cdots,t_n)\|_2^2 \ge 0$, absolute continuity of the Lebesgue integral together with $\int_{D_N}\mathrm{d}^n\bm{t} \to 0$ gives that 
\begin{equation}
    \|(P_N -P_N^{(0,1)})\ket{\phi_n}\| \xrightarrow[N\to\infty]{}\; 0.
      \label{eq:proof_lem3_conv_i}
\end{equation}
Because $P_N -P_N^{(0,1)}$ is an orthogonal projector which does not mix particle spaces this proves Lemma~\ref{lem:diff_proj_cMPS}.

(ii) Let $E_N \coloneqq \sum_{k\in[N]} \ket{e_k}\!\bra{e_k}$.
We first show that $E_N \to \id$ strongly on $L^2([0,T])$. We have for any $t \in [0,T]$ that 
\begin{equation}
    |(E_N f)(t)-f(t)| \leq \sup_{s \in I_k}|f(s)-f(t)| 
\end{equation}
where $k$ is such that $t \in I_k$. The right hand side is known as the \textit{modulus of continuity}. For a continuous function $f \in C([0,T])$ it holds that $ |(E_N f)(t)-f(t)| \to 0$ for any $t$ and hence also $\|E_N f -f \| \to 0$ in $L^2$ norm. Since $E_N$ is a projector, its operator norm satisfies $\|E_N\| = 1$ for all $N$.
From $C([0,T])$ being dense in $L^2([0,T])$ it hence follows that $E_N \to \id $ strongly on all of $L^2([0,T])$. 
Consequently, also $\Pi_N \to \id$ in strong operator topology on $\mathcal H$.
Since $\|\ket{\phi_n}-P_N \ket{\phi_n}\| =\|\phi_n - \Pi_N^{\otimes n}\phi_n\|$, it remains to show that this convergence is lifted to the $n$-fold tensor power of $\mathcal H$. 
Again, we do this by showing that $\Pi_N^{\otimes n} \to \id^{\otimes n}$ strongly on a dense subspace of $\mathcal H^{\otimes n}$, which we choose to be the set of finite linear combinations of product vectors $f_1 \otimes \cdots \otimes f_n$.
For such a product vector we obtain, using a telescoping sum
\begin{align}
    \|(\Pi_N^{\otimes n} - \id^{\otimes n})f_1 \otimes \cdots \otimes f_n\| &= \| \sum_{j=1}^{n}\bigl(\id^{\otimes(j-1)}\otimes(\Pi_N-\id)\otimes\Pi_N^{\otimes(n-j)}\bigr)f_1 \otimes \cdots \otimes f_n\| \nonumber \\
    &\leq \sum_{j=1}^{n} \Bigl(\prod_{i\neq j}\|f_i\|\Bigr)\,\|(\Pi_N-\id)f_j\|
  \xrightarrow[N\to\infty]{}\;0,
\end{align}
where the convergence follows from $\|f_j\|< \infty$. As before, this convergence extends to all of $\mathcal H^{\otimes n}$ and we get
\begin{equation}
    \|\ket{\phi_n}-P_N \ket{\phi_n}\|\xrightarrow[N\to\infty]{}\;0.
    \label{eq:proof_lem3_conv_ii}
\end{equation}

Combining \cref{eq:proof_lem3_tot_exp,eq:proof_lem3_conv_i,eq:proof_lem3_conv_ii} concludes the proof of Lemma~\ref{lem:conv_PN01_proj}. 
\end{proof}

\subsection{Proof of Lemma~\ref{lem:smooth_PCR}}
\begin{proof}[\unskip\nopunct]
We will show that $\|\ket{ \Phi^{\tilde{\mathbb H}}}- \ket{\Phi^{\mathbb H_{\eta}}}\|$ for $\eta \to 0^+$ is $\mathcal O(\sqrt{\eta})$. We fix $N$, write $W_k^{\eta}:= [\tfrac{kT}{N}-\eta,\tfrac{kT}{N}+\eta]$ with $k \in [N-1]$ and $W_0^{\eta} = [0,2\eta]$,  and assume that $\eta < \tfrac{T}{3N}$ such that the small windows are mutually disjoint. With
$\tilde{\mathbb H}(t)=\sum_{k=0}^{N-1}\delta \bigl(t-\tfrac{kT}{N}\bigr)\mathbb H_k$ and
$\mathbb H_\eta(t)=\tfrac{1}{2\eta}\sum_{k=0}^{N-1}\bm{1}_{W_k^\eta}(t)\,\mathbb H_k$, the associated cMPSs are
\begin{equation}
     \ket{\Phi^{H}}
  =
  \llangle \id_{\rm SE}|\;
  \mathcal T\exp\Bigl(\int_0^T\!\mathrm{d} t\;\bigl[H(t)+\textstyle\sum_{\nu}\mathbb P_\nu\,\psi_\nu^\dagger(t)\bigr]\Bigr)
  \;|\psi\rrangle\ket{\Omega},
  \qquad H\in\{\tilde{\mathbb H},\,\mathbb H_\eta\}.
\end{equation}
The particle wavefunctions of the two cMPSs are
\begin{equation}
    \varphi^{H,(n)}_{\bm{\nu}}(\bm{t})
  =\bbra{\id_{\rm SE}} M_H(T,t_n)\mathbb P_{\nu_n}M_H(t_n,t_{n-1})\cdots\mathbb P_{\nu_1}\,M_H(t_1,0)\,
  \kket{\psi},
  \qquad H\in\{\tilde{\mathbb H},\mathbb H_\eta\},
\end{equation}
where $M_H(t,t')= \mathcal T \exp[\int_{t'}^t \mathrm{d}s H(s)]$. Therefore,
\begin{align}
     \|\ket{\Phi^{\mathbb H_\eta}}-\ket{\Phi^{\tilde{\mathbb H}}}\|^2 &= \sum_{n=0}^\infty \sum_{\bm{\nu}} \tint \bigl| \varphi^{\mathbb H_\eta,(n)}_{\bm{\nu}}(\bm{t})- \varphi^{\tilde{\mathbb H},(n)}_{\bm{\nu}}(\bm{t}) \bigr|^2.
 \end{align}
 Since the drift matrix of both cMPSs is supported only in $\bigcup_k W_k^\eta$ and the jump matrices are the same for both cMPSs, the integrand on the right-hand side in the equation above has support only on the set $D_\eta^{(n)}:= \bigl\{\bm{t} : \exists j \in [n], t_j \in \bigcup_k W_k^\eta \bigr\}$. Then,
 \begin{align}
     \|\ket{\Phi^{\mathbb H_\eta}}-\ket{\Phi^{\tilde{\mathbb H}}}\|^2 &= \sum_{n=0}^\infty \sum_{\bm{\nu}} \int_{D_\eta^{(n)} \cap \{0<t_1 < \cdots t_n<T\}} \mathrm{d}^n \bm{t} \bigl| \varphi^{\mathbb H_\eta,(n)}_{\bm{\nu}}(\bm{t})- \varphi^{\tilde{\mathbb H},(n)}_{\bm{\nu}}(\bm{t}) \bigr|^2 \nonumber \\
     &\leq \sum_{n=0}^\infty \sum_{\bm{\nu}} \bigl( \sup_{\bm{t}\in D_\eta^{(n)} \cap \{0<t_1 < \cdots t_n<T\}}\bigl| \varphi^{\mathbb H_\eta,(n)}_{\bm{\nu}}(\bm{t})- \varphi^{\tilde{\mathbb H},(n)}_{\bm{\nu}}(\bm{t}) \bigr|^2\bigr) \int_{D_\eta^{(n)} \cap \{0<t_1 < \cdots t_n<T\}} \mathrm{d}^n \bm{t}. 
 \end{align}
The pointwise difference in the amplitude functions is bounded by 
\begin{align}
    \sup_{\bm{t}\in D_\eta^{(n)} \cap \{0<t_1 < \cdots t_n<T\}}\bigl| \varphi^{\mathbb H_\eta,(n)}_{\bm{\nu}}(\bm{t})- \varphi^{\tilde{\mathbb H},(n)}_{\bm{\nu}}(\bm{t}) \bigr|^2 &\leq \bigl(2 \|\kket{\psi}\bbra{\id_{\rm SE}}\|_1 \|\mathbb P_{\nu_n}\|_\infty \cdots \|\mathbb P_{\nu_1}\|_\infty \bigr)^2 \nonumber \\
    &= 4 \, \|\kket{\psi}\| ^2 \, \|\bbra{\id_{\rm SE}}\|^2 \, d_{\rm S}^{-2n} \nonumber \\
    &= 4 d_{\rm S}^{2N-2n} ,
\end{align}
where we used that both $\mathbb H_\eta$ and $\tilde{\mathbb H}$ are generators of unitary evolution and the normalisation of $\kket{\psi}$. For the remaining integral we use that $D_\eta^{(n)} = \bigcup_{j=1}^{n} \{\bm{t}: t_j \in \bigcup_k W_k^\eta \}$ and a union bound then gives
\begin{align}
    \int_{D_\eta^{(n)} \cap \{0<t_1 < \cdots t_n<T\}} \mathrm{d}^n \bm{t} &= \frac{1}{n!}\int_{D_\eta^{(n)}} \mathrm{d}^n \bm{t} \nonumber  \\
    &\leq \frac{1}{n!} \sum_{j=1}^n \int_{\{\bm{t}: t_j \in \bigcup_k W_k^\eta \}}  \mathrm{d}^n \bm{t} \nonumber \\
    &= \frac{1}{n!} n 2\eta NT^{n-1}
\end{align}
Overall, we get 
\begin{align}
    \|\ket{\Phi^{\mathbb H_\eta}}-\ket{\Phi^{\tilde{\mathbb H}}}\|^2 &\leq \sum_{n=0}^\infty 4 (d_{\rm S}^4-1)^n d_{\rm S}^{2N-2n}\frac{1}{n!} n 2\eta N T^{n-1} \nonumber \\
    &\leq \eta \, 8N d_{\rm S}^{2+2N} e^{T d_{\rm S}^2}
\end{align}
which proves the lemma.
\end{proof}

\section{Choi duality in the continuum}

In this appendix we present the details on the Choi duality in the continuum and the proof of Theorem 2 in the main text. We recall that the difficulty in formulating a Choi duality in the continuum is that the Fock space 
\begin{equation}
    \Gamma = \Gamma \bigl(L^2([0, T])^{\oplus (d_{\rm S}^4 - 1)}\bigr) = \Gamma\bigl(L^2([0, T])\bigr)^{\otimes (d_{\rm S}^4 - 1)}\label{eq:rel_Fock_space_det}
\end{equation}
does not naturally factor into forward- and backward-contour Fock spaces because it also contains cross species. In particular, labelling the particle species by $\nu = (\mu, \mu')$, where $\nu = 1, \dots, d_{\rm S}^4 - 1$ and $\mu, \mu' = 0, \dots, d_{\rm S}^2 - 1$ with the restriction $(\mu, \mu') \neq (0, 0)$, the part of the Fock space associated to the \textit{cross} species is the problematic part.  
To remedy this issue we have to treat the effect $\ket{\mathcal I}$ of the continuous instrument and the cPT $\ket{\Upsilon}$ in the contraction $\braket{\mathcal I|\Upsilon}$ differently.
We seek two distinct maps $\mathcal P: \ket{\Upsilon} \to \hat\Upsilon$ and $\mathcal V: \ket{\mathcal I} \to \check{\mathcal I} $ which act on cPTs and instrument effects, respectively. 
The aim is to transfer the contraction $\braket{\mathcal I|\Upsilon}$ to the operator space in a sensible way: 
\begin{equation}
   \braket{\mathcal I | \Upsilon}=  \braket{\check{\mathcal{I}},\hat \Upsilon_T}_{\rm ker}
 \label{eq:Choi_dual_inner_prod}.
\end{equation}

The rest of this Appendix contains the following. In the first section, we define the map $\mathcal{P}$ and a recovery map $\mathcal{R}$, showing that it inverts the action of $\mathcal{P}$ on all cPTs.
This is the content of the first part of Theorem 2 in the main text. 
In the second section we define the map $\mathcal V$ and elaborate how, together with $\mathcal P$ and $\mathcal R$, it provides a sensible notion of the Choi duality for cPTs, giving a precise meaning to $\braket{\cdot, \cdot}_{\text{ker}}$ in~\cref{eq:Choi_dual_inner_prod}.
We will find that when $\ket{\mathcal I}$ satisfies certain conditions, $\check{\mathcal{I}}$ is a bounded operator on $\Gamma_T^{(d_{\rm S}^2-1)} \equiv \gamma$ and then~\cref{eq:Choi_dual_inner_prod} becomes the Hilbert--Schmidt inner product.
If this is not the case, $\check{\mathcal{I}}$ is a sesquilinear form on $\gamma$.

\subsection{Choi operator of a cPT}

We specify the map $\mathcal{P}: \Gamma \rightarrow \text{HS}(\gamma)$, where $\text{HS}$ is the space of Hilbert--Schmidt operators, by defining the resulting operator in terms of its \textit{integration kernel}.
Given an integration kernel $f^{(n; n')}_{\bm{\mu}; \bm{\mu'}}(\bm{t}; \bm{t'})$ the associated operator is defined via
\begin{equation}
    \sum_{n,n'=0}^\infty \sum_{\bm{\mu} \in [d_{\rm S}^2 - 1]^n} \sum_{\bm{\mu'} \in [d_{\rm S}^2 - 1]^{n'}} \int D^n\bm{t} \int D^{n'}\bm{t'} f^{(n; n')}_{\bm{\mu}; \bm{\mu'}}(\bm{t}; \bm{t'}) \phi_{\mu_1}^\dag(t_1) \dots \phi_{\mu_n}^\dag(t_n) \ket{\tilde \Omega}\!\bra{\tilde{\Omega}} \phi_{\mu'_1}(t_1') \dots \phi_{\mu'_{n'}}(t_{n'}').
\end{equation}
Here, we denote with $\ket{\tilde \Omega}$ and $\phi_{\mu}(t)$ the vacuum state and the field operators in the Fock space $\gamma$, respectively.
In general, integration kernels can be distribution-valued: for instance, the integration kernel of the identity operator is
\begin{equation}
    f^{(n;n')}_{\bm{\mu};\bm{\mu'}}(\bm{t}; \bm{t'}) \coloneqq 
    \begin{cases}
        \delta_{\bm{\mu}, \bm{\mu'}} \prod_{i=1}^n \delta(t_i - t'_i) & n=n'\\
        0 & \text{otherwise}.
    \end{cases}
\end{equation}
However, Hilbert--Schmidt operators have $L^2$-function-valued integration kernels.
We define the integration kernel of $\mathcal{P}(\ket{\Phi})$ in terms of wavefunctions of $\ket{\Phi}$ by
\begin{equation}
    \mathcal{P}(\ket{\Phi})^{(n; n')}_{\bm{\mu}; \bm{\mu'}}(\bm{t}; \bm{t'}) \coloneqq \Phi^{n + n'}_{(\bm{\mu}, \bm{0}), (\bm{0}, \bm{\mu'})}(\bm{t}, \bm{t'}),
    \label{eq:def_P_map_CHoi}
\end{equation}
for all $n, n' \in \mathbb N_0$ and $\bm{\mu}\in [d_{\rm S}^2-1]^n, \bm{\mu'}\in [d_{\rm S}^2-1]^{n'}$. This places  the output of $\mathcal P$ into $\text{HS}(\gamma)$. Note that this definition discards the wavefunctions corresponding to cross species particles in $\Gamma$. 

Even though $\mathcal{P}$ is clearly not invertible, we will define a \textit{recovery} operator 
\begin{equation}
    \label{eq:def_P_tilde_choi}
    \mathcal{R}: \mathcal S \rightarrow \Gamma,
\end{equation}
where $\mathcal S$ is a subspace (specified in a moment) of the space of trace-class operators $\text{TC}(\gamma)$, which will invert the action of $\mathcal{P}$ on all cPTs.
To define this operator, we will need the singular value decomposition (SVD):
the kernel of any trace-class operator $A$ can be written as
\begin{equation}
    A^{(n;n')}_{\bm{\mu};\bm{\mu'}}(\bm{t};\bm{t'}) = \sum_k\sigma_k u^{(k,n)}_{\bm\mu}(\bm t)\left(v^{(k,n')}_{\bm{\mu'}}(\bm{t'})\right)^*,
\end{equation}
where $v^{(k,n)}_{\bm{\mu}} \in L^2([0,T]^n)$, $u^{(k,n')}_{\bm{\mu'}} \in L^2([0,T]^{n'})$ and $\sum_k\sigma_k <\infty$.
By the Cauchy--Schwarz inequality, the summability of $(\sigma_k)_k$ implies that the sum converges absolutely almost everywhere.
Using the SVD, we define the wavefunctions of $\mathcal{R}(A)$ for any trace class $A$ by
\begin{equation}\label{eq:P tilde def}
    \mathcal{R}(A)^{(n + n^\ell + n^r)}_{(\bm{\mu}, \bm{\mu'}), (\bm{\mu^\ell}, \bm{0}), (\bm{0}, \bm{\mu^r})}(\bm{t}, \bm{t^\ell}, \bm{t^r}) \coloneqq \sum_k\sigma_k u_{\bm\mu,\bm{\mu^\ell}}^{(k,n+n^\ell)}(\bm t,\bm{t^\ell})\left(v_{\bm{\mu'},\bm{\mu^r}}^{(k,n+n^r)}(\bm t,\bm{t^r})\right)^*,
\end{equation}
where $n, n^\ell, n^r \in \mathbb{N}_0$, $\bm{\mu}, \bm{\mu'} \in [d_{\rm S}^2 - 1]^n$, $\bm{\mu^\ell}\in[d_{\rm S}^2 - 1]^{n^\ell}$ and $\bm{\mu^r}\in[d_{\rm S}^2 - 1]^{n^r}$ as well as $\bm{t}\in[0,T]^n$, $\bm{t^\ell}\in[0,T]^{n^\ell}$ and $\bm{t^r} \in [0, T]^{n^r}$.
As the LHS is not guaranteed to be square-integrable in general for any $A \in  \text{TC}(\gamma)$, we define the domain $\mathcal S$ of $\mathcal{R}$ as exactly those trace-class operators for which~\cref{eq:P tilde def} are the wavefunctions of a vector in $\Gamma$.

Suppose that a trace-class operator $A$ happens to have continuous function-valued integration kernel.
For such an operator~\cref{eq:P tilde def} simplifies
\begin{equation}\label{eq:P tilde cont}
    \mathcal{R}(A)^{(n + n^\ell + n^r)}_{(\bm{\mu}, \bm{\mu'}), (\bm{\mu^\ell}, \bm{0}), (\bm{0}, \bm{\mu^r})}(\bm{t}, \bm{t^\ell}, \bm{t^r}) = A^{(n+n^\ell; n+ n^r)}_{\bm{\mu}, \bm{\mu^\ell}; \bm{\mu'}, \bm{\mu^r}}(\bm{t}, \bm{t^\ell}; \bm{t}, \bm{t^r}).
\end{equation}
The reason why we needed the SVD to define the action of $\mathcal{R}$ on a general trace class $A$ is that the RHS of~\cref{eq:P tilde cont} would not be well defined under $L^2$ equivalence of the kernel functions.

Now, we will see that $\mathcal{R}$ inverts the action of $\mathcal{P}$ on all PCR $\ket{\Upsilon}\in\Gamma$.
First, we compute their trace.
\begin{lemma}\label{lem:trace of PCR choi}
    Any $\ket{\Upsilon}\in\Gamma$ in the PCR satisfies $\tr[\mathcal{P}(\ket{\Upsilon})] = \e^{T\left(d_{\rm S} - \frac1{d_{\rm S}}\right)}$.
\end{lemma}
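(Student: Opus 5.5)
The plan is to compute the trace directly from the integral kernel of $\mathcal{P}(\ket{\Upsilon})$, using that for a PCR every wavefunction is a continuous function of the times. Concretely, for $\ket{\Upsilon}=\ket{\Phi[\mathbb{H},\{\mathbb{P}_\nu\},\kket{\psi}\!\bbra{\id}]}$ with bounded $H$, the amplitudes
\begin{equation*}
\Upsilon^{(n)}_{\bm\nu}(\bm t)=\bbra{\id_{\rm SE}}U(T,t_n)\mathbb{P}_{\nu_n}\cdots\mathbb{P}_{\nu_1}U(t_1,0)\kket{\psi}
\end{equation*}
are continuous in $\bm t$ (indeed Lipschitz, as in the proof of \cref{lem:gelfand}). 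Therefore the kernel in \cref{eq:def_P_map_CHoi} is continuous, so its diagonal is pointwise well defined.

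First I would fix how the trace is computed. I would expect positivity of $\mathcal{P}(\ket{\Upsilon})$ to follow from the forward/backward factorisation of a PCR. For a positive operator with continuous kernel (on each finite-particle sector, a Mercer-type statement), the trace equals the integral of the diagonal of the kernel. This gives
\begin{equation*}
\tr[\mathcal{P}(\ket{\Upsilon})]=\sum_{n\ge0}\sum_{\bm\mu\in[d_{\rm S}^2-1]^n}\int_0^T{\rm D}^n\bm t\;\Upsilon^{(2n)}_{(\bm\mu,\bm 0),(\bm 0,\bm\mu)}(\bm t,\bm t).
\end{equation*}
Here each time $t_i$ carries two coincident insertions, $\mathbb{P}_{(\mu_i,0)}$ and $\mathbb{P}_{(0,\mu_i)}$. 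These act on the forward and backward legs respectively, so they commute and their relative time-ordering is immaterial. Their product is $\mathbb{P}_{(\mu_i,0)}\mathbb{P}_{(0,\mu_i)}=\|P_0\|^2\,P_{\mu_i}\otimes P_{\mu_i}^*$, i.e.\ $P_0^2\propto\id$ times the conjugation superoperator $\rho\mapsto P_{\mu}\rho P_{\mu}^\dagger$.

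Second, I would perform the sum over $\bm\mu$ inside the time-ordered product. The Pauli completeness relation (the same one behind \cref{eq:sum of paulis}) gives $\sum_{\mu\ge1}P_\mu\otimes P_\mu^*=\kketbra{\id_{\rm S}}{\id_{\rm S}}-c\,\id$, with $c$ fixed by the normalisation of the $P_\mu$. So each time slot contributes a fixed superoperator $\mathbb{G}$ on S, tensored with $\id_{\rm E}$. The diagonal sum is then exactly the Dyson series of a modified generator:
\begin{equation*}
\tr[\mathcal{P}(\ket{\Upsilon})]=\bbra{\id_{\rm SE}}\,\mathcal{T}\e^{\int_0^T{\rm d}t\,[\mathbb{H}(t)+\mathbb{G}\otimes\id_{\rm E}]}\kket{\psi}.
\end{equation*}
Convergence is absolute, since $\|\mathbb{G}\|$ is finite and the propagators are unitary.

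Third, I would use trace preservation. Since $\mathbb{H}$ is a Liouvillian of unitary dynamics, $\bbra{\id_{\rm SE}}\mathbb{H}(t)=0$. Meanwhile $\bbra{\id_{\rm SE}}$ is a left eigenvector of $\mathbb{G}\otimes\id_{\rm E}$ with eigenvalue $\lambda=\bbra{\id_{\rm S}}\id_{\rm S}\rrangle\cdot(\text{prefactor})-c\cdot(\text{prefactor})$. Hence $\bbra{\id_{\rm SE}}$ propagates as $\e^{T\lambda}\bbra{\id_{\rm SE}}$. Using $\bbra{\id_{\rm SE}}\psi\rrangle=1$ then gives $\tr=\e^{T\lambda}$.

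The remaining task is bookkeeping of the Pauli normalisation: inserting the paper's conventions for $P_0$ and $\|\mathbb{P}_\nu\|$ should yield $\lambda=d_{\rm S}-1/d_{\rm S}$. I expect this constant-chasing, together with rigorously justifying trace $=$ integral of the diagonal, to be the main obstacle. For the latter I would first try restricting to each finite $(n;n)$ block, where the kernel is continuous on a compact set, and then summing the blocks monotonically.
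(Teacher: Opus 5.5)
Your argument is the paper's own computation. You read off the species-summed diagonal of the kernel of $\mathcal{P}(\ket{\Upsilon})$, collapse the $\bm\mu$-sum with $\sum_{\mu\ge1}P_\mu\otimes P_\mu^*=\kketbra{\id_{\rm S}}{\id_{\rm S}}-\frac{1}{d_{\rm S}}\id_{\rm S}$, and use invariance of $\bbra{\id_{\rm SE}}$ under the unitary propagators. The paper does the sum pointwise in $\bm t$ by binomial expansion and then integrates a constant over $\Delta_n$. Your resummation into $\bbra{\id_{\rm SE}}\mathcal{T}\e^{\int_0^T{\rm d}t\,[\mathbb{H}+\mathbb{G}\otimes\id_{\rm E}]}\kket{\psi}$, with $\bbra{\id_{\rm SE}}$ a left eigenvector, is the same calculation. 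Your block-wise Mercer justification, using positivity from the forward/backward factorisation in \cref{eq:positive decomp PCR}, is more careful than the paper, which appeals to continuity alone.

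The gap is the step you defer as bookkeeping. With the conventions the paper actually states, it does not give $d_{\rm S}-1/d_{\rm S}$. The $P_\mu$ are orthonormal, so $P_0=\id/\sqrt{d_{\rm S}}$; this is also what produces the $\frac{1}{d_{\rm S}}\id_{\rm S}$ in the completeness relation. Your own product formula then gives $\mathbb{P}_{(\mu,0)}\mathbb{P}_{(0,\mu)}=\frac{1}{d_{\rm S}}P_\mu\otimes P_\mu^*$. Hence $\bbra{\id_{\rm S}}\mathbb{G}=\frac{1}{d_{\rm S}}\bigl(d_{\rm S}-\frac{1}{d_{\rm S}}\bigr)\bbra{\id_{\rm S}}$, i.e.\ $\lambda=1-1/d_{\rm S}^2$.

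A direct check: take $d_{\rm S}=2$, $H=0$, trivial environment and $\rho=\ket{0}\!\bra{0}$. The species-summed diagonal of the $(1;1)$ block is $\sum_{\mu=1}^{3}\tr[P_\mu P_0\rho P_0P_\mu^\dagger]=3/4$ at every $t$, whereas the stated formula needs $3/2$.

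The paper reaches $d_{\rm S}-1/d_{\rm S}$ only because it writes the diagonal directly as $\Upsilon^{(n)}_{(\bm\mu,\bm\mu)}(\bm t)$. That uses $\mathbb{P}_{(\mu,\mu')}=\mathbb{P}_{(\mu,0)}\mathbb{P}_{(0,\mu')}$, also invoked for $\mathcal{R}\circ\mathcal{P}=\mathrm{id}$, which silently replaces $P_0$ by $\id$ on the idle leg. So the constant will not fall out of careful bookkeeping. You must either adopt that identification explicitly as a convention, with one-sided species carrying an unnormalised identity, which makes your prefactor $1$ and yields $\e^{T(d_{\rm S}-1/d_{\rm S})}$. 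Or you conclude $\e^{T(1-1/d_{\rm S}^2)}$. Either way, your argument establishes what the downstream Fatou bound in \cref{prop:part_1_theorem2_main} needs: the trace is a constant independent of $H$ and $\psi$.
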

\begin{proof}
    Since the Paulis are an orthonormal basis we have that
    \begin{equation}
        \sum_{\mu \in [d_{\rm S}^2 - 1]} P_\mu \otimes P_\mu^* = \kket{\id_{\rm S}}\!\bbra{\id_{\rm S}} - \frac1{d_{\rm S}} \id_{\rm S}.
    \end{equation}
    Hence, we get
    \begin{align}
        \sum_{\bm{\mu}\in[d_{\rm S}^2 - 1]^n} \Upsilon^{(n)}_{(\bm{\mu}, \bm{\mu})}(\bm{t}) &= \bbra{\id_{\rm SE}} \mathcal{T} \e^{\int_{t_n}^T \mathcal{H} {\rm d}t} \left(\kket{\id_{\rm S}}\!\bbra{\id_{\rm S}} - \frac1{d_{\rm S}} \id_{\rm S}\right) \mathcal{T} \e^{\int_{t_{n-1}}^{t_n} \mathcal{H} {\rm d}t} \dots \left(\kket{\id_{\rm S}}\!\bbra{\id_{\rm S}} - \frac1{d_{\rm S}} \id_{\rm S} \right)\e^{\int_0^{t_1}\mathcal{H}{\rm d}t} \kket{\psi} \nonumber\\
                                                                                            &= \sum_{k=0}^n \binom{n}{k} d_{\rm S}^k \left(-\frac1{d_{\rm S}}\right)^{n-k} \nonumber\\
                                                                                            &= \left(d_{\rm S} - \frac1{d_{\rm S}}\right)^n,
    \end{align}
    which follows because $\mathcal{T} \e^{\int_{t_i}^{t_j} \mathcal{H} {\rm d}t}$, $\frac1{d_{\rm S}} \kket{\id_{\rm S}}\!\bbra{\id_{\rm S}}$ and $\id_{\rm S}$ are CPTP maps.
    Finally, because the PCR has continuous wavefunctions, we can write the trace as
    \begin{align}
        \tr[\mathcal{P}(\ket{\Upsilon})] &= \sum_{n=0}^\infty \int D^n\bm{t} \sum_{\bm{\mu} \in [d_{\rm S}^2 - 1]^n} \Upsilon^{(n)}_{(\bm{\mu}, \bm{\mu})}(\bm{t}) \nonumber\\
                                         &= \sum_{n=0}^\infty \int D^n\bm{t} \left(d_{\rm S} - \frac1{d_{\rm S}}\right)^n \nonumber\\
                                         &= \sum_{n=0}^\infty \frac{T^n}{n!} \left(d_{\rm S} - \frac1{d_{\rm S}}\right)^n \nonumber\\
                                         &= \e^{T\left(d_{\rm S} - \frac1{d_{\rm S}}\right)}.
    \end{align}
\end{proof}
Since a cPT $\ket{\Upsilon}$ in the PCR has continuous wavefunctions, the kernel of $\mathcal{P}(\ket{\Upsilon})$ is also continuous (in fact it is a \emph{matrix product operator}~\cite{tjoaContinuousMatrixproductOperators2026}) and we can evaluate the wavefunctions of $\ket{\tilde{\Upsilon}} \coloneqq \mathcal{R}\circ\mathcal{P}(\ket{\Upsilon})$ using~\cref{eq:P tilde cont}, obtaining
\begin{equation}
    \tilde{\Upsilon}^{(n + n^\ell + n^r)}_{(\bm{\mu}, \bm{\mu'}), (\bm{\mu^\ell}, \bm{0}), (\bm{0}, \bm{\mu^r})}(\bm{t}, \bm{t^\ell}, \bm{t^r}) = \Upsilon^{(2n + n^\ell + n^r)}_{(\bm{\mu}, \bm{0}), (\bm{0}, \bm{\mu'}), (\bm{\mu^\ell}, \bm{0}), (\bm{0}, \bm{\mu^r})}(\bm{t}, \bm{t}, \bm{t^\ell}, \bm{t^r}).
\end{equation}
By evaluating the wavefunctions of a PCR and using $\mathbb{P}_{(\mu, \mu')} = \mathbb{P}_{(\mu, 0)} \mathbb{P}_{(0, \mu')}$ we notice that this in fact implies that $\ket{\tilde\Upsilon} = \ket{\Upsilon}$, showing that any PCR $\ket{\Upsilon}$ satisfies
\begin{equation}
\label{eq:conditions_Choi_dual_PPtilde}
    \mathcal{R}\circ\mathcal{P}(\ket{\Upsilon}) = \ket{\Upsilon}.
\end{equation}
Furthermore, $\mathcal{P}(\ket{\Upsilon})$ of a PCR is positive semidefinite, as follows from the fact that its integration kernel can be decomposed as
\begin{align}\label{eq:positive decomp PCR}
    \mathcal{P}(\ket{\Upsilon})^{(n;n')}_{\bm{\mu}; \bm{\mu'}}(\bm{t};\bm{t'}) &= \bbra{\id} \mathcal{T}e^{\int_{\tilde{t}_{n+n'}}^T\mathbb{H}{\rm d}t} \mathbb{P}_{\nu_{n+n'}} \dots \mathbb{P}_{\nu_1} \mathcal{T}e^{\int_0^{\tilde t_1} \mathbb{H}{\rm d}t} \kket{\psi}\\
                                                                               &= \sum_{k} \bra{k} \mathcal{T}e^{\int_{t_n}^T -i H {\rm d}t} P_{\mu_n} \dots P_{\mu_1} \mathcal{T}e^{\int_0^{t_1} -iH {\rm d}t} \ket{\psi} \bra{\psi} \mathcal{T}_- e^{\int_0^{t_1'} iH {\rm d}t} P_{\mu'_1} \dots \mathcal{T}_- P_{\mu'_{n'}} e^{\int_{t'_{n'}}^T iH {\rm d}t} \ket{k} \nonumber\\
                                                                               &= \sum_{k} f^{(k,n)}_{\bm{\mu}}(\bm{t}) \left(f^{(k,n')}_{\bm{\mu'}}(\bm{t'})\right)^*, \nonumber
\end{align}
where $\bm{\tilde{t}}$ is the ordered concatenation of $\bm{t}$ and $\bm{t'}$ and $\bm{\nu}$ is the concatenation of $(\bm{\mu}, \bm{0})$ and $(\bm{0}, \bm{\mu'})$ ordered according to the vector $\bm{\tilde{t}}$.

We will use these observations along with Theorem 1 to show that the property \cref{eq:conditions_Choi_dual_PPtilde} is satisfied by all cPTs. 
\begin{proposition}
    All continuous process tensors $\ket{\Upsilon}\in\Gamma$ satisfy $\mathcal{R}\circ\mathcal{P}(\ket{\Upsilon}) = \ket{\Upsilon}$.
    Furthermore, their Choi dual $\hat{\Upsilon} = \mathcal{P}(\ket{\Upsilon})$ is positive semidefinite and trace class.
    \label{prop:part_1_theorem2_main}
\end{proposition}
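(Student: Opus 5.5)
The plan is to transfer the three properties from PCRs to all cPTs via Theorem~\ref{thrm:pcf existence}. Fix $\ket{\Upsilon}\in\mathsf{cPT}$ and choose PCRs $\ket{\Upsilon_j}$ with $\|\ket{\Upsilon_j}-\ket{\Upsilon}\|_\Gamma\to 0$. The map $\mathcal{P}:\Gamma\to \mathrm{HS}(\gamma)$ only restricts wavefunctions to a subset of species and reinterprets them as kernels. Its Hilbert--Schmidt norm is therefore at most $\|\cdot\|_\Gamma$, up to the combinatorial factor from splitting $(\bm t,\bm t')$ into an ordered pair of ordered tuples. We will check that this factor is uniformly bounded by a constant using the time-ordering conventions. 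Hence $\mathcal P$ is bounded linear and $\hat\Upsilon_j\coloneqq\mathcal P(\ket{\Upsilon_j})\to\hat\Upsilon\coloneqq\mathcal P(\ket{\Upsilon})$ in HS norm.

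\textbf{Positivity.} By \cref{eq:positive decomp PCR}, each $\hat\Upsilon_j\succeq0$. Positive semidefiniteness is closed under HS (hence weak) convergence, since $\braket{x|\hat\Upsilon_j|x}\to\braket{x|\hat\Upsilon|x}$ for all $x\in\gamma$. So $\hat\Upsilon\succeq0$.

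\textbf{Trace class.} By \cref{lem:trace of PCR choi}, $\tr\hat\Upsilon_j=\e^{T(d_{\rm S}-1/d_{\rm S})}\eqqcolon c$ for all $j$. Fix an orthonormal basis $\{x_m\}$ of $\gamma$. Since $\hat\Upsilon_j\to\hat\Upsilon$ weakly, each diagonal entry $\braket{x_m|\hat\Upsilon_j|x_m}\ge0$ converges. Fatou's lemma for sums then gives $\tr\hat\Upsilon=\sum_m\braket{x_m|\hat\Upsilon|x_m}\le\liminf_j\tr\hat\Upsilon_j=c$. Together with positivity, this shows $\hat\Upsilon$ is trace class with $\tr\hat\Upsilon\le c$.

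\textbf{Reconstruction.} This is the main obstacle, because $\mathcal{R}$ is defined through the SVD and only on the domain $\mathcal S$, so continuity of $\mathcal R$ is not automatic. My plan is to show that $\mathcal R$ is bounded from trace class (with trace norm) into $\Gamma$ on the positive cone, and then use trace-norm convergence $\hat\Upsilon_j\to\hat\Upsilon$.

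For the bound, take positive $A=\sum_k\sigma_k u_k u_k^\dagger$ (here $u=v$). The wavefunction \eqref{eq:P tilde def} is $\sum_k\sigma_k u_k(\bm t,\bm t^\ell)\overline{u_k(\bm t,\bm t^r)}$. Cauchy--Schwarz in $k$, followed by integration over $\bm t^\ell,\bm t^r$ and then over the shared variables $\bm t$, should bound its $\Gamma$-norm by $\sum_k\sigma_k\|u_k\|^2$ times a combinatorial constant. The idea is that $\int|u(\bm t,\cdot)|^2|u(\bm t,\cdot)|^2$ integrated over $\bm t$ is controlled by a partial-trace-type quantity. If this needs an $L^\infty$ bound in $\bm t$ instead, I will fall back on the uniform a priori bound \eqref{eq:wavefunction_bound_cPT}. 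Non-positive differences are handled by splitting into positive parts, giving $\|\mathcal R(A)\|_\Gamma\le C\|A\|_1$.

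It remains to upgrade weak convergence to trace-norm convergence. This is standard for positive operators with $\tr\hat\Upsilon_j\to\tr\hat\Upsilon$ (a Grümm-type result). Convergence of the traces should follow because $\tr\hat\Upsilon=c$ exactly: the Dyson/Pauli-sum computation of \cref{lem:trace of PCR choi} only uses causality and normalisation, which hold for every smoothed marginal of a cPT. If that fails, I will argue instead through the wavefunction identity directly: $\mathcal{R}\circ\mathcal P$ and the identity agree on the dense set $\mathsf{PCR}$ (dense in $\mathsf{cPT}$ by Theorem~\ref{thrm:pcf existence}), and the BLT theorem extends the identity $\mathcal R\circ\mathcal P=\mathrm{id}$ once $\mathcal R\circ\mathcal P$ is shown bounded on the closure. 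This gives $\mathcal R\circ\mathcal P(\ket{\Upsilon})=\lim_j\ket{\Upsilon_j}=\ket{\Upsilon}$.
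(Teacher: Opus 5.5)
Your positivity and trace-class steps are correct and match the paper's argument. In fact $\mathcal P$ is a contraction $\Gamma\to\mathrm{HS}(\gamma)$, so HS convergence gives weak convergence and hence $\hat\Upsilon\succeq0$, and Fatou together with \cref{lem:trace of PCR choi} bounds $\tr\hat\Upsilon$. The reconstruction step has a genuine gap, and none of your three routes closes it.

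First, $\mathcal R$ is not bounded from the positive trace-class cone into $\Gamma$. For $A=\ket{u}\!\bra{u}$ in the one-particle sector of $\gamma$, $\mathcal R(A)$ has one-particle cross amplitude $|u(t)|^2$, which is not square-integrable when $u\in L^2\setminus L^4$ (e.g.\ $u(t)=t^{-1/4}$). After your Cauchy--Schwarz step you are left with $\int F(\bm t)^2\,{\rm d}\bm t$ for an $F$ that is only in $L^1$. The bound \eqref{eq:wavefunction_bound_cPT} controls cPT wavefunctions in $L^2$; it says nothing in $L^\infty$ about SVD vectors.

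Second, "$\tr\hat\Upsilon=c$ exactly" is circular. The Pauli-sum computation evaluates the cross-species amplitudes $\sum_{\bm\mu}\Upsilon_{(\bm\mu,\bm\mu)}$ of $\ket{\Upsilon}$. Equating their integral with $\tr\hat\Upsilon$, which is the integral of the diagonal of the Choi kernel, is exactly $\mathcal R\circ\mathcal P=\mathrm{id}$.

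Third, BLT only extends $(\mathcal R\circ\mathcal P)|_{\Span\mathsf{PCR}}=\mathrm{id}$ to the identity. It says nothing about the SVD-defined $\mathcal R$ at a limit point unless $\mathcal R$ is continuous there, which is the whole question. The paper's proof has the same hole. It asserts that $\mathcal R$ has unit $\mathrm{HS}\to\Gamma$ norm on $\mathcal P(\Span\mathsf{PCR})$, but from $\mathcal R\mathcal P\ket{\Upsilon'}=\ket{\Upsilon'}$ and $\|\mathcal P\|\le1$ one only gets the lower bound $\|\mathcal R(\hat\Upsilon')\|_\Gamma\ge\|\hat\Upsilon'\|_{\rm HS}$. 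It then identifies $\mathcal R$ with the BLT extension "by uniqueness", which holds only among continuous extensions.

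In fact the step cannot be repaired: the reconstruction identity fails on $\mathsf{cPT}$. Take a qubit with $\rho_0=\ket{0}\!\bra{0}$, and let $\ket{A}$ be the process in which the completely dephasing channel $\rho\mapsto\mathrm{diag}(\rho)$ acts between consecutive interventions.
\begin{itemize}
\item $\ket{A}$ is a cPT, since every time-resolved marginal is a Markovian discrete process tensor.
\item $\ket{A}$ is also the $\Gamma$-limit of the PCRs with $H_\gamma=\tfrac12 Z\otimes W_\gamma$, where $W_\gamma$ has equally spaced spectrum on $[-\gamma,\gamma]$ with spacing at most $\pi/T$ and the environment is a purified maximally mixed state. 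Their wavefunctions are $c_{\bm\nu}\,\phi_\gamma(L_{\bm\nu}(\bm t))$, with $|c_{\bm\nu}|\le 2^{-n}$, $L_{\bm\nu}$ the signed time spent in a coherence, and $\phi_\gamma(s)\to0$ for $0<|s|\le T$.
\item In $\ket{A}$, any non-cross string containing a one-sided $X$ or $Y$ creates a coherence that is dephased before the next event. So the $(1;1)$, $X;X$ block of $\hat A$ vanishes. This block is a compression of $\hat A\succeq0$, so every SVD vector has zero component there and $\mathcal R(\hat A)^{(1)}_{(X,X)}=0$. Yet $A^{(1)}_{(X,X)}(t)=\tfrac12$.
\item Along the approximating PCRs, $\tr\hat\Upsilon_\gamma$ is constant while $\tr\hat A$ is strictly smaller. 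So HS convergence genuinely does not upgrade to trace-norm convergence, and the Grümm route fails.
\end{itemize}
Now replace dephasing by the reset channel $\rho\mapsto\tr(\rho)\ket{0}\!\bra{0}$. This gives another cPT $\ket{B}$ with $\mathcal P(\ket{B})=\mathcal P(\ket{A})$: in both, a non-cross string survives only if all its operations are one-sided $Z$'s, and then it equals $2^{-n}$. However, $A^{(2)}_{(X,X),(Z,0)}=-\tfrac14\neq\tfrac14=B^{(2)}_{(X,X),(Z,0)}$. So $\mathcal P$ is not injective on $\mathsf{cPT}$, and no reconstruction map of any kind exists.

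What can actually be proved is positivity and the trace bound. The identity $\mathcal R\circ\mathcal P=\mathrm{id}$ holds on $\mathsf{PCR}$ but not on its $\Gamma$-closure.
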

\begin{proof}
    By Theorem 1 in the main text, the space $\text{span}(\mathsf{PCR})$ spanned by the PCR Fock vectors (all finite linear combinations) is dense in the closed space $\text{span}(\mathsf{cPT})$.
    Since $\mathcal P: \Gamma \to \text{HS}(\gamma)$ is a bounded linear operator in the induced norm $\|\cdot\|_\Gamma \to \|\cdot\|_{\rm HS}$, the set $\mathcal P(\text{span}(\mathsf{PCR}))$ is dense in $\mathcal P(\text{span}(\mathsf{cPT})\,)$ in the $\text{HS}$ norm.
    We have just seen that for any PCR Fock vector $\ket{\Upsilon'}$ it holds that $\mathcal{R}\circ\mathcal{P}(\ket{\Upsilon'}) = \ket{\Upsilon'}$.
    Therefore, by linearity, the operator $\mathcal{R}$ has unit operator norm $\|\cdot\|_{\rm HS} \to \|\cdot\|_\Gamma$ on $\mathcal P(\text{span}(\mathsf{PCR}))$.
    Consider the restriction $\widetilde{\mathcal{R}}$ of $\mathcal{R}$ to the domain  $\mathcal{P}(\Span(\mathsf{PCR}))$.
    By the bounded linear transformation theorem there is a unique linear extension $\hat{\mathcal{R}}$ of $\widetilde{\mathcal{R}}$ to $\text{span}(\mathsf{cPT})$, the action of which on $\hat \Upsilon \in \Span(\mathsf{cPT})$ is defined by any sequence $(\hat{\Upsilon}_k \in \mathcal{P}(\Span(\mathsf{PCR})))_k$ converging to $\hat{\Upsilon}$ as
    \begin{equation}
        \hat{\mathcal{R}}(\hat{\Upsilon})
        \coloneqq \lim_{k\rightarrow\infty} \widetilde{\mathcal{R}}(\hat{\Upsilon}_k).
    \end{equation}
    As $\mathcal{R}$ is a linear extension of $\widetilde{\mathcal{R}}$ to $\Span(\mathsf{cPT})$ and $\hat{\mathcal{R}}$ is unique, we have $\mathcal{R} = \hat{\mathcal{R}}$.
    By Theorem 1 in the main text, take for any cPT $\ket{\Upsilon} \in \Gamma$ a sequence $(\ket{\Upsilon_k})_k$ where each $\ket{\Upsilon_k}\in\mathsf{PCR}$, such that $\lim_{k\rightarrow \infty} \ket{\Upsilon_k} = \ket{\Upsilon}$.
    Hence
    \begin{equation}
        \mathcal{R}\circ\mathcal{P}(\ket{\Upsilon}) = \hat{\mathcal{R}}\circ\mathcal{P}(\ket{\Upsilon}) \coloneqq \lim_{k\rightarrow\infty} \widetilde{\mathcal{R}}\circ\mathcal{P}(\ket{\Upsilon_k}) = \lim_{k\rightarrow\infty} \ket{\Upsilon_k} = \ket{\Upsilon}.
    \end{equation}
    As we have already seen that $\mathcal{P}(\ket{\Upsilon_k})$ is positive semidefinite for all $k$, the Choi matrix $\hat{\Upsilon} = \mathcal{P}(\ket{\Upsilon})$ is too (as convergence in HS-norm implies convergence in the weak operator topology).
    Furthermore, for any orthonormal basis $\ket{e_i}\in\gamma$ we get
    \begin{align}
        \tr[\hat{\Upsilon}] &= \sum_i \braket{e_i| \mathcal{P}(\ket{\Upsilon}) e_i} \nonumber\\
                    &= \sum_i \lim_{k \rightarrow \infty} \braket{e_i| \mathcal{P}(\ket{\Upsilon_k})e_i} \nonumber\\
                                         &\le \liminf_{k\rightarrow\infty} \sum_i \braket{e_i| \mathcal{P}(\ket{\Upsilon_k})e_i} \nonumber\\
                                         &= \e^{T\left(d_{\rm S} - \frac1{d_{\rm S}}\right)},
    \end{align}
    where the third line follows by Fatou's lemma as $\mathcal{P}(\ket{\Upsilon_k})\ge0$ and the last line from~\cref{lem:trace of PCR choi}.
    Therefore, $\mathcal{P}(\ket{\Upsilon})$ is trace class. 
    \end{proof}
\cref{prop:part_1_theorem2_main} proves the first part of Theorem 2 in the main text. Before turning to its second part in the next subsection we briefly discuss the causality condition.

 A finite-dimensional operator on $2k$ copies of $\mathcal H_{\rm S}$, \textit{i.e.}, on $\bigotimes_{j =1}^k \mathcal H^{(\mathfrak{i}_j)}_{\rm S}\otimes \mathcal H^{(\mathfrak{o}_j)}_{\rm S} $ is said to be \textit{causal} if when tracing over any $\ell \leq k$ last output spaces $\bigotimes_{j = k+1-\ell}^k \mathcal H^{(\mathfrak{o}_j)}_{\rm S}$, the operator becomes the identity tensor factor on the corresponding input spaces $\bigotimes_{j = k+1-\ell}^k \mathcal H^{(\mathfrak{i}_j)}_{\rm S}$. 
 If for all $\ell$ the operators on the remaining part $j\le k-\ell$  are additionally positive-semidefinite operators then the operator is the \textit{Choi matrix} of a $k$-step process tensor \cite{milzQuantumStochasticProcesses2021}. 
 In Definition 2 in the main text a Fock space vector was defined to be causal if all of its smoothed marginals are causal. 
 In the same vein, with the map $\mathcal P$ at hand this causality condition can be directly transferred to the space of HS-operators on $\gamma$. 
 By construction, the discrete marginals of a cPT $\ket{\Upsilon_T}$ are the values of the integral kernel of the operator $\cPTchoi = \mathcal{P}(\cPT)$.
Let $\zeta_\mu(t) = \phi_\mu(t)$ for $\mu \in [d_{\rm S}^2-1]$ and $\zeta_0(t) = \id$ and let $\ket{\tilde{\Omega}}$ be the vacuum in $\gamma$.  
 Then,
 \begin{equation}
     \Upsilon_{\bm{\nu}}(\bm{t}) = \Upsilon_{(\bm{\mu}, \bm{\mu'})}(\bm{t}) = \braket{\tilde\Omega | \zeta_{\mu_1}(t_1) \dots \zeta_{\mu_n}(t_n) \cPTchoi \zeta_{\mu'_1}^\dag(t_1) \dots \zeta_{\mu'_n}^\dag(t_n) | \tilde{\Omega}} = \left(\cPTchoi\right)^{(n^\ell; n^r)}_{\bm{\mu^\ell};\bm{(\mu')^r}}(\bm{t^\ell};\bm{t^r}),
     \label{eq:causal_cond_choi_oprerator}
 \end{equation}
 where $\bm{\mu^\ell}$ ($(\bm{\mu'})^r$) are all the non-zero elements of $\bm{\mu}$ ($\bm{\mu'}$), $\bm{t^\ell}$ ($\bm{t^r}$) is $\bm{t}$ with times removed wherever $\bm{\mu}$ ($\bm{\mu'}$) is zero, and $n^\ell$ ($n^r$) are the number of elements in $\bm{\mu^\ell}$ ($(\bm{\mu'})^r$), respectively.
Since we know that for a cPT the Choi operator is trace class, we can write this in terms of the SVD as
 \begin{equation}
     \Upsilon_{\bm{\nu}}(\bm{t}) = \sum_{k=0}^\infty\sigma_k u_{k,\bm{\mu^\ell}}^{n^\ell}(\bm{t^\ell})\left(v_{k,(\bm{\mu'})^r}^{n^r}(\bm{t^r})\right)^*.
 \end{equation}
 The causality and positivity condition from \cref{def:cpt} can then be expressed as that for all non-negative $f \in L^2(\Delta_n)$ with $\|f\|_1 = 1$
 \begin{equation}
     \Upsilon^f_{\bm{\nu}} = \int_0^T {\rm D}^n\bm{t} \ f(\bm{t}) \Upsilon_{\bm{\nu}}(\bm{t}) = \int_0^T {\rm D}^n \bm{t} \ f(\bm{t}) \sum_{k=0}^\infty\sigma_k u_{k,\bm{\mu^\ell}}^{n^\ell}(\bm{t^\ell})\left(v_{k,(\bm{\mu'})^r}^{n^r}(\bm{t^r})\right)^*
 \end{equation}
 are elements of a completely positive and causal matrix, i.e. elements of a discrete process tensor.
This leads us to the following conditions on a general HS-operator $\hat\Phi\in \text{HS}(\gamma)$ with integral kernel
\begin{equation}
    \hat \Phi^{(n^\ell; n^r)}_{\bm{\mu^\ell};\bm{(\mu')^r}}(\bm{t^\ell};\bm{t^r}) = \sum_{k=0}^\infty\tilde \sigma_k \tilde u_{k,\bm{\mu^\ell}}^{n^\ell}(\bm{t^\ell})\left(\tilde v_{k,(\bm{\mu'})^r}^{n^r}(\bm{t^r})\right)^*.
\end{equation}
The operator $\hat \Phi$ is the Choi operator of a cPT if
\begin{itemize}
    \item it is trace-class,
    \item for all non-negative $f \in L^2(\Delta_n)$ with $\|f\|_1 = 1$ the matrix with elements
    \begin{equation}
        \Phi^{f}_{(\bm \mu, \bm \mu')} = \int_0^T {\rm D}^n \bm{t} \ f(\bm{t}) \sum_{k=0}^\infty\tilde \sigma_k \tilde u_{k,\bm{\mu^\ell}}^{n^\ell}(\bm{t^\ell})\left(\tilde v_{k,(\bm{\mu'})^r}^{n^r}(\bm{t^r})\right)^*,
    \end{equation}
    where $\bm \mu, \bm \mu' \in \{\{0\}\cup [d_{\rm S}^2]\}^n$ and $\bm{\mu^\ell},\bm{\mu^r},\bm{t^\ell},\bm{t^r}$ as in \cref{eq:causal_cond_choi_oprerator}, is a discrete $n$-step process tensor. 
    In particular, we demand that its matrix elements are finite.
\end{itemize}

\subsection{Choi duality inner product}

We have seen that the map $\mathcal{P}$ outputs HS-operators which are nicely behaved.
It does so by forgetting about some of the information contained in the input vector.
By contrast, the action of $\mathcal{V}$ should recover this information in order to obtain a sensible duality in terms of the trace-like pairing of integral kernels, as stated in the second part of Theorem 2 in the main text and in~\cref{eq:Choi_dual_inner_prod}.
This makes the map $\mathcal{V}$ much more poorly behaved than $\mathcal{P}$: it maps vectors $\ket{\Phi}\in\Gamma$ to sesquilinear forms on some $\mathcal Q_\Phi \times \mathcal Q_\Phi$, where $\mathcal Q_\Phi$ is a suitable subspace of $\gamma$.
The specifics of this subspace do not matter for us since we are here only interested in formalising \cref{eq:Choi_dual_inner_prod}.

The sesquilinear form $\mathcal{V}(\ket{\Phi})$ is defined by its integration kernel as
\begin{equation}
    \mathcal{V}(\ket{\Phi})^{(n; n')}_{\bm{\mu}; \bm{\mu'}}(\bm{t}; \bm{t'}) \coloneqq \!\!\!\sum_{\substack{\bm{s} \in \{0, 1\}^{n} \\ \|\bm{s}\| \le \min(n, n')}} \sum_{\substack{\bm{s'} \in \{0, 1\}^{n'} \\ \|\bm{s'}\| = \|\bm{s}\|}} \prod_{\substack{i, j: s_i = s_j'=1\\\|\bm{s}_{\le i}\|=\|\bm{s'}_{\le j}\|}} \delta(t_i - t'_j) \Phi^{(n + n' - \|\bm{s}\|)}_{(\bm{\mu}_{\bm{s}=1}, \bm{\mu'}_{\bm{s'}=1}), (\bm{\mu}_{\bm{s}=0}, \bm{0}), (\bm{0}, \bm{\mu'}_{\bm{s'}=0})}(\bm{t}_{\bm{s}=1}, \bm{t}_{\bm{s}=0}, \bm{t'}_{\bm{s'}=0}),
\end{equation}
where $\|\bm{s}_{\le i}\| \coloneqq \sum_{k=1}^i s_k$, $\|\bm{s}\| \coloneqq \sum_{k=1}^n s_k$, and $\bm{\mu}_{\bm{s}=0}$ and similar mean that we take the elements of $\bm{\mu}$ at positions where $\bm{s}$ has the zero value.
Note that on the RHS we have broken the time ordering constraint and extended the wavefunctions of $\ket{\Phi}$ to be symmetric under the simultaneous exchange of the times and the indices.
The distribution-valued integration kernel reflects the fact that the output of $\mathcal{V}$ is generally not a Hilbert--Schmidt operator.
In fact, for general $\ket{\Phi}$ it is not even bounded, as an operator on $\gamma$.

We now define the sesquilinear form between integration kernels in~\cref{eq:Choi_dual_inner_prod} by
\begin{equation}
    \langle A, B\rangle_\text{ker} \coloneqq \sum_{n,n'=0}^\infty \sum_{\bm{\mu}\in[d_{\rm S}^2 - 1]^n} \sum_{\bm{\mu'} \in[d_{\rm S}^2 - 1]^{n'}} \int D^n \bm{t} \int D^{n'} \bm{t'} \left(A_{\bm{\mu'};\bm{\mu}}^{(n';n)}(\bm{t'};\bm{t})\right)^* B_{\bm{\mu'};\bm{\mu}}^{(n';n)}(\bm{t'};\bm{t}).
\end{equation}
If $A$ and $B$ are Hilbert--Schmidt operators, $\langle \cdot, \cdot\rangle_\text{ker}$ reduces to the Hilbert--Schmidt inner product.
With this definition we use~\cref{prop:part_1_theorem2_main} to show~\cref{eq:Choi_dual_inner_prod} for any cPT $\ket{\Upsilon}$.
Let $\mathcal{P}(\ket{\Upsilon}) = \sum_k \sigma_k \ket{u^{(k)}}\!\bra{v^{(k)}}$ and consider for any $\ket{\mathcal{I}} \in \Gamma$
\begin{align}
    \begin{split}
        \langle\mathcal V(\ket{\mathcal I}), \;\mathcal P(\ket{\Upsilon})\rangle_{\text{ker}}   &=\sum_{n,n'=0}^\infty \sum_{\bm{\mu}\in[d_{\rm S}^2 - 1]^n}\sum_{\bm{\mu'}\in[d_{\rm S}^2 - 1]^{n'}} \int D^n \bm{t} \int D^{n'}\bm{t'} \sum_{\substack{\bm{s} \in \{0, 1\}^{n} \\ \|\bm{s}\| \le \min(n, n')}} \sum_{\substack{\bm{s'} \in \{0, 1\}^{n'} \\ \|\bm{s'}\| = \|\bm{s}\|}} \prod_{\substack{i, j: s_i = s_j'=1\\\|\bm{s}_{\le i}\|=\|\bm{s'}_{\le j}\|}} \delta(t_i - t'_j) \\
         &\hspace{1cm}\sum_{k} \sigma_k \left(v^{(k, n)}_{\bm{\mu}}(\bm{t})\right)^* \left(\mathcal{I}^{(n + n' - \|\bm{s}\|)}_{(\bm{\mu'}_{\bm{s'}=1}, \bm{\mu}_{\bm{s}=1}), (\bm{\mu'}_{\bm{s'}=0}, \bm{0}), (\bm{0}, \bm{\mu}_{\bm{s}=0})}(\bm{t}_{\bm{s}=1}, \bm{t'}_{\bm{s'}=0}, \bm{t}_{\bm{s}=0})\right)^* u^{(k, n')}_{\bm{\mu'}}(\bm{t'})
\\
    &= \sum_{n,n'=0}^\infty \sum_{\bm{\mu}\in[d_{\rm S}^2 - 1]^n}\sum_{\bm{\mu'}\in[d_{\rm S}^2 - 1]^{n'}} \sum_{\substack{\bm{s} \in \{0, 1\}^{n} \\ \|\bm{s}\| \le \min(n, n')}} \sum_{\substack{\bm{s'} \in \{0, 1\}^{n'} \\ \|\bm{s'}\| = \|\bm{s}\|}} \int D^n \bm{t} \int D^{n' - \|\bm{s}\|}\bm{t'} \sum_{k} \sigma_k \left(v^{(k, n)}_{\bm{\mu}}(\bm{t})\right)^*\\   
    &\hspace{1cm}\left(\mathcal{I}^{(n + n' - \|\bm{s}\|)}_{(\bm{\mu'}_{\bm{s'}=1}, \bm{\mu}_{\bm{s}=1}), (\bm{\mu'}_{\bm{s'}=0}, \bm{0}), (\bm{0}, \bm{\mu}_{\bm{s}=0})}(\bm{t}_{\bm{s}=1}, \bm{t'}_{\bm{s'}=0}, \bm{t}_{\bm{s}=0})\right)^* u^{(k, n')}_{\bm{\mu'}_{\bm{s'}=1}, \bm{\mu'}_{\bm{s'}=0}}(\bm{t}_{\bm{s}=1}, \bm{t'}).
    \end{split}
    \label{eq:first_step_pairing_duality}
\end{align}
Above we used Fubini-Tonelli to exchange the $\sum_k$ and the integrals where absolute integrability follows from $\mathcal{P}(\ket{\Upsilon}) = \sum_k \sigma_k \ket{u^{(k)}}\!\bra{v^{(k)}}$ being trace-class.
Then, reordering the individual integration measures followed by an application of \cref{eq:conditions_Choi_dual_PPtilde} to $\ket{\Upsilon}$ we get that
\begin{align}
    \begin{split}
        \langle\mathcal V(\ket{\mathcal I}), \;\mathcal P(\ket{\Upsilon})\rangle_{\text{ker}}   &= \sum_{n,n^\ell,n^r=0}^\infty  \sum_{\bm{\mu},\bm{\mu'}\in[d_{\rm S}^2 - 1]^n} \sum_{\bm{\mu^\ell}\in[d_{\rm S}^2 - 1]^{n^\ell}} \sum_{\bm{\mu^r}\in[d_{\rm S}^2 - 1]^{n^r}} \int D^n \bm{t} \int D^{n^\ell}\bm{t^\ell} \int D^{n^r} \bm{t^r} \\
    &\hspace{2cm}\sum_{k} \sigma_k  \left(v^{(k, n+n^r)}_{\bm{\mu'},\bm{\mu^r}}(\bm{t},\bm{t^r})\right)^*
   \left(\mathcal{I}^{(n + n^\ell + n^r)}_{(\bm{\mu}, \bm{\mu'}), (\bm{\mu^\ell}, \bm{0}), (\bm{0}, \bm{\mu^r})}(\bm{t}, \bm{t^\ell}, \bm{t^r})\right)^* u^{(k, n + n^\ell)}_{\bm{\mu}, \bm{\mu^\ell}}(\bm{t}, \bm{t^\ell})\\
    &= \sum_{n,n^\ell,n^r=0}^\infty \sum_{\bm{\mu},\bm{\mu'}\in[d_{\rm S}^2 - 1]^n} \sum_{\bm{\mu^\ell}\in[d_{\rm S}^2 - 1]^{n^\ell}} \sum_{\bm{\mu^r}\in[d_{\rm S}^2 - 1]^{n^r}} \int D^n \bm{t} \int D^{n^\ell}\bm{t^\ell} \int D^{n^r} \bm{t^r} \\
    &\hspace{4cm}\left(\mathcal{I}^{(n + n^\ell + n^r)}_{(\bm{\mu}, \bm{\mu'}), (\bm{\mu^\ell}, \bm{0}), (\bm{0}, \bm{\mu^r})}(\bm{t}, \bm{t^\ell}, \bm{t^r})\right)^* \Upsilon^{(n + n^\ell + n^r)}_{(\bm{\mu},\bm{\mu'}),(\bm{\mu^\ell},\bm{0}),(\bm{0},\bm{\mu^r})}(\bm{t},\bm{t}^\ell,\bm{t}^r)
  \\
    &= \braket{\mathcal I | \Upsilon}.
    \end{split}
\end{align}
Intuitively, one can understand the two maps $\mathcal V$ and $\mathcal P$ as inducing a Gelfand-triple structure $\mathcal Q\subset \text{HS}(\gamma) \subset \mathcal Q^\ast$.
The map $\mathcal P$ is lossy but regularising: it outputs only Hilbert--Schmidt operators and for cPTs even trace-class operators.
On the other side, $\mathcal V$ is faithful but rough: it takes values in the continuous maps $\mathcal Q \to \mathcal Q^\ast$ rather than in operators on $\gamma$.
In the pairing of $\mathcal V$ and $\mathcal P$ by \cref{eq:Choi_dual_inner_prod}, the roughness of $\mathcal V$ is absorbed by the regularity of $\mathcal P$ whenever $\mathcal{P}$ acts on a cPT such that~\cref{eq:conditions_Choi_dual_PPtilde} holds.
Whenever $\ket{\mathcal I}$ has finite particle number and bounded amplitudes, $ \mathcal V(\ket{\mathcal I})$ is a bounded operator. In this case, $\mathcal V(\ket{\mathcal I})^\dagger \mathcal P(\ket{\Upsilon})$ is of trace class and the pairing coincides with the Hilbert--Schmidt inner product: $ \bigl\langle\mathcal V(\ket{\mathcal I}), \;\mathcal P(\ket{\Upsilon})\bigr\rangle_{\text{ker}} \equiv \tr(\mathcal V(\ket{\mathcal I})^\dagger \mathcal P(\ket{\Upsilon}))$.

Concluding, even though there is no genuine isomorphism between vectors in $\Gamma$ and Hilbert--Schmidt operators on $\gamma$, the operators $\mathcal{P}$ and $\mathcal{V}$ define a sensible notion of a Choi duality in the continuum.
This duality is considerably more complex than its discrete counterpart, where especially $\mathcal{V}(\ket{\mathcal{I}})$ is a difficult object to work with.
However, the Choi operator $\mathcal P(\ket{\Upsilon})$ actually contains less information than $\ket{\Upsilon}$ and hence could prove more useful in some applications.
We would like to emphasise that we have already made use of it in calculating distance measures on cPTs (see the accompanying work \cite{wassnerOperationalContinuumLimit2026}).
